
\documentclass[letterpaper,twocolumn,10pt]{article}
\usepackage{usenix}

\usepackage{tikz}
\usepackage{amsmath}

\usepackage{cite}
\usepackage{amsmath,amssymb,amsfonts}
\usepackage{graphicx}
\usepackage{textcomp}
\usepackage{xcolor}
\def\BibTeX{{\rm B\kern-.05em{\sc i\kern-.025em b}\kern-.08em
    T\kern-.1667em\lower.7ex\hbox{E}\kern-.125emX}}

\usepackage[utf8]{inputenc}
\usepackage{amsmath, amsfonts, amsthm, bm, varwidth, xfrac, textgreek, mathtools, tabu, array, float, xcolor, mdframed}
\usepackage{caption}
\usepackage{algorithm}
\usepackage{booktabs}

\usepackage{flushend}
\usepackage{subfigure}
\usepackage{xcolor}
\usepackage{enumitem}
\usepackage{changes}

\definecolor{darkred}{rgb}{0.5, 0, 0}

\usepackage[noend]{algpseudocode}

\usepackage{tabu}
\usepackage[n,advantage,operators,sets,adversary,landau,probability,notions,logic,ff,mm,primitives,events,complexity,asymptotics,keys]{cryptocode}

\usepackage[functional,reproduced]{usenixbadges}

\newenvironment{MyEnumerate}[1]{\begin{enumerate}\setlength{\itemsep}{0.1cm}
\setlength{\parskip}{-0.05cm} \setlength{\leftmargin}{0pt} #1}{\end{enumerate}}
\newenvironment{MyItemize}[1]{\begin{itemize}\setlength{\itemsep}{0.1cm}
\setlength{\parskip}{-0.05cm} \setlength{\leftmargin}{0pt} #1}{\end{itemize}}

\usepackage{tikz}
\usepackage{multirow}
\usetikzlibrary{calc}

\usepackage{mathtools}

\newcommand\DoubleLineTaggedd[7][5pt]{%
  \path(#2)--(#3)coordinate[at start](h1)coordinate[at end](h2);
  \coordinate (h1l) at ($(h1)!#1!10:(h2)$);
  \coordinate (h1r) at ($(h1)!#1!-10:(h2)$);
  \coordinate (h2l) at ($(h2)!#1!-10:(h1)$);
  \coordinate (h2r) at ($(h2)!#1!10:(h1)$);
  \draw[#4] (h1l) -- (h2l);

  \coordinate (h1la) at ($(h1)!15pt!90:(h2)$);
  \coordinate (h1ra) at ($(h1)!15pt!-90:(h2)$);
  \coordinate (h2la) at ($(h2)!15pt!-90:(h1)$);
  \coordinate (h2ra) at ($(h2)!15pt!90:(h1)$);
  \node at ($(h1la)!0.9!(h2la)$) [] {{#6}};
  \node at ($(h1ra)!0.9!(h2ra)$) [] {{#7}};
}

\newcommand{\Adv}{\mathcal{A}}
\newcommand{\Sim}{\mathcal{S}}
\newcommand{\cF}{\mathcal{F}}
\newcommand{\cG}{\mathcal{G}}
\newcommand{\real}{\small \mbox{\sf REAL}}
\newcommand{\ideal}{\small \mbox{\sf IDEAL}}
\newcommand{\bit}{{\{0,1\}}}

\theoremstyle{default}
\newtheorem{theorem}{Theorem}[section]

\newtheorem{corollary}[theorem]{Corollary}

\newtheoremstyle{break}
  {\topsep}{\topsep}%
  {}{}%
  {\bfseries}{.}%
  {\newline}{}%
\theoremstyle{break}
\algrenewcommand\textproc{}
\newtheorem{functionality}[theorem]{FUNCTIONALITY}
\newtheorem{protocol}[theorem]{PROTOCOL}

\renewcommand{\paragraph}[1]{{\medskip \noindent{\bf #1}}}

\newcommand{\picomp}{{\Pi_{\sf comp}}}
\newcommand{\fcomp}{{{\cal F}_{\sf comp}}}

\newcommand{\fbtc}{{\cal F}_{\sf B2C}}
\newcommand{\fctc}{{\cal F}_{\sf C2C}}
\newcommand{\fmult}{{{\cal F}_{\sf MC}}}
\newcommand{\pimulti}{{\Pi_{\sf MC}}}
\newcommand{\fminbtc}{{{\cal F}_{\min}^{\sf B2C}}}
\newcommand{\piminbtc}{{\Pi}_{\min}^{\sf B2C}}
\newcommand{\fmincom}{{\cal F}_{\sf min}^{\sf com}}
\newcommand{\pimincom}{{\Pi}_{\sf min}^{\sf com}}

\begin{document}


\makeatletter
\let\@fnsymbol\@arabic
\makeatother
\title{Prime Match: A Privacy-Preserving Inventory Matching System\footnotemark[1]}


\author{Antigoni Polychroniadou
\and
Gilad Asharov\footnotemark[2]\and
\and
Benjamin Diamond\footnotemark[3]
\and
Tucker Balch
\and
Hans Buehler\footnotemark[3]
\and
Richard Hua
\and
Suwen Gu\\\\ J.P.~Morgan
\and
Greg Gimler\footnotemark[3]
\and
Manuela Veloso
}

\maketitle

\footnotetext[1]{This is the full version of~\cite{PolychroniadouA23}.}

\footnotetext[2]{Department of Computer Science, Bar-Ilan University. Work initiated while at J.P.~Morgan AI Research. Editorial work while at Bar-Ilan University, supported by J.P.~Morgan Faculty Research Award.}

\footnotetext[3]{Authors contributed while  employed at J.P.~Morgan. Buehler moved to XTX Markets, Diamond to Coinbase, and Gimler to Meta. }

\newcommand{\share}[1]{\langle#1\rangle^q}
\begin{abstract}
Inventory matching is a standard mechanism/auction for trading financial stocks by which buyers and sellers can be paired. In the financial world, banks often undertake the task of finding such matches between their clients. The related stocks can be traded without adversely impacting the market price for either client. If matches between clients are found, the bank can offer the trade at advantageous rates. If no match is found, the parties have to buy or sell the stock in the public market, which introduces additional costs. 

A problem with the process as it is presently conducted is that the involved parties must share their order to buy or sell a particular stock, along with the intended quantity (number of shares), to the bank. 
Clients worry that if this information were to “leak” somehow, then other market participants would become aware of their intentions and thus cause the price to move adversely against them before their transaction finalizes.

We provide a solution, Prime Match, that enables clients to match their orders efficiently with reduced market impact while maintaining privacy. In the case where there are no matches, no information is revealed. Our main cryptographic innovation is a two-round secure \emph{linear} comparison protocol for computing the minimum between two quantities without preprocessing and with malicious security, which can be of independent interest. We report benchmarks of our Prime Match system, which runs in production and is adopted by a large bank in the US -- J.P. Morgan. 
The system is designed utilizing a star topology network, which provides clients with a centralized node (the bank) as an alternative to the idealized assumption of point-to-point connections, which would be impractical and undesired for the clients to implement in reality.

Prime Match is the first secure multiparty computation solution running live in the traditional financial world.
\end{abstract}

\section{Introduction}

An axe is an interest in a particular stock that an investment firm wishes to buy or sell. Banks and brokerages provide their clients with a matching service, referred to as ``axe matching". 
%
%
%
%
When a bank finds two clients interested in the same stock but with opposite directions (one is interested in buying and the other is interested in selling), the bank can offer these two clients the opportunity to trade internally without impacting the market price. Both clients, and the bank, benefit from this internalization. On the other hand, if the bank cannot find two matching clients, the bank has to perform the trade in the public market, which introduces some additional costs and might impact the price. Banks, therefore, put efforts into locating internalized matches.  
 


One such effort is the following service. To incentivize clients to trade, 
banks publish a list of stocks that they are interested in trading, known as ``axe list". 
The axe list that the bank publishes contains, among other things, aggregated information on previous transactions that were made by clients and facilitated by the bank. For instance, to facilitate clients' trades, the bank sometimes buys stocks that some clients wish to sell. The bank then looks to sell those stocks to other clients at advantageous rates before selling those stocks in the public market. Those stocks will appear in the bank's axe list. 

The axe list  consists of tuples $({\sf op}, {\sf symb}, {\sf axe})$ where 
${\sf op} \in \{{\sf buy},{\sf sell}\}$,
$\sf symb$ is the symbol of the security to buy or sell,  and $\sf axe$ is the number of the shares (quantity) of the security to buy or sell (we sometimes use the terminology of ``long" for buy and ``short" for sell). This axe list provides clients the ability to locate available (synthetic) trades at reduced financing rates.

Unfortunately, this method is unsatisfactory. Although the information in the axe list of the bank relates to transactions that were already executed, there is a correlation between previous transactions that a client performed and future transactions that it might wish to trade. Thus, clients feel uncomfortable with seeing their recent (potentially large) trade history (although anonymized and aggregated) in the axe list that the bank publishes, and sometimes ask the bank to remove their previous trades from the axe list. Clients, therefore, face the following dilemma: keeping their axes published reveal information about their future potential trades or investment strategy, while continuously asking to remove trades from the axe list limits the banks' ability to internalize trades and offer advantageous rates, to begin with. 

The bank currently uses some ad-hoc methods to mitigate the leakage. For instance, it might aggregate several stocks together into ``buckets" (e.g., reveal only the range of available stocks to trade in some sector), or trim the volumes of other stocks. 
This does not guarantee privacy, and also makes it harder to locate potential matches. 

%
%
%


\subsection{Our Work} 

We provide a novel method for addressing the inventory matching problem (a simple double auction, which is periodic, with a single fixed price per symbol). 
Our main contribution is a suite of cryptographic protocols for several variants of the inventory matching problem. The system we report, called Prime Match, was implemented and runs in production in J.P. Morgan since September 2021. 
Prime Match has the potential to transform common procedures in the financial world. 
We design the following systems:
\begin{itemize}[nosep,leftmargin=10pt]
\item {\it Bank-to-client inventory matching}: Prime Match supports a secure two-party (bank-to-client) inventory matching. The client can privately find stocks to trade in the bank's full axe list without the bank revealing its axe list, and without the client revealing the stocks and quantities it wishes to trade. The protocol is secure against a semi-honest bank and malicious client and is of two rounds of communication (three rounds if both parties learn the output).  

\item {\it Client-to-client inventory matching}: We extend Prime Match to support a secure (client-to-client) inventory matching. This is a three-party protocol where the bank is an intermediate party that mainly delivers the messages between two clients and facilitates the trade if there is a match. This enables two clients to explore whether they can have potential matches against each other and not just against the axe list of the bank. This further increases potential matches. The protocol is secure in the presence of one malicious corruption and is of three rounds of interaction. 

\item \emph{Multiparty inventory matching:} We also extend the client-to-client inventory matching to multiple clients coming at once and looking to be matched.




\end{itemize}


\noindent 
We expand on each one of those scenarios below.

\paragraph{\bf Bank-to-client inventory matching:}
We replace the current procedure in which the bank sends an axe list to a client, and the client replies with which stocks to trade based on the axe list, with a novel bank-to-client inventory matching. Prime Match allows the bank to locate potential matches without revealing its axe list, and without the client revealing its interests. Moreover, as the bank can freely use accurate axe information (as the axe list is hidden), clients have no longer an interest to remove themselves from the axe list. All parties enjoy better internalization and advantageous rates. 


Importantly, the bank does not learn any information about what the client is interested in on any stock that is not matched, and likewise, the client does not learn any information on what is available unless she/he is interested in that as well. Only after matches are found, the bank and the client are notified and the joint interest is revealed. At a high level, for two orders $({\sf buy},{\tt X}, {\sf axe_1})$ and $({\sf sell},{\tt X}, {\sf axe_2})$ on the same symbol ${\tt X}$, we provide a secure two-party protocol that computes as the matching quantity the min quantity between $\sf axe_1$ and $\sf axe_2$.\footnote{In our actual protocol, each party also provides a range of quantities it wishes to trade, i.e., a minimum amount and a maximum amount. If there is no match that satisfies at least its minimum quantity, then there is no trade. To keep the introduction simple, we omit this additional complexity for now.}

\begin{figure}[h]
\begin{center}

\resizebox{\columnwidth}{!}{%
\begin{tikzpicture}

\tikzstyle{bank}=[rectangle, draw, inner sep = 9pt, text badly centered, rounded corners = 3, fill=black!10]
\tikzstyle{agent}=[circle,draw,minimum width=0.9cm]
[auto,
	every node/.style={node distance=4cm},
	.style={rectangle, draw, dotted, inner sep= 2pt, text width=2.2cm, text height=5.2cm},
	cn/.style={rectangle, draw, inner sep = 9pt, text badly centered, rounded corners = 3, fill=black!10},
	as/.style={circle, draw, inner sep = 2pt, text badly centered},
	dcbox/.style={draw, dotted, thick, inner sep = 6pt},
	>=stealth,
	]
\node (O) at (0,0) [bank] {Secure Matching Engine};
\node (k) at (0,0.8) {$\mathsf{order_B}$};
\foreach \x in {1,...,1} {
	\node (C\x) at (180*\x:6cm) [agent] {$C_\x$};
}
\foreach \x in {1,...,1} {
	\DoubleLineTaggedd{O}{C\x}{<-}{->}{$\mathsf{order_C}$}{$ $};
}

\end{tikzpicture}
}
\caption{\small Client-to-bank topology. Client $C$ sends an encrypted order $\mathsf{order_C}=({\sf buy},{\tt X}, {\sf axe_1})$ to the Bank (secure matching engine) which holds $\mathsf{order_B}=({\sf sell},{\tt X}, {\sf axe_2})$. The engine computes the minimum between $\sf axe_1$ and $\sf axe_2$.}
\label{fig:enc1}
\end{center}
\end{figure}
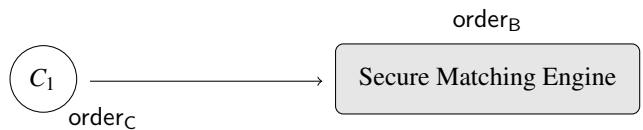

\paragraph{\bf Client-to-client inventory matching:}
The above approach only enables matching between the bank's inventory to each client separately but does not allow a direct matching among different clients. For illustration, consider the following scenario: Client A is interested in buying 100 shares of some security ${\tt X}$, while client B is interested in selling 200 shares of the same security ${\tt X}$. On the other hand, the bank does not provide ${\tt X}$ in its inventory axe list. The bank either distributes in a non-private way its axe list to clients A and B (as it is being conducted prior to our work) or engages twice in a bank-to-client inventory matching described above, the first time against client A and the second time against client B. The two clients do not find ${\tt X}$ in the list, and both clients would have to trade on the public market at higher costs. 

Prime Match allows the clients and the bank not to miss such opportunities. We provide a mechanism that acts as a transparent matching engine. Each client provides as input to the computation his/her encrypted axes, and the clients then interact and learn whether their axes match or not, see Figure \ref{fig:enc2}. 
For this solution, we provide a three-party secure minimum protocol $\Pi_{\mathsf{min}}$ among two clients and the bank as the intermediary party to facilitate and execute the trade if there is a match.

\begin{figure}[h]
\begin{center}
\resizebox{\columnwidth}{!}{%

\begin{tikzpicture}

\tikzstyle{bank}=[rectangle, draw, inner sep = 9pt, text badly centered, rounded corners = 3, fill=black!10]
\tikzstyle{agent}=[circle,draw,minimum width=0.9cm]
[auto,
	every node/.style={node distance=4cm},
	.style={rectangle, draw, dotted, inner sep= 2pt, text width=2.2cm, text height=5.2cm},
	cn/.style={rectangle, draw, inner sep = 9pt, text badly centered, rounded corners = 3, fill=black!10},
	as/.style={circle, draw, inner sep = 2pt, text badly centered},
	dcbox/.style={draw, dotted, thick, inner sep = 6pt},
	>=stealth,
	]
\node (O) at (0,0) [bank] {Secure Matching Engine};
\foreach \x in {1,...,2} {
	\node (C\x) at (180*\x:6cm) [agent] {$C_\x$};
}
\foreach \x in {1,...,2} {
	\DoubleLineTaggedd{O}{C\x}{<-}{->}{$\mathsf{order}_\x$}{$ $};
}

\end{tikzpicture}
}
\caption{\small Client-to-client topology. Client $C_1$ and $C_2$ send encrypted orders $\mathsf{order_1}=({\sf buy},{\tt X}, {\sf axe_1})$ and $\mathsf{order_2}=({\sf sell},{\tt X}, {\sf axe_2})$, respectively, to the Bank which computes the minimum of $\sf axe_1$ and $\sf axe_2$.}
\label{fig:enc2}
\end{center}
\end{figure}
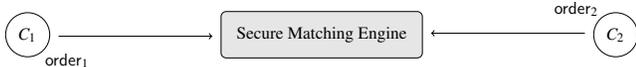

\paragraph{Multi-party protocol.}
A potentially powerful mechanism would be to support multiple clients coming at the same time, where all clients talk to each other through the bank who facilitates the trades when there are matches. This might increase the potential number of matches for each client. We implement such a mechanism based on our client-to-client matching protocol, where we invoke it ${n \choose 2}$ times, for each possible pair of clients, when $n$ is the total number of participating clients. Since the service of axe matching is relatively exclusive, i.e., it is offered only to selected clients, $n$ is relatively small (around $10$ per day), and thus this approach suffices for the current needs.  


At this point, we only implement this relatively degenerated form of multiparty matching. 
We provide security for a semi-honest bank and malicious clients. 
In the multiparty setting, there are further challenges that have to be explored, such as what information is leaked by the functionality due to partial matches (i.e., client A can fulfill its order, say, selling 1000 shares by matching with client B and C, each wishing to buy 500 shares). 
Moreover, to achieve malicious security, the protocol also has to guarantee that the bank does not discriminate against clients, e.g., when two clients are both interested in buying some security, then it treats them fairly and does not prefer to match the ``big client" over the ``small client." In fact, it is impossible to support security against a malicious bank in this case already because of the star network -- the clients communicate through the bank with no authentication (see~\cite{BarakCLPR11}). Therefore, achieving malicious security would require some different setups and further techniques. 
We leave this for future research. 

From a business perspective, the clients generally do trust the bank, and the bank is also highly regulated and will not risk its reputation by attempting to cheat. Therefore, semi-honest bank generally suffices. 

\paragraph{Secure minimum protocol.}
At the heart of our Prime Match engine is a  secure protocol for comparing two input values ${\sf axe_1}$ and ${\sf axe_2}$, each in $\{0, \ldots, 2^n - 1\} \subset \mathbb{F}_q$. The protocol, given the bit-decompositions of ${\sf axe_1}$ and ${\sf axe_2}$, computes the minimum between the two. We have a two-party variant (bank to client) and a three-party variant when only two parties have inputs (client-to-client) and the third party (the server) helps in the computation.
For the latter, an interesting property of our protocol is that the two clients only perform linear operations, and therefore can operate non-interactively on encrypted inputs (or secret-shared, or homomorphic commitments, etc.). The server facilitates the computation. For $\ell$-bit inputs, our protocol runs in three rounds of interaction and with $O(\ell^2)$ communication where in the first round clients provide their input, and in the last (third) round the output is revealed. The protocls also offers malicious security.


%

%

\paragraph{Implementation and evaluation.}
All three scenarios were implemented,  and we report running times in Section~\ref{sec:prime-match-perf}. 
On the bottom level, both bank-to-client and client-to-client protocols can process roughly 10 symbols per second with security against malicious clients under conventional machines with commodity hardware. Our system is running live, in production by J.P. Morgan. To the best of our knowledge, this is the first MPC solution running live in the financial world. Commercially, the main advantage of the system is the increased opportunities for clients to find matches. 

As clients do not wish to spend resources to use such a service (installation of packages, maintenance cost, etc.), and cannot commit to providing tech resources before testing the product, Prime Match is implemented as a browser service. This raises several challenges in the implementation, see Section~\ref{sec:prime-match-perf}. Moreover, in the client-to-client matching a star topology network is required where clients communicate only with the bank. Clients do not wish to establish communication with other clients and reveal their identities to other clients. 

\paragraph{Our contributions.}
To conclude, our contributions are:
\begin{MyItemize}
\item We identify a real-world problem in which cryptography significantly simplifies and improves the current inventory matching procedure. 
\item We provide two new protocols: bank-to-client inventory matching and client-to-client inventory matching. Those completely replace the current method which leaks information and misses potential matches. Our protocols are novel and are specifically tailored to the problem at hand. We do not just use generic, off-the-shelf, MPC protocols (see Section~\ref{sec:into:not-generic-mpc} for a discussion). 
\item At the heart of our matching engine is a novel two-round comparison protocol that minimizes interaction and requires only linear operations. 
\item The protocols are implemented and run live, in production, by a major bank in the US -- J.P. Morgan.
\end{MyItemize}


\subsection{Related Work}
\paragraph{Prior works on volume (quantity) matching.} We now compare the prior privacy-preserving volume matching architectures \cite{Cartlidge:2019aa,FC,Asharov:2020aa} to Prime Match. The MPC-based volume matching constructions of \cite{Cartlidge:2019aa,FC} derive their security by separating the system's \textit{service operator/provider} into several (e.g., 3) distinct servers, whose collusion would void the system's security guarantees. The clients submit their \emph{encrypted} orders to the servers by secret sharing,
such that no single server can recover the encrypted orders. The clients have no control over these servers and no clear way to prevent them from colluding. 

Allowing clients to \textit{themselves} serve as contributing operators of the system would present its own challenges. For instance, it would impose a disproportionate computational burden on those clients who choose to serve as operators. Moreover, it is unclear how to incentivize clients to run heavy computations, and to play the role of the operators.\footnote{Part of the success of the Prime Match system is related to the fact that clients are offered a web service to participate in the system which requires minimal tech support by the clients.} 

The fully homomorphic approach of ~\cite{secretmatch} imposes a computational burden on a single server in a star topology network in which clients communicate with the server. Moreover, the concrete efficiency of the proposed GPU-FHE scheme is slow. Furthermore, the scheme of~\cite{secretmatch} does not offer malicious security. FHE-based solutions for malicious security are much less efficient than the ones based on MPC. 

\paragraph{Prior works on privacy-preserving dark pools.}
A recent line of research has attempted to protect the information contained in dark pools~\cite{Asharov:2020aa,cryptoeprint:2020:662} run by an operator. The systems described in these works allow users to submit orders in an ``encrypted'' form; the markets' operator then compares orders ``through the encryptions'', unveiling them only if matches occur. The functionality of privacy-preserving dark pools is a continuous double auction in which apart from the \textit{direction} (buy or sell) and a desired trading \textit{volume}, a \textit{price} (indicating the ``worst'' price at which the participant would accept an exchange) is submitted. The operator ``matches'' compatible orders, which, by definition, have opposite directions, and for which the price of the buy order (the ``bid'') is at least the price of the sell order (the ``ask''). \cite{Cartlidge:2019aa,cryptoeprint:2020:662,MazloomDPB23} are based on MPC with multiple operators and the work of \cite{Asharov:2020aa} is based on FHE.

Dark pools are different than our setting, as matches are also conditioned on an agreement on a price (requiring many more comparisons) leading  to more complex functionality. In comparison, inventory matching is a simple double auction, which is periodic, with a single fixed price per symbol. 
Moreover, dark pools support high-frequency trading, which means that they have to process orders very fast. All prior works’ performance on dark pools (including multi-server dark pools) does not suffice for high-frequency trading. In comparison, axe-list matching is a much slower process; with the current, insecure procedure of axe-matching, a few minutes might elapse between when the bank sends its axe list, and the time the client submits its orders. Since ensuring privacy introduces some overhead, clients might not necessarily prefer a slower privacy-preserving dark pool over a fast ordinary dark pool. Furthermore, 
secure comparison is a necessary building block for dark pools. Any of the comparison protocols from prior works,
\cite{DBLP:journals/ieicet/NishideO07,Catrina2010ImprovedPF,Wagh:2019aa,DBLP:conf/icalp/LipmaaT13,DBLP:conf/tcc/DamgardFKNT06,DBLP:conf/fc/MakriRVW21}, including ours, can be used for dark pools, but all of them have some overhead. Unfortunately, neither of these works can lead to a fast dark pool (in a star topology network) which is close to the running times of a dark pool operating on plaintexts. Achieving fast enough comparison that is suitable for high-frequency trading is an interesting open problem. 




The work of Massacci et al.~\cite{MassacciNN0W18} considers a distributed Market Exchange for futures assets which has functionality with multiple steps where one of the steps includes the dark pool functionality. Their experiments show that their system can handle up to 10 traders. Moreover, orders are not concealed: in particular, an aggregated list of all waiting buy and sell orders is revealed which is not the case in solution and the dark pool solutions.
%
%
Note that there are works that propose dark pool constructions on the blockchain~\cite{ngo2021practical,galal2021publicly,bag2019seal} which is not the focus of our work. Moreover, these solutions have different guarantees and security goals. None of the above solutions is in production.

\paragraph{Prior works on secure 3-party Less Than comparison.} There are several works in the literature that propose secure comparison protocols of two values in the information-theoretic setting~\cite{DBLP:journals/ieicet/NishideO07,DBLP:conf/tcc/DamgardFKNT06, Catrina2010ImprovedPF,  DBLP:conf/icalp/LipmaaT13,
      Wagh:2019aa,
      DBLP:conf/fc/MakriRVW21,
      cryptoeprint:2022/866}. See Table~\ref{fig:comparison} for a detailed comparison of these works compared to ours. Our protocol does not require preprocessing and runs in 2 rounds of interaction. Our cost incurs an $\ell^2$ overhead since we secret share $\ell$ bit numbers in a field of size $\ell$. Similar overhead also appears in prior works.  The security parameter $\lambda$ overhead is required due to the use of coin flipping and the additional use of commitments in the malicious protocol. The main reason for the higher overhead of prior secret sharing-based protocols in Table~\ref{fig:comparison} is that they require interaction per secure multiplication leading to an increased round complexity ($\approx \log \ell$). Our protocol does not require any secure multiplications, which is a significant benefit in upgrading our passive protocol to one with malicious security.

The works of \cite{{Fischlin01,LinT05,DamgardGK08,DamgardGK09}}, based on multiplicative/additive homomorphic encryption, provide 2 (or constant) round solutions but they only offer passive security.   The computational cost is capped at $O(\lambda\cdot\ell)$ modular multiplications. Moreover, some works require a trusted setup assumption to generate the public parameters. For instance the modulus generation of the homomorphic Paillier encryption-based solutions.

The most recent work of~\cite{cryptoeprint:2022/866}, based on functional secret sharing in the preprocessing model, is a three-round solution offering only passive security with the cost of $O(\ell)$ PRG calls in its online phase.

\begin{table*}[ht]
  \begin{center}
  \begin{footnotesize}
    \begin{tabular}[t]{c|c|c|c|c|c|c|c}
      \toprule
       
     & Offline & Online & Offline & Online  & & \\

   Protocol & Communication  & Communication          & Computation &Computation & Rounds & Security & Corruption\\
      \midrule
 
      \cite{DBLP:journals/ieicet/NishideO07} & - & $O\big((\ell\log \ell)\cdot(\ell+s) \big)$ & - & $O\big((\ell\log \ell)\cdot(\ell+s) \big)$& $31$ & passive &HM\\
      \cite{Catrina2010ImprovedPF} & - & $O\big(\ell\cdot(\ell+s) \big)$ & - & $O\big(\ell\cdot(\ell+s) \big)$ & $O(\log \ell)$& passive & HM\\

\cite{Wagh:2019aa} & - & $O(\ell^2+\log \ell)$ & - & $O(\ell^2+\log \ell)$ & $O(\log \ell)$& passive & DM\\

    This work & - & $O(\ell^2+\log \ell)$ & - & $O(\ell^2+\log \ell)$& 2& passive &DM\\
\hline
      \cite{DBLP:conf/icalp/LipmaaT13} & $O(\ell^2)$ & $O(\log \ell\cdot(\ell+s))$ & $O(\ell^2)$ & $O(\log \ell \cdot(\ell+s))$& $O(\log \ell)$& active  &HM\\
            
            \cite{DBLP:conf/tcc/DamgardFKNT06} & - & $O\big((\ell\log \ell)\cdot(\ell+s) \big)$ & - & $O\big((\ell\log \ell)\cdot(\ell+s) \big)$& $44$ & active &HM\\

      \cite{DBLP:conf/fc/MakriRVW21} & $O(\ell)$ & $O\big((\ell\log \ell)\cdot(\ell+s) \big)$ & $O(\ell)$ & $O(\ell\cdot(\ell+s))$& $O(\log \ell)$& active &DM\\

         This work & - & $O(\ell\cdot(\ell+\lambda))$ & - & $O(\ell\cdot(\ell+\lambda))$&2 & active &HM\\


      \bottomrule
    \end{tabular}
  \end{footnotesize}
  \end{center}
  \caption{
Cost of passive and active comparison protocols in terms of offline, and online communication and computation complexity; in terms of rounds; in terms of security; and in terms of corruptions supported. HM stands for honest majority, while DM stands for dishonest majority.  $\ell$ denotes to the bit length of the input, $s$ is the statistical security parameter and $\lambda$ is the computational security parameter. The work of\cite{DBLP:conf/fc/MakriRVW21} achieves statistical security over arithmetic fields but it achieves perfect security over the arithmetic rings.}
\label{fig:comparison}

\end{table*}%

\subsection{Why Specifically-Tailored Protocols?}
\label{sec:into:not-generic-mpc}
A natural question is why we design a specifically tailored protocol for the system, instead of just using any generic, off-the-shelf secure computation protocols. Those solutions are based on securely emulating arithmetic or Boolean circuits, and require translating the problem at hand to such a circuit. Specifically, for our client-to-client matching algorithm, which is a three-party secure protocol with one corruption, it looks promising to use some generic MPC protocols that are based on replicated secret sharing, such as~\cite{ArakiFLNO16,ChidaGHIKLN18} or garbled circuits~\cite{WangRK17,KatzRR018}. 

There are two main requirements from the system (from a business perspective) that leads us to design a specifically-tailored protocol and not a generic MPC: (1) The need for a constant number of rounds; (2) Working with committed inputs. Furthermore, no offline preprocessing is possible since clients wish to participate only during the live matching phase. We provide a comparison with generic MPC techniques in  Appendix~\ref{sec:generic-mpc}. 

\subsection{Overview of our Techniques}

We focus in this overview on the task of client-to-client matching (see Figure \ref{fig:enc2}): A three-party computation between two clients that communicate through the bank. We present our solution while hiding only the clients' quantities $\sf axe$. However, our detailed protocol additionally hides both the directions and the symbols. We present our protocol in the semi-honest setting and then explain how to achieve malicious corruption.

\medskip
\noindent
{\bf Semi-honest clients and server:} The client provides secret shares (and commitments) for all possible symbols and for the two possible sides. If a client is not interested in buying (resp.~selling) a particular stock, it provides $0$ as its input for that symbol and size. It is assumed that the total number of symbols is around $1000-5000$, and of course, the number of sides is $2$. Thus, each party has to provide roughly $2000-10000$ values. To see if there is a match between clients A and B on a particular stock, we securely compute the minimum between the values the parties provided with opposite sides (i.e., A sells and B buys, or B sells and A buys).

Each one of the clients first secret shares its secret value ${\sf axe}$ using an additive secret sharing scheme. The two clients then exchange shares\footnote{\label{ft:star}The communication model does not allow the two clients to talk directly, and each client talks only to the server. However, using encryption and authentication schemes, the two clients can establish a secure channel while the server just delivers messages for them.}. Then, they decide on the matching quantity by computing two bits indicating whether the two quantities are equal or which one of the two is minimal. 

We design a novel algorithm for computing the minimum. The algorithm consists of two phases. As depicted in Figure~\ref{fig:enc4}, the first phase works on the shares of the two secrets $(\mathsf{a,b})$, exchanged via the matching engine using symmetric key encryption, while performing only linear operations on them ($\mathsf{min}$ protocol). Looking ahead, each one of the two clients would run this phase, without any interaction, on its respective shares. The result would be shares of some secret state ($\mathsf{d_0,d_1}$) in which some additional non-linear processing is needed after reconstruction to obtain the final result. However, the secret state can be simulated with just the result of the computation -- i.e., the two bits indicating whether the two numbers are equal or which one is minimal. Therefore, at the end of the first phase, the two clients can send the shares to the server, who reconstructs the secret state and learns the result, again using just local (this time, non-linear) computations. 

Our minimum protocol $\mathsf{min}$ is described in Section~\ref{sec:comparison}, and we overview our techniques and contributions in the relevant section. Our semi-honest protocol is given in Section~\ref{sec:minimum:semi-honest}.

\medskip
\noindent
{\bf Malicious clients.} We now discuss how to change the protocol to protect against malicious clients. 

Zooming out from computing minimum, the auction works in two phases: a ``registration" stage, where clients submit their orders, and the matching stage, where the clients and the bank run the secure protocols to find matching orders. 
In the malicious case, the parties submit commitments to the quantities of their orders to the server. 
%
%
 The list of participants is not known in advance, only the clients who submitted a commitment can participate in the current matching phase. Moreover, the list of participants (at each run) is not public and is only known to the server. 
 
 Of course, clients have to be consistent, and cannot use different values in the matching phase and in the registration phase. In the matching phase, the clients secret shares their inputs (additive secret sharing), and prove using a Zero-Knowledge (ZK) proof that the shares define the committed value provided in the registration phase. 
 

More specifically, client $C_1$ commits to $\sf a$ during registration, i.e., sends $\sf Com(a)$ to the server and commits to the shares $({\sf a_1,a_2})$ of the minimum by sending $\sf Com(a_1)$ and $\sf Com(a_2)$ to the server. It also proves in ZK the statement that $\sf Com(a)=\sf Com(a_1)+\sf Com(a_2)$ given that the commitment scheme is linearly homomorphic allowing to perform additions on committed values. 


On top of the basic semi-honest protocol (as depicted in Figure~\ref{fig:enc4}) we also exchange the messages shown in Figure~\ref{fig:enc5} where every party forwards a commitment to the other party for the share that it does not hold. Client $C_1$ receives a commitment to $\sf b_2$ and client $C_2$ receives a commitment to $\sf a_1$. 

Next, recall that our minimum protocol requires only linear work from the clients, and thus it allows to work on any linearly-homomorphic cryptosystem, such as linear secret sharing scheme, linearly homomorphic commitments, linearly homomorphic encryption scheme, and so on. In the semi-honest setting, we used this property to work only on the secret shares. We run the linear algorithm three times in parallel, on different inputs:
\begin{MyEnumerate}
\item First, each client simply runs the algorithm on  additive shares, just as in the semi-honest solution. This is depicted in Figure~\ref{fig:enc4}. Running the algorithm on those shares would result in shares of some secret state that will be delivered to the server. The server reconstructs the state and computes the result from this state. 
\item Second, the parties run the algorithm on the commitments of the other party's share. This is depicted in Figure~\ref{fig:enc5}. Since the commitment scheme is also linearly-homomorphic, it enables Alice to compute a commitment of what Bob is supposed to send to the server in the first invocation, and vice versa. 
\item Third, the parties also compute (again, using only linear operations!) information that allows the server to learn the openings of the other party's commitment. This enables the server to check that all values it received in the first invocations are correct.  
\end{MyEnumerate}
%

\begin{figure}[t]
\begin{center}
\resizebox{\columnwidth}{!}{%

\begin{tikzpicture}

\tikzstyle{bank}=[rectangle, draw, inner sep = 9pt, text badly centered, rounded corners = 3, fill=black!10]
\tikzstyle{agent}=[circle,draw,minimum width=0.9cm]
[auto,
	every node/.style={node distance=4cm},
	.style={rectangle, draw, dotted, inner sep= 1pt, text width=2.2cm, text height=5.2cm},
	cn/.style={rectangle, draw, inner sep = 9pt, text badly centered, rounded corners = 3, fill=black!10},
	as/.style={circle, draw, inner sep = 2pt, text badly centered},
	dcbox/.style={draw, dotted, thick, inner sep = 6pt},
	>=stealth,
	]
\node (O) at (0,0) [bank] {Secure Matching Engine};
\foreach \x in {1,...,2} {
	\node (C\x) at (180*\x:6cm) [agent] {$C_\x$};
}
\foreach \x in {1,...,2} {
	\DoubleLineTaggedd{O}{C\x}{<-}{->}{$ $}{$ $};
}
\node (inp1) at (-6,1) {$\sf a={\color{blue}a_1}+ {\color{cyan}a_2}$};
\node (inp2) at (6,1) {$\sf b={\color{blue}b_1}+ {\color{cyan}b_2}$};

\node (msg1) at (-4,0.3) {$\sf {\color{cyan}a_2}$};
\node (msg2) at (4,0.3) {$\sf {\color{blue}b_1}$};

\draw [to-](-5.5,-1) -- (-2.5,-1);
\draw [to-](5.5,-1) -- (2.5,-1);

\node (msg3) at (-4,-0.7) {$\sf {\color{blue}b_1}$};
\node (msg4) at (4,-0.7) {$\sf {\color{cyan}a_2}$};

\node (comp1) at (-6.5,-1) {$\sf {\color{blue}b_1}$};
\node (comp2) at (6.5,-1) {$\sf {\color{cyan}a_2}$};

\draw [-to](-5.5,-2) -- (-2.5,-2);
\draw [-to](5.5,-2) -- (2.5,-2);

\node (msg5) at (-4,-1.7) {${\color{blue}d_1}=\sf min({\color{blue}a_1,b_1)}$};
\node (msg6) at (4,-1.7) {${\color{cyan}d_2}=\sf min({\color{cyan}a_2,b_2})$};

\node (output) at (0,-2) {$d={\color{blue}d_1}+{\color{cyan}d_2}$};

\end{tikzpicture}
}
\caption{\small Client-to-client matching protocol for computing the minimum between the quantities $\sf a$ from client $C_1$ and $\sf b$ from client $C_2$ in the semi-honest setting. As described in Footnote~\ref{ft:star}, the communication between the two clients through the server is encrypted, and so the view of the server in this communication is just ${\color{blue}d_1}$, ${\color{cyan}d_2}$. }
\label{fig:enc4}
\end{center}
\end{figure}
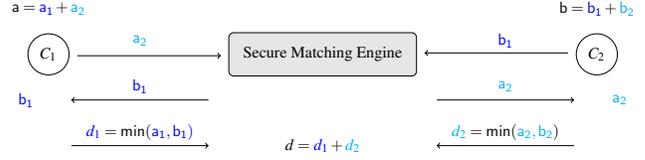
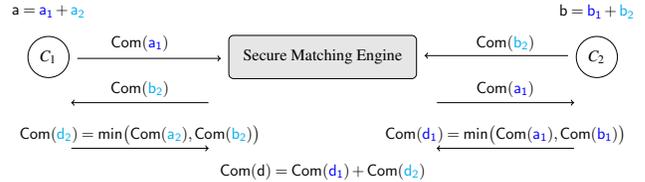
\begin{figure}[t]
\begin{center}
\resizebox{\columnwidth}{!}{%

\begin{tikzpicture}

\tikzstyle{bank}=[rectangle, draw, inner sep = 9pt, text badly centered, rounded corners = 3, fill=black!10]
\tikzstyle{agent}=[circle,draw,minimum width=0.9cm]
[auto,
	every node/.style={node distance=4cm},
	.style={rectangle, draw, dotted, inner sep= 1pt, text width=2.2cm, text height=5.2cm},
	cn/.style={rectangle, draw, inner sep = 9pt, text badly centered, rounded corners = 3, fill=black!10},
	as/.style={circle, draw, inner sep = 2pt, text badly centered},
	dcbox/.style={draw, dotted, thick, inner sep = 6pt},
	>=stealth,
	]
\node (O) at (0,0) [bank] {Secure Matching Engine};
\foreach \x in {1,...,2} {
	\node (C\x) at (180*\x:6cm) [agent] {$C_\x$};
}
\foreach \x in {1,...,2} {
	\DoubleLineTaggedd{O}{C\x}{<-}{->}{$ $}{$ $};
}
\node (inp1) at (-6,1) {$\sf a={\color{blue}a_1}+ {\color{cyan}a_2}$};
\node (inp2) at (6,1) {$\sf b={\color{blue}b_1}+ {\color{cyan}b_2}$};

\node (msg1) at (-4,0.3) {$\sf Com({\color{blue}a_1})$};
\node (msg2) at (4,0.3) {$\sf Com({\color{cyan}b_2})$};

\draw [to-](-5.5,-1) -- (-2.5,-1);
\draw [to-](5.5,-1) -- (2.5,-1);

\node (msg3) at (-4,-0.7) {$\sf Com({\color{cyan}b_2})$};
\node (msg4) at (4,-0.7) {$\sf Com({\color{blue}a_1})$};


\draw [-to](-5.5,-2) -- (-2.5,-2);
\draw [-to](5.5,-2) -- (2.5,-2);

\node (msg5) at (-4,-1.7) {$\sf Com({\color{cyan}d_2})=\sf min\big (Com({\color{cyan}a_2}),\sf Com({\color{cyan}b_2})\big )$};
\node (msg6) at (4,-1.7) {$\sf Com({\color{blue}d_1})=\sf min\big (\sf Com({\color{blue}a_1}),Com({\color{blue}b_1})\big )$};

\node (output) at (0,-2.5) {$\sf Com(d)={\sf Com(\color{blue}d_1})+ \sf Com({\color{cyan}d_2})$};

\end{tikzpicture}
}
\caption{\small Client-to-client matching protocol for computing the minimum in the presence of a malicious adversary. In addition to values computed in Figure~\ref{fig:enc4}, the parties compute commitments of the value that the other participant is supposed to send to the server. }
\label{fig:enc5}
\end{center}
\end{figure}

\medskip
\noindent
{\bf Malicious server.} 
Our final system does not provide security against a malicious server, unless the two clients can authenticate themselves to each other, or can talk directly. We show that if the client can communicate to each other, then we can also support malicious server for the comparison protocol. 


The server receives shares of some secret states, together with commitments of the secret states. It then reconstructs the secret state and checks for consistency. It then has to perform some non-linear operations on the secret state to learn the result. Applying generic ZK proofs for proving that the non-linear operation was done correctly would increase the overhead of our solution. Luckily, the non-linear operation that the server performs is ZK-friendly, specifically, it is enough to prove in ZK a one-out-of-many proof (i.e., given a vector, proving that one of the elements in the vector is zero; see~Theorem~\ref{thm:correctness}). For this particular language, there exists fast ZK solutions~\cite{Groth:2015aa}. See Section \ref{sec:minimum:malicious} for a description and details.

\paragraph{Organization.}
The paper is organized as follows. In Section~\ref{sec:preliminaries} we provide the preliminaries, while some are deferred to the appendices. In Section~\ref{sec:prime-match} we provide the main matching engine functionality. In Section~\ref{sec:minimum} we provide our protocol for computing the minimum, including the semi-honest and the malicious versions. In Section~\ref{sec:prime-match-perf} we report the system performance and in Appendix~\ref{app:imp} we mention challenges pertaining to the deployment of our system.

\section{Preliminaries}
\label{sec:preliminaries}
Some preliminaries are deferred to Appendix~\ref{appx:preliminaries}. 

\paragraph{Notations.}
We use PPT as an acronym for probabilistic polynomial time. We use $\lambda$ to denote the security parameter, and ${\sf negl}(\lambda)$ to denote a negligible function (a function that is smaller than any polynomial for sufficiently large $\lambda$). 



\paragraph{Commitment schemes.}
A commitment scheme is a pair of probabilistic algorithms $(\mathsf{Gen}, \mathsf{Com})$; given public parameters $\mathsf{params} \gets \text{Gen}(1^\lambda)$ and a message $m$, $\mathsf{com} := \mathsf{Com}(\mathsf{params}, m; r)$ returns a ``commitment'' to the message $m$. To reveal $m$ as an opening of $\mathsf{com}$, the committer simply sends $m$ and $r$ (this is sometimes called ``decommitment''). For notational convenience, we often omit $\mathsf{params}$. A commitment scheme is \textit{homomorphic} if, for each $\mathsf{params}$, its message, randomness, and commitment spaces are abelian groups, and the corresponding commitment function is a group homomorphism. We always write message and randomness spaces additively and write commitment spaces multiplicatively. See Section \ref{appx:preliminaries} for more details.

\paragraph{Zero-knowledge proofs.}
We use non-interactive zero-knowledge for three languages. See formal treatment in Appendix~\ref{appendix:sec:zk}:
\begin{MyItemize}
\item {\bf Commitment equality proof:} Denoted as the relation ${\cal R}_{\sf ComEq}$, the prover convinces the verifier that the two given commitments $c_0,c_1$ hide the same value. 
\item {\bf Bit proof:} Denoted as the relation ${\cal R}_{\sf BitProof}$, allows the prover to prove that a commitment $c$ hides a bit, i.e., a value in $\{0,1\}$. 
\item {\bf Out-of-many proofs.}
Denoted as the relation ${\cal R}_{\sf OneMany}$, allows a prover to prove that one of the commitments $V_0,\ldots,V_n$ in the statement is a commitment of $0$. 
\end{MyItemize}

\section{The Prime Match Main Functionalities}
\label{sec:prime-match}

We now describe our Prime Match inventory matching functionalities. We describe the bank-to-client functionality (Section~\ref{sec:func:bank-to-client}), the client-to-client functionality (Section~\ref{sec:func:client-to-client}), and the multi-client system (Section~\ref{sec:func:multiparty}). 

\subsection{Bank to Client Matching}
\label{sec:func:bank-to-client}
This variant is a two-party computation between a bank and a client. The bank tries to find matching orders between its own inventory and each client separately. As mentioned in the introduction, this essentially comes to replace the current procedure of axe-matching as being conducted today, with a privacy-preserving mechanism. Today the bank sends its inventory list to the client who then submits orders to the bank. 
Note, however, that if the bank runs twice with two different clients, and the bank does not hold some security {\tt X}, and two clients are interested in {\tt X} with opposite directions, then such a potential match will not be found.  

\noindent
The functionality proceeds as follows. 
The client sends to the bank its axe list. This includes the list of securities it is interested in, and for each security whether it is interested in long (buy) or short (sell) exposures, and the quantity. The client sends its own list. The functionality finds whether the bank and the client are interested in the same securities with opposite sides, and in that case, it provides as output the 
matching orders and the quantity is the minimum between the two amounts.

\begin{mdframed}

\begin{functionality}[$\mathcal{F}_{\text{B2C}}$--Bank-to-client functionality]
\label{func:b2c}
The functionality is parameterized by the set of all possible securities to trade, a set $U$. 

\noindent
{\bf Input:} The bank $P^*$ inputs lists of orders $({\sf symb}_i^*, {\sf side}_i^*, {\sf amount}_i^*)$ where ${\sf symb}_i \subseteq U$ is the security, ${\sf side}_i^* \in \{{\sf buy}, {\tt sell}\}$ and ${\sf amount}_i^*$ is an integer. The client sends its list of the same format, $({\sf symb}_i^C, {\sf side}_i^C, {\sf amount}_i^C)$.

\noindent
{\bf Output:} Initialize a list of ${\sf Matches}$.
For each $i,j$ such that ${\sf symb}_i^* = {\sf symb}_j^C$ and ${\sf side}_i^* \neq {\sf side}_j^C$, add $({\sf symb}_i^*, {\sf side}_i^*, {\sf side}_j^C, \min\{{\sf amount}_i^*, {\sf amount}_j^C\})$ to $M$. Output $M$ to both parties. 







\end{functionality}
\end{mdframed}

From a business perspective, it is important to note that the input of the client (and the bank) serves as a ``commitment" - if a match is found then it is executed right away.

The functionality resembles a set intersection. In set intersection, if some element is in the input set of some party but not in its output, it can conclude that it does not contain in the other set. Here, if a party does not find a particular symbol in its output although it did provide it as an input, then it is still uncertain whether the other party is not interested in that security, or whether it is interested but with the same side. 
We show how to implement the functionality in the presence of a malicious client or a semi-honest server in Appendix~\ref{sec:twoparty}.

\paragraph{Bank to multiple clients.}
In the actual system, the bank has to serve multiple clients. This is implemented by a simple (sequential) composition of the functionality. Specifically, the functionality is now reactive where clients first register that they are interested to participate. The bank then runs Functionality~\ref{func:b2c} with the clients -- either according to the basis of first-come-first-served, or some random ordering. We omit the details and exact formalism as they are quite natural given a semi-honest bank.

\paragraph{Range functionalities.}
In Appendix~\ref{sec:range}, we show a variant of the protocol where each party inputs a \emph{range} in which it is interested and not just one single value. I.e., if a matched order does not satisfy some minimal value, it will not be executed. Since the minimum value does not change throughout the execution, whenever the bank receives $0$ as a result of the execution it cannot decide whether the client is not interested in that particular symbol, or whether it is interested -- but the matched amount does not satisfy the minimum threshold.

\subsection{Client to Client Matching}
\label{sec:func:client-to-client}
In this variant, the bank has no input and it tries to find potential matches between clients facilitating two clients that wish to compare their inventories. This is a three-party computation where the bank just facilitates the interaction. It is important to notice that the clients do not know each other, and do not know who they are being paired with. The bank selects the two clients and offers them to be paired. 

\begin{mdframed}

\begin{functionality}[$\mathcal{F}_{\text{C2C}}$--Client-to-client functionality]
\label{func:c2c}
The functionality is parameterized by the set of all possible securities to trade, a set $U$. This is a three-party functionality between two clients, $P_1$ and $P_2$, and the bank $P^*$. 

\noindent
{\bf Input:} The client $P_1$ inputs a list of orders $({\sf symb}_i^{1}, {\sf side}_i^{1}, {\sf amount}_i^{1})$. The client $P_2$ inputs a list of orders $({\sf symb}_i^{2}, {\sf side}_i^{2}, {\sf amount}_i^{2})$, and the bank has no input. 

\noindent
{\bf Output:} Initialize a list of ${\sf Matches}$.
For each $i,j$ such that ${\sf symb}_i^{1} = {\sf symb}_j^{2}$ and ${\sf side}_i^{1} \neq {\sf side}_j^{2}$, add $({\sf symb}_i^{1}, {\sf side}_i^{1}, {\sf side}_j^{2}, \min\{{\sf amount}_i^{1}, {\sf amount}_j^{2}\})$ to $M$. Output $M$ to all three parties. 

\end{functionality}
\end{mdframed}
In the next section, we show how to implement this functionality in the presence of a malicious client or a malicious server, assuming that the two clients can communicate directly, or have a public-key infrastructure. When the two clients can communicate only through the server and there is no public-key infrastructure (PKI) or any other setup, there is no authentication and the server can impersonate the other client. We therefore cannot hope to achieve malicious security. We achieve security in the presence of a semi-honest server. See a discussion in the next subsection. 

\subsection{The Multi-Client System}
\label{sec:func:multiparty}
We now proceed to the multiparty auction. Here we have parties that register with their intended lists, and the bank facilitates the orders by pairing the clients according to some random order. 
The functionality is now reactive; The parties first register, in which they announce that they are willing to participate in the next auction, and they also commit to their orders. In the second phase, the bank selects pairs of clients in a random order to perform client-to-client matching. Looking ahead, typically there are around 10 clients that participate in a given auction. 

For simplicity of exposition and to ease readability, we write the functionality as the universe is just a single symbol. Moreover, instead of sending the side explicitly, the client sends two integers $L$ and $S$, representing its interest in long (buy) or short (sell) exposure, respectively. Rationally, each party would put one of the integers as $0$ (as otherwise, it would just pay extra fees to the bank). Generalizing the functionality to deal with many symbols is performed in a natural manner, where the number of total symbols is 1000-5000 in practice. The main functionality can process all the different symbols in parallel. 

\begin{mdframed}
\begin{functionality}[$\fmult$ -- Multi-client matching]
\label{func:multi}
This is an $n+1$ party functionality between $n$ clients $P_1,\ldots,P_n$ and a bank $P^*$. 

Upon initialization, $\fmult$ initializes a list $\cal P=\emptyset$ and two vectors $\mathcal{L}$ and $\mathcal{S}$ of size $n$, where $n$ bounds the total number of possible clients. 

\medskip
\noindent
{\bf $\fmult.\textsf{Register}(P_i, L_i, S_i)$.} 
Store $\mathcal{L}[i] = L_i$ and $\mathcal{S}[i] = S_i$ and add $i$ to $\cal P$. 
Send to the bank $P^*$ the message ${\sf registered}(P_i)$. 

\noindent
{\bf $\fmult.\textsf{Process}()$}.
\begin{enumerate}[nosep]
\item Choose a random ordering $O$ over all pairs of $\cal P$. 
\item For the next pair $(i,j) \in O$ try to match between $P_i$ and $P_j$ (we can assume wlog that always $i \leq j$): 
\begin{enumerate}[leftmargin=*]
\item Compute $M_0 = {\sf min}({\cal L}[i], {\cal S}[j])$, $b_0^0 = ({\cal L}[i] \leq {\cal S}[j])$, $b_1^0 = ({\cal S}[i] \leq {\cal L}[j])$. 
\item Compute $M_1 = {\sf min}({\cal S}[i], {\cal L}[j])$, $b_0^1 = ({\cal S}[i] \leq {\cal L}[j])$ and $b_1^1 = ({\cal S}[i] \leq {\cal L}[j])$. 
\item Send $(i,j,M_0,M_1, b_i^0,b_i^1)$ to $P_i$, and $(i,j,M_0,M_1)$ with $(b_0^0,b_1^0,b_0^1,b_1^1)$ to $P^*$.
\item Update ${\cal L}[i] = {\cal L}[i]-M_0$ and ${\cal S}[j] = {\cal S}[j]-M_0$. 
\item Update ${\cal S}[i] = {\cal S}[i]-M_1$ and ${\cal L}[j] = {\cal L}[j]-M_1$.
\end{enumerate}
\end{enumerate}

%
%
%

%
%
%
%
%
\end{functionality}
\end{mdframed}

\paragraph{On malicious server.}
Our final protocol (see Appendix~\ref{sec:mainprotocol}) for $\fmult$ is secure in the presence of a semi-honest $P^*$ (and a malicious client). Inherently, clients communicate through a star network where the bank facilitates the communication. Moreover, we assume no PKI, clients do not know how many clients are registered in the system, and how many clients are participating in the current auction. 
This can be viewed as ``secure computation without authentication", in which case the server can always ``split" the communication and disconnect several parties from others (see~\cite{BarakCLPR11} for a formal treatment).

We prove security in the presence of a semi-honest server. In fact, our protocol achieves a stronger notion of a guarantee than just semi-honest, as in particular, it runs the underlying comparison protocol (a single invocation of a client-to-client matching) which is secure against a malicious server.  

Another relaxation that we make is that the ordering of pairs is random, and we do not have a mechanism to enforce it. 
 Note also that the functionality leaks some information to the server; in particular, after finding a match, the bank executes it immediately. The bank can infer information about whether two values equal to $0$, and therefore whether a client is not interested in a particular symbol. In contrast, each client just learns whether its value is smaller or equal to the value of the other party, and therefore when it inputs $0$ it can never infer whether the other party is interested in that symbol or not.


%
%

\section{Securely Computing Minimum}
\label{sec:minimum}
A pivotal building block in Prime Match is a secure minimum protocol. In Section~\ref{sec:minimum:func}, we review our functionality for computing the minimum. We focus on the case of client-to-client matching with an aiding server. We show how to convert the protocol for two parties in Appendix~\ref{sec:twoparty}. 

In Section~\ref{sec:comparison} we present the underlying idea for computing the minimum. The algorithm computes the minimum while using only linear operations (looking ahead, those would be computed on shared values) while pushing the non-linear operations on reconstructed data. In Sections~\ref{sec:minimum:semi-honest} and ~\ref{sec:minimum:malicious} we show a semi-honest and a malicious protocol for computing the minimum, respectively. 

\subsection{The Minimum Functionality}
\label{sec:minimum:func}

After receiving a secret integer from each one of the two parties, the functionality compares them and gives as a result two bits -- which indicate which one of the two inputs is smaller than the other, or whether they are equal. It also gives the result to the server.  

\begin{mdframed}
\begin{functionality}[$\fcomp$: Server-aided secure minimum functionality] \label{func:compare}
Consider two players, $P_0$ and $P_1$, and a server $P^*$. 
\begin{itemize}[nosep,leftmargin=10pt]
\item {\bf Input:} $P_0$ and $P_1$ respectively send integers $v_0$ and $v_1$ in $\{0, \ldots , 2^n - 1\}$ to $\fcomp$.
\item {\bf Output:} $\fcomp$ sends $b_0:=(v_0\leq v_1)$ to $P_0$,  $b_1:=(v_1\leq v_0)$ to $P_1$, $(b_0,b_1)$ to $P^*$. 
\end{itemize}
\end{functionality}
\end{mdframed}


In the rest of this section, we will show how to implement this functionality in the presence of a semi-honest (Section~\ref{sec:minimum:semi-honest}) and malicious adversary (Section~\ref{sec:minimum:malicious}).\footnote{For ease of presentation, Functionality~\ref{func:compare} is for the semi-honest version of the protocol; For the malicious case, we will use a slightly different functionality on committed inputs; See Appendix~\ref{sec:malicious-proof}.} 

\subsection{Affine-Linear Comparison Function}\label{sec:comparison}

We first describe an abstract algorithm which compares two elements $v_0$ and $v_1$ of $\{0, \ldots , 2^n - 1\} \subset \mathbb{F}_q$, given their bit-decompositions. 
We separate the algorithm into two parts: {\sf ComparisonInitial} (Algorithm~\ref{alg:initial}) and ${\sf ComparisonFinal}$ (Algorithm~\ref{alg:final}). Both parts do not use any underlying cryptographic primitives. 

In the first algorithm ({\sf ComparisonInitial}), all operations on the bit-decompositions of the two inputs $v_0$ and $v_1$ are \emph{linear}. Looking ahead, this will be extremely useful when converting the algorithm into a secure two-party protocol, where $v_0$ and $v_1$ are additively shared between the two parties (or also just committed, encrypted under additively homomorphic encryption scheme, etc.). In particular, this part of the protocol can be executed without any interaction, just as the algorithm itself when $v_0$ and $v_1$ are given in the clear. The second algorithm ({\sf ComparisonFinal})  can be computed by a \emph{different} party, given all information in the clear. Looking ahead, this will be executed by the server $P^*$ on the outputs of the first part. This part contains some non-linear operations, however, this part of the algorithm does not have to be translated into a secure protocol. 

\paragraph{Overview Algorithm~\ref{alg:initial} ({\sf ComparisonInitial}).}
Our approach is inspired by the algorithm of Wagh, Gupta, and Chandran~\cite[Alg.~3]{Wagh:2019aa}, which compares a \textit{secret-shared} integer with a \textit{public} integer. (Specifically, its inputs consist of an array of public bits and an array of secret-shared bits.) We extend the algorithm and allow the comparison of two private integers using only linear operations. 

We achieve this $\mathbb{F}_q$-linearity in the following way. We fix integers $v_0$ and $v_1$ in $\{0, \ldots , 2^n - 1\}$, with big-endian bit-decompositions given respectively by \vspace{-1ex}
$$
v_0 = \sum_{j = 0}^{n - 1} 2^{n - 1 - j} \cdot v_{0, j}, {\rm~~~~and~~~~} v_1 = \sum_{j = 0}^{n - 1} 2^{n - 1 - j} \cdot v_{1, j} \ .\vspace{-1ex}
$$ We follow the paradigm of \cite{Wagh:2019aa}, whereby, for each $j \in \{0, \ldots , n - 1\}$, a quantity $w_j$ is computed which equals 0 if and only if $v_{0, j} = v_{1, j}$. Meanwhile, for each $j \in \{0, \ldots , n - 1\}$, we set $c_j := 1 + v_{0, j} - v_{1, j} + \sum_{k < j} w_k$ (we also set $c_n := \sum_{k = 0}^{n - 1} w_k$, as we discuss below). The crucial observation of \cite{Wagh:2019aa} is that, for each $j \in \{0, \ldots , n - 1\}$, $c_j = 0$ so long as $v_{0, j} < v_{1, j}$ as bits (that is, if $1 + v_{0, j} - v_{1, j} = 0$) \textit{and} the higher bits of $v_0$ and $v_1$ agree (inducing the equality $\sum_{k < j} w_k = 0$). By consequence, \textit{some} $c_j$, for $j \in \{0, \ldots , n - 1\}$, must equal 0 whenever $v_0 < v_1$. Similarly, $c_n = 0$ whenever $v_0 = v_1$. In summary, $v_0 \leq v_1$ implies that some $c_j = 0$, for $j \in \{0, \ldots , n \}$.

The main challenge presented by this technique is to ensure that the opposite implication holds; that is, we must prevent the sum $c_j := 1 + v_{0, j} - v_{1, j} + \sum_{k < j} w_k$ from equalling 0 (possibly by overflowing) modulo $q$---that is, even when $w_k \neq 0$ for some $k < j$---and hence yielding a ``false positive'' $c_j = 0$, which would falsely assert the inequality $v_0 \leq v_1$. \cite{Wagh:2019aa} prevents this phenomenon by ensuring that each $w_j \in \{0, 1\}$, and choosing $2 + n < q$ (they set $n = 64$ and $q = 67$). In fact, \cite{Wagh:2019aa} defines $w_j := (v_{0, j} - v_{1, j})^2$. Under this paradigm, $c_j := 1 + v_{0, j} - v_{1, j} + \sum_{k < j} w_k$ is necessarily \textit{non-zero} so long as either $v_{0, j} \geq v_{i, j}$ as bits (so that $1 + v_{0, j} - v_{1, j} > 0$) or \textit{any} $w_k \neq 0$, for $k < j$.

This squaring operation is nonlinear in the bits $v_{0, j}$ and $v_{1, j}$, and so it is unsuitable for our setting. We adopt the following recourse instead, which yields $\mathbb{F}_q$-linearity at the cost of requiring that the number of bits $n \in O(\log q)$ (a mild restriction in practice). The key technique is that we may eliminate the squaring---thereby allowing each $w_j$ to remain in $\{-1, 0, 1\}$---provided that we multiply each $w_j$ by a suitable public scalar. In fact, it suffices to multiply each (unsquared) difference $w_j$ by $2^{2 + j}$. In Theorem~\ref{thm:correctness} below, we argue that this approach is correct.

Our modifications to \cite{Wagh:2019aa} also include our computation of the \textit{non-strict} inequality $v_0 \leq v_1$---effected by the extra value $c_n$---as well as our computation of the opposite non-strict inequality, $v_1 \leq v_0$, in parallel. The latter computation proceeds identically, except uses $-1 + v_{0, j} - v_{1, j} \in \{-2, -1, 0\}$ at each bit.
\begin{algorithm}
\caption{$\textsf{ComparisonInitial}\left( \left(v_{0, 0}, \ldots , v_{0, n - 1}\right),\right.\\~ \left.\hspace{+0.6\columnwidth}\left(v_{1, 0}, \ldots , v_{1, n - 1}\right)\right)$}
\label{alg:initial}
\begin{algorithmic}[1]
\State Assign $w_{\mathsf{accum}} := 0$
\For{$j \in \{0, \ldots , n - 1\}$}
\State Set $c_{0, j} := 1 +  v_{0, j} - v_{1, j} + w_{\mathsf{accum}}$.
\State Set $c_{1, j} := -1 + v_{0, j} - v_{1, j} + w_{\mathsf{accum}}$
\State Set $w_j := \left(v_{0, j} - v_{1, j}\right)$
and  $w_{\mathsf{acum}} \mathrel{+}= 2^{2 + j} \cdot w_j$
\EndFor
\State Set $c_{0, n}$ and $c_{1, n}$ equal to $w_{\mathsf{accum}}$
\State Sample a random permutation $\pi \gets \mathbf{S}_{n + 1}$ 
\For{$j \in \{0, \ldots , n\}$}
\State Sample random scalars $s_{0, j}$, $s_{1, j}$ from $\mathbb{F}_q\setminus \{0\}$.
\State Assign $d_{0, j} := s_{0, j} \cdot c_{0, \pi(j)}$,
\State Assign $d_{1, j} := s_{1, j} \cdot c_{1, \pi(j)}$
\EndFor
\State \Return $(d_{0, 0}, \ldots , d_{0, n})$ and $(d_{1, 0}, \ldots , d_{1, n})$
\end{algorithmic}
\end{algorithm}

Of course, the intermediate value $v_{0, j} - v_{1, j}$ need only be computed once per iteration of the first loop. %

\paragraph{Overview of Algorithm~\ref{alg:final} ({\sf ComparisonFinal}).}
Note that in Algorithm~\ref{alg:initial}, $w_{\mathsf{acum}}=0$ as long as $v_{0,j}=v_{1,j}$, and it attains a non-zero value at the first $j$ for which $v_{0, j} \neq v_{1, j}$. Up to that point, $(c_{0,j},c_{1,j})=(1,-1)$. At the first $j$ for which $v_{0,j}\neq v_{1,j}$:
\begin{MyItemize} 
\item If $v_0 > v_1$ (i.e., $(v_{0,j},v_{1,j})=(1,0)$), then we get that $(c_{0,j},c_{1,j})=(2,0)$. 
\item If $v_{0} < v_1$ (i.e., $(v_{0,j},v_{1,j})=(0,1)$), then we get that $(c_{0,j},c_{1,j})=(0,-2)$. 
\end{MyItemize}
The algorithm then makes sure that no other value of $c_{b,j'}=0$, essentially by assigning $w_{\mathsf{acum}}$ to be non-zero. If $v_0=v_1$ then $c_{0,n}=c_{1,n}=0$. Finally, all the bits $(c_{0,0},\ldots,c_{0,n}), (c_{1,0},\ldots,c_{1,n})$ are permuted and re-randomized with some random scalars. Observe that if $v_0 > v_1$ then all the values $d_{0,0},\ldots,d_{0,n}$ are all non-zero, and one of $d_{1,0},\ldots,d_{1,n}$ is zero. If $v_1 \geq v_0$ then exactly one of the values $d_{1,0},\ldots,d_{1,n}$ is $0$ and all $d_{0,0},\ldots,d_{0,n}$ are non-zero. 

It is crucial that the vectors $(d_{0,0},\ldots,d_{0,n})$ and $(d_{1,0},\ldots,d_{1,n})$ do not contain any information on $v_0, v_1$ rather then whether $v_0 \leq v_1$ or $v_1 \leq v_0$. Specifically, these values can easily be simulated given just the two bits $v_0 \leq v_1$ and $v_1 \leq v_0$. Therefore, it is safe to give both vectors to a third party, which will perform the non-linear part of the algorithm. For a vector of bits $(x_0,\ldots,x_n)\in\bit^{n+1}$, the operation $\textbf{any}_{j = 0}^n x_j$ returns $1$ iff there exists $j \in [0,\ldots,n]$ such that $x_j=1$. Algorithm~\ref{alg:final} simply looks for the $0$ coordinate in the two vectors. We have:

\begin{figure*}
\begin{mdframed}
\begin{protocol}[Semi-honest secure comparison protocol] \label{prot:semi-honest}
\leavevmode

\begin{itemize}[nosep,leftmargin=15pt]
\item {\bf Input:} $P_0$ and $P_1$ hold integers $v_0$ and $v_1$, respectively, in $\{0, \ldots , 2^n - 1\}$. $P^*$ has no input.

\item \textbf{The protocol:}
\begin{enumerate}[nosep,leftmargin=15pt]
\item $P_0$ and $P_1$ engage in the coin-tossing procedure $\Pi_{\sf CT}$ (see Sect.~\ref{appx:preliminaries}) to obtain a $\lambda$-bit shared secret~$s$. 

\item Each party $P_i$ (for $i \in \bit$) computes the bit decomposition $v_i = \sum_{j = 0}^{n - 1} 2^{n - 1 - j} \cdot v_{i, j}$, for bits $v_{i, j} \in \{0, 1\}$.
\item  For each $j \in \{0, \ldots , n - 1\}$, $P_i$ computes a random additive secret-sharing $v_{i, j} = \share{v_{i, j}}_0 + \share{v_{i, j}}_1$ in $\mathbb{F}_q$.  $P_i$ sends the shares $\left( \share{v_{i, j}}_{1 - i} \right)_{j = 0}^{n - 1}$ to $P_{1 - i}$.
\item After receiving the shares $\left( \share{v_{1 - i, j}}_i \right)_{j = 0}^{n - 1}$, from $P_{i - 1}$, $P_i$ executes Algorithm~\ref{alg:initial} on the appropriate shares; that is, it evaluates
\begin{equation*}
\left( \left( \share{d_{0, j}}_i \right)_{j = 0}^n, \left( \share{d_{1, j}}_i \right)_{j = 0}^n \right) \gets \mathsf{ComparisonInitial}\left( \left( \share{v_{0, j}}_i \right)_{j = 0}^{n - 1}, \left( \share{v_{1, j}}_i \right)_{j = 0}^{n - 1} \right) \ ,
\end{equation*}
where all internal random coins are obtained from $G(s)$.
\item 
$P_i$ sends the output shares $\left( \share{d_{0, j}}_i \right)_{j = 0}^n, \left( \share{d_{1, j}}_i \right)_{j = 0}^n$ to $P^*$.

\item
After receiving all shares, $P^*$ reconstructs for every $j \in \{0,\ldots,n\}$:
$$
d_{0, j} := \share{d_{0, j}}_0 + \share{d_{0, j}}_1 {\rm~~~~and~~~~} d_{1, j} := \share{d_{1, j}}_0 + \share{d_{1, j}}_1 \ ,
$$ 
and finally executes Algorithm~\ref{alg:final} to receive
$(b_0, b_1) := \mathsf{ComparisonFinal} \left( \left( d_{0, j} \right)_{j = 0}^n, \left( d_{1, j} \right)_{j = 0}^n \right)$.
\item $P^*$ sends $b_i$ to $P_i$. 
\end{enumerate}
\item {\bf Output:} Each $P_i$ outputs $b_i$. $P^*$ outputs $(b_0,b_1)$. 
\end{itemize}
\end{protocol}
\end{mdframed}
\end{figure*}

\begin{figure*}
\begin{mdframed}
\begin{protocol}[Maliciously secure comparison protocol $\picomp$] \label{prot:malicious}
\leavevmode
\noindent
\textbf{Input:} $P_0$ and $P_1$ hold integers $v_0$ and $v_1$, respectively, in $\{0, \ldots , 2^n - 1\}$. $P^*$ has no input.

\noindent
\textbf{Setup phase:} A coin-tossing protocol $\Pi_{\sf CT}$, 
and a 
commitment scheme $(\mathsf{Gen}, \mathsf{Com})$, are chosen (see Sect.~\ref{sec:preliminaries},\ref{appx:preliminaries}). 

\noindent
{\bf The protocol:} 
\begin{MyEnumerate}

\item Commit  $V_i \gets \mathsf{Com}(v_i; r_i)$, and send $V_i$ to $P^*$. $P^*$ delivers $V_{i}$ to $P_{1-i}$. 

\item Engage with $P_{1-i}$ in the coin-tossing procedure $\Pi_{\sf CT}$, and obtain a $\lambda$-bit shared $s$. 
\item Compute the bit decomposition $v_i = \sum_{j \in N} 2^{n - 1 - j} \cdot v_{i, j}$, for bits $v_{i, j} \in \{0, 1\}$. 
\item For each $j \in N$:\label{prot:malicious:generate-shares}
\begin{MyEnumerate}
\item  Compute a random additive secret-sharing $v_{i, j} = \share{v_{i, j}}_0 + \share{v_{i, j}}_1$ in $\mathbb{F}_q$. 
\item 
Commit $V_{i, j, k} \gets \mathsf{Com}(\share{v_{i, j}}_k; ~r_{i, j, k})$ for each $k \in \bit$. 
\item 
Open  $V_{i, j, 1 - i}$ by sending $\share{v_{i, j}}_{1 - i}$ and $r_{i, j, 1 - i}$ to $P_{1 - i}$. 
\end{MyEnumerate}
\item Send the full array $\left( V_{i, j, k} \right)_{j, k = 0}^{n - 1, 1}$ to $P_{1 - i}$. \label{prot:malicious:send-array}
\item Compute: 
$\pi_i \gets \mathsf{ComEq.Prove}\left( V_i, \prod_{j = 0}^{n - 1} \left( V_{i, j, 0} \cdot V_{i, j, 1} \right)^{2^{n - 1 - j}} \right),$\label{stp:malicious:comp-eq}
$\pi_{i,j} \gets \mathsf{BitProof.Prove} \left(V_{i, j, 0} \cdot V_{i, j, 1} \right)$, for all $j \in N$,\label{stp:malicious:bit-proof}

%
$P_i$ sends $\pi_i$, and $(\pi_{i,j})_{j = 0}^{n - 1}$ to $P_{1 - i}$.
\item $P_i$ receives  $\pi_{1-i}$, and $(\pi_{1-i,j})_{j = 0}^{n - 1}$ and verifies the following:
\begin{MyEnumerate}
\item The openings $\share{v_{1 - i, j}}_i$ and $r_{1 - i, j, i}$ indeed open $V_{1 - i, j, i}$, for $j \in N$,
\item $\mathsf{ComEq.Verify} \left( \pi_{1-i}, V_{1 - i}, \prod_{j = 0}^{n - 1} \left( V_{1 - i, j, 0} \cdot V_{1 - i, j, 1} \right)^{2^{n - 1 - j}} \right)$,
\item $\mathsf{BitProof.Verify} \left(\pi_{1-i,j}, V_{1 - i, j, 0} \cdot V_{1 - i, j, 1} \right)$ for each $j \in N$.
\end{MyEnumerate}
If any of these checks fail, $P_i$ aborts.


\item $P_i$ runs Algorithm~\ref{alg:initial}, in parallel, on the shares $\share{v_{k, j}}_i$, the randomnesses $r_{k, j, i}$, and the commitments to the \textit{other party}'s shares $V_{k, j, 1 - i}$ (all for $j \in N$ and $k \in \{0, 1\}$). That is, $P_i$ runs:
\begin{gather*}
\left( \left( \share{d_{0, j}}_i \right)_{j = 0}^n, \left( \share{d_{1, j}}_i \right)_{j = 0}^n \right) \gets \mathsf{ComparisonInitial}\left( \left( \share{v_{0, j}}_i \right)_{j = 0}^{n - 1}, \left( \share{v_{1, j}}_i \right)_{j = 0}^{n - 1} \right), \\
\left( \left( s_{0, j, i} \right)_{j = 0}^n, \left( s_{1, j, i} \right)_{j = 0}^n \right) \gets \mathsf{ComparisonInitial}\left( \left( r_{0, j, i} \right)_{j = 0}^{n - 1}, \left( r_{1, j, i} \right)_{j = 0}^{n - 1} \right), \\
\left( \left( D_{0, j, 1 - i} \right)_{j = 0}^n, \left( D_{1, j, 1 - i} \right)_{j = 0}^n \right) \gets \mathsf{ComparisonInitial}\left( \left( V_{0, j, 1 - i} \right)_{j = 0}^{n - 1}, \left( V_{1, j, 1 - i} \right)_{j = 0}^{n - 1} \right),
\end{gather*}
using the same shared randomness $s$ for all internal coin flips.

\item $P_i$ sends $\left( \share{d_{0, j}}_i \right)_{j = 0}^n, \left( \share{d_{1, j}}_i \right)_{j = 0}^n$ and the randomnesses $\left( s_{0, j, i} \right)_{j = 0}^n, \left( s_{1, j, i} \right)_{j = 0}^n$ to $P^*$.\label{stp:malicious-send-to-P-star}

\end{MyEnumerate}

\noindent
{\bf \boldmath Party $P^*$ (Output reconstruction):} 
After receiving all shares, $P^*$ proceeds as follows:
\begin{MyEnumerate}
\item Reconstruct for each $j \in N$:
\vspace{-1ex}
\begin{eqnarray*}
d_{0, j} := \share{d_{0, j}}_0 + \share{d_{0, j}}_1 \ , &~~~~~& s_{0, j} := s_{0, j, 0} + s_{0, j, 1} \\
d_{1, j} := \share{d_{1, j}}_0 + \share{d_{1, j}}_1 \ , &~~~~~& s_{1, j} := s_{1, j, 0} + s_{1, j, 1}
\end{eqnarray*}
\item Finally, $P^*$ executes Algorithm~\ref{alg:final}, that is:
$
(b_0, b_1) := \mathsf{ComparisonFinal} \left( \left( d_{0, j} \right)_{j = 0}^n, \left( d_{1, j} \right)_{j = 0}^n \right) 
$
\item For each $i \in \{0, 1\}$, if $b_i$ is true, then $P^*$ re-commits $D_{i, j} := \mathsf{Com}(d_{i, j}; s_{i, j})$ for each $j \in N$, computes $\pi_i' \gets \textsf{OneMany.Prove}\left( \left( D_{i, j} \right)_{j = 0}^n \right)$, and finally sends $\pi_i'$ to $P_i$.  Otherwise, $P^*$ sends $\bot$ to $P_i$. $P^*$ outputs both $b_0$ and $b_1$. 
\end{MyEnumerate}

\smallskip
\noindent
{\bf \boldmath Each Party $P_i$ (output reconstruction):} 
\begin{MyEnumerate}
\item If receives a proof $\pi_i'$, compute $D_{i, j} := \mathsf{Com}(\share{d_{i, j}}_i; s_{i, j, i}) \cdot D_{i, j, 1 - i}$, for each $j \in N$, and then verifies $\textsf{OneMany.Verify}\left( \pi_i', \left( D_{i, j} \right)_{j = 0}^n \right)$. If verification passes then output {\sf true}, otherwise, output {\sf false}. 
\end{MyEnumerate}

\end{protocol}
\end{mdframed}

\end{figure*}

\begin{algorithm}[H]
\caption{$\textsf{ComparisonFinal}\left( \left(d_{0, 0}, \ldots , d_{0, n}\right),\right.\\ \left. \hspace{+0.6\columnwidth}\left(d_{1, 0}, \ldots , d_{1, n}\right)\right)$}
\label{alg:final}
\begin{algorithmic}[1]
\State Assign $b_0 := \textbf{any}_{j = 0}^n \left( d_{0, j} = 0 \right)$ 
\State Assign $b_1 := \textbf{any}_{j = 0}^n \left( d_{1, j} = 0 \right)$
\State \Return $b_0$ and $b_1$
\end{algorithmic}
\end{algorithm}

In the below theorem, we again consider bit-decomposed integers $v_0 = \sum_{j = 0}^{n - 1} 2^{n - 1 - j} \cdot v_{0, j}$ and $v_1 = \sum_{j = 0}^{n - 1} 2^{n - 1 - j} \cdot v_{1, j}$; we view the bits $v_{i, j}$ as elements of $\{0, 1\} \subset \mathbb{F}_q$. The following theorem is proven in Appendix~\ref{appx:thm:correctness}:

\begin{theorem} \label{thm:correctness}
Suppose $n$ is such that $2 + 4 \cdot (2^n - 1) < q$. Then for every $v_0,v_1 \in \mathbb{F}_q$\vspace{-1ex}
\begin{eqnarray*}
\lefteqn{(v_0 \leq v_1, v_1\leq v_0) = }\\
&& \mathsf{ComparisonFinal}\left(\mathsf{ComparisonInitial}\left(\vec{v_0},\vec{v_1}\right)\right) \ ,
\end{eqnarray*}
where $\vec{v_0},\vec{v_1}$ are the bit-decomposition of $v_0,v_1$, respectively. 
Moreover, for every $i\in \bit$:
\begin{MyItemize}
\item If $v_i \leq v_{1-i}$ then there exists exactly one $j \in \{0,\ldots,n\}$ such that $d_{i,j}=0$. Moreover, $j$ is distributed uniformly in $\{0,\ldots,n\}$, and each $d_{i,k}$ for $k \neq j$ is distributed uniformly in $\mathbb{F}_q\setminus \{0\}$. 
\item If $v_0 >v_{1-i}$ then the vector $(d_{i,0,},\ldots,d_{i,n})$ is distributed uniformly in $\mathbb{F}_q^{n+1}\setminus \{0\}^{n+1}$. 
\end{MyItemize}
\end{theorem}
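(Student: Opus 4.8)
The plan is to reduce the whole statement to a clean combinatorial fact about where the intermediate quantities $c_{0,j},c_{1,j}$ of Algorithm~\ref{alg:initial} vanish, and then read off both the equality and the distributional claims. Fix $v_0,v_1\in\{0,\ldots,2^n-1\}$ with big-endian bit vectors $\vec{v_0},\vec{v_1}$, and for $j\in\{0,\ldots,n\}$ write $W^{(j)}:=\sum_{k=0}^{j-1}2^{2+k}(v_{0,k}-v_{1,k})\in\mathbb{Z}$ for the integer value of $w_{\mathsf{accum}}$ at the start of iteration $j$ of the first loop. Then, as integers, $c_{0,j}=1+v_{0,j}-v_{1,j}+W^{(j)}$ and $c_{1,j}=-1+v_{0,j}-v_{1,j}+W^{(j)}$ for $j<n$, and $c_{0,n}=c_{1,n}=W^{(n)}$. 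Because $d_{b,j}=s_{b,j}\cdot c_{b,\pi(j)}$ with every $s_{b,j}\in\mathbb{F}_q\setminus\{0\}$ and $\pi\gets\mathbf{S}_{n+1}$ a permutation, the zero set of $(d_{b,0},\ldots,d_{b,n})$ is exactly the $\pi^{-1}$-image of the zero set of $(c_{b,0},\ldots,c_{b,n})$; so it suffices to (i) determine which $c_{b,j}$ are $0$ in $\mathbb{F}_q$, and (ii) bookkeep the random $\pi$ and $s_{b,j}$. For (i) I first note that, since $|W^{(j)}|\le 4\cdot(2^n-1)$, the hypothesis $2+4\cdot(2^n-1)<q$ forces $|c_{b,j}|<q$ as integers for all $b,j$; hence $c_{b,j}\equiv 0\pmod q$ iff $c_{b,j}=0$ in $\mathbb{Z}$, and I may work over the integers.

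I will then analyze the integer quantities by cases. If $v_0=v_1$, every $W^{(j)}=0$, so $c_{0,j}=1$ and $c_{1,j}=-1$ for $j<n$ while $c_{0,n}=c_{1,n}=0$: each vector has exactly one zero (at index $n$), consistent with $v_0\le v_1$ and $v_1\le v_0$. If $v_0\ne v_1$, let $m$ be the most significant index with $v_{0,m}\ne v_{1,m}$. For $j\le m$ all lower-indexed bits agree so $W^{(j)}=0$, giving $c_{0,j}=1,\ c_{1,j}=-1$ for $j<m$ and, at $j=m$, $(c_{0,m},c_{1,m})=(2,0)$ if $v_0>v_1$ and $(0,-2)$ if $v_0<v_1$. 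The heart of the argument is the range $j>m$: there $W^{(j)}$ equals the $k=m$ summand $2^{2+m}(v_{0,m}-v_{1,m})=\pm 2^{2+m}$ plus a multiple of $2^{3+m}$ (all summands with $k>m$), so $W^{(j)}$ is divisible by $2^{2+m}$ but not by $2^{3+m}$; in particular $W^{(j)}\ne 0$, $|W^{(j)}|\ge 2^{2+m}\ge 4$, and $4\mid W^{(j)}$. Since $1+v_{0,j}-v_{1,j}\in\{0,1,2\}$ and $-1+v_{0,j}-v_{1,j}\in\{-2,-1,0\}$, neither $c_{0,j}$ nor $c_{1,j}$ can equal $0$ for $j>m$; and $c_{0,n}=c_{1,n}=W^{(n)}\ne 0$ as well. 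Collecting the cases, the vector $(c_{0,j})_j$ contains a zero (and then exactly one) iff $v_0\le v_1$, and $(c_{1,j})_j$ contains a zero (exactly one) iff $v_1\le v_0$.

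It remains to translate this through Algorithm~\ref{alg:final}. The reduction shows that $\mathsf{ComparisonFinal}$ returns $\big(\textbf{any}_{j}(d_{0,j}=0),\ \textbf{any}_{j}(d_{1,j}=0)\big)=(v_0\le v_1,\ v_1\le v_0)$, which is the first claim. For the ``moreover'' part, fix $i$. If $v_i\le v_{1-i}$, then $(c_{i,j})_j$ has its unique zero at a deterministic index $j^\star$ (equal to $m$ if $v_i<v_{1-i}$, and $n$ if $v_i=v_{1-i}$), so $d_{i,j}=0$ exactly at $j=\pi^{-1}(j^\star)$, which is uniform on $\{0,\ldots,n\}$ since $\pi$ is uniform; conditioning on any fixed $\pi$, every other $d_{i,k}=s_{i,k}\cdot c_{i,\pi(k)}$ is a fixed nonzero constant times an independent uniform element of $\mathbb{F}_q\setminus\{0\}$, hence itself independent uniform on $\mathbb{F}_q\setminus\{0\}$; since this holds for every $\pi$, the joint law is as claimed. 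If $v_i>v_{1-i}$, the vector $(c_{i,j})_j$ has no zero, so the same argument shows $(d_{i,0},\ldots,d_{i,n})$ is uniform on $(\mathbb{F}_q\setminus\{0\})^{n+1}$.

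The one genuinely delicate step is the range $j>m$: one must rule out that the accumulator $W^{(j)}$, reduced modulo $q$, accidentally cancels the small offset $\pm1+v_{0,j}-v_{1,j}$ and produces a spurious zero (a ``false positive'' for the inequality). This is exactly where the size bound $2+4\cdot(2^n-1)<q$ (no wraparound modulo $q$) is used together with the $2$-adic structure of $W^{(j)}$ (nonzero and divisible by $4$). The remaining steps are routine arithmetic with $\{0,1\}$-valued bits.
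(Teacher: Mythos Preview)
Your proof is correct and follows essentially the same route as the paper: both hinge on the $2$-adic valuation of $W^{(j)}=\sum_{k<j}2^{2+k}w_k$ (your index $m$ is the paper's $k^*$), showing that once some $w_k\ne 0$ the accumulator is a nonzero multiple of $4$ and hence cannot cancel an offset in $\{-2,\ldots,2\}$, with the size hypothesis $2+4(2^n-1)<q$ preventing wraparound in $\mathbb{F}_q$. The distributional part is likewise the same argument (the permutation makes the zero position uniform, and independent nonzero scalars make the remaining entries uniform on $\mathbb{F}_q\setminus\{0\}$); your $(\mathbb{F}_q\setminus\{0\})^{n+1}$ is in fact what the paper's own proof establishes as well.
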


We emphasize that Algorithm~\ref{alg:initial} uses only $\mathbb{F}_q$-linear operations throughout. A number of our below protocols conduct Algorithm~\ref{alg:initial} ``homomorphically''; that is, they execute the algorithm on elements of an $\mathbb{F}_q$-module $M$ which is unequal to $\mathbb{F}_q$ itself. As a basic example, Algorithm~\ref{alg:initial} may be executed on bits $(v_{0, j})_{j = 0}^{n - 1}$ and $(v_{1, j})_{j = 0}^{n - 1}$ which are \textit{committed}, provided that the commitment scheme is homomorphic (its message, randomness and commitment spaces should be $\mathbb{F}_q$-modules, and its commitment function an $\mathbb{F}_q$-module homomorphism). Furthermore, Algorithm~\ref{alg:initial} may be conducted on additive $\mathbb{F}_q$-shares of the bits $(v_{0, j})_{j = 0}^{n - 1}$ and $(v_{1, j})_{j = 0}^{n - 1}$.

In this latter setting, sense must be given to the affine additive constants $\pm 1$. As in \cite{Wagh:2019aa}, we specify that these be shared in the obvious way; that is, we stipulate that $P_0$ and $P_1$ use the shares $0$ and $\pm 1$, respectively.

\subsection{The Semi-Honest Protocol}\label{sec:minimum:semi-honest}

For simplicity, we first describe a protocol that securely computes this functionality in the setting of three-party computation with an honest majority and a \textit{semi-honest} adversary. We give a maliciously secure version in Protocol \ref{prot:malicious} and prove its security in Section \ref{appx:thm:semi} of the Appendix.

\begin{theorem}
\label{thm:semi}
If ~$\Pi_{\sf CT}$ is a secure coin-tossing protocol, $G$ is a pseudorandom generator, and the two clients communicate using symmetric authenticated encryption with pseudorandom ciphertexts, then Protocol~\ref{prot:semi-honest} securely computes Functionality~\ref{func:compare}   in the presence of a semi-honest adversary corrupting at most one party. Each party sends or receives $O(n^2+\lambda)$ bits, where $n$ is the length of the input and $\lambda$ is the security parameter. 
\end{theorem}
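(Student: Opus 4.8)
I would prove security via the standard ideal/real simulation paradigm: for each of the three single‑party corruptions --- a corrupt client $P_0$ (symmetrically $P_1$), and a corrupt server $P^*$ --- construct a PPT simulator $\Sim$ that, given only the corrupted party's input and prescribed output from $\fcomp$, produces a view computationally indistinguishable from the real one, jointly with the honest parties' outputs. Since a semi‑honest party follows the protocol, correctness of the honest parties' outputs is immediate from Theorem~\ref{thm:correctness}: running Algorithm~\ref{alg:initial} followed by Algorithm~\ref{alg:final} on $(v_0,v_1)$ returns exactly $(v_0\le v_1,\,v_1\le v_0)$, which is precisely what $\fcomp$ delivers; hence only the corrupted view must be simulated.

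\textbf{Corrupt server (the main case).} $P^*$ has no input, so $\Sim$ is given only $(b_0,b_1)$. The server's incoming traffic is (i) the two clients' $\Pi_{\sf CT}$ transcript and their share‑exchange of step~3, both routed through $P^*$ over the symmetric channel, and (ii) the additive shares $\share{d_{0,j}}_0,\share{d_{0,j}}_1,\share{d_{1,j}}_0,\share{d_{1,j}}_1$ sent in the penultimate step. For (i), $\Sim$ outputs uniformly random strings of the correct lengths. For (ii), $\Sim$ samples the \emph{reconstructed} vectors $(d_{0,j})_{j=0}^n$ and $(d_{1,j})_{j=0}^n$ from the joint distribution that Theorem~\ref{thm:correctness} pins down as a function of $(b_0,b_1)$ alone (a single zero in a uniform coordinate of the vector(s) for which $b_i=1$, aligned across the two vectors in the equality case $b_0=b_1=1$, with all remaining entries uniform in $\mathbb{F}_q\setminus\{0\}$), then outputs a uniformly random additive splitting of each reconstructed value. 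Indistinguishability follows from a two‑step hybrid: first replace the real ciphertexts carrying (i) by random strings, a reduction to the pseudorandomness of ciphertexts of the authenticated‑encryption scheme; then replace the pseudorandom coins $G(s)$ driving Algorithm~\ref{alg:initial} by truly uniform coins --- a valid reduction to the security of $G$ \emph{precisely because} $s$ is jointly produced by the two honest clients via $\Pi_{\sf CT}$ and is never part of $P^*$'s view. In the final hybrid, the law of the reconstructed $d$‑vectors is exactly the one sampled by $\Sim$ (the distributional half of Theorem~\ref{thm:correctness}) and, conditioned on them, the shares are a uniform additive splitting; so this hybrid is identical to the ideal execution.

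\textbf{Corrupt client.} Say $P_0$ is corrupt; $\Sim$ is given $v_0$ and $b_0$. It invokes the simulator guaranteed by the security of $\Pi_{\sf CT}$ against a corrupt $P_0$ to generate the coin‑tossing transcript (and the resulting seed $s$); it plays the role of $P_1$ in step~3 by sampling the incoming shares $(\share{v_{1,j}}_0)_{j=0}^{n-1}$ uniformly from $\mathbb{F}_q^{n}$ (encrypting them under the channel key, which $P_0$ holds) --- these are a single additive share of bits $P_0$ does not know, hence uniform in both worlds; and it delivers $b_0$ as the message from $P^*$, which matches the real value by Theorem~\ref{thm:correctness}. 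Nothing else reaches $P_0$ (its outgoing messages it computes from this view). The honest parties' outputs $b_1$ and $(b_0,b_1)$ agree with $\fcomp$ by correctness, so the joint real and ideal distributions are indistinguishable. The case of corrupt $P_1$ is symmetric.

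\textbf{Complexity, and where the difficulty lies.} Per party the communication is: $O(\lambda)$ bits for $\Pi_{\sf CT}$; an exchange of $n$ field elements of $O(\log q)=O(n)$ bits each (since $q=\Theta(2^n)$ by the hypothesis $2+4(2^n-1)<q$), i.e.\ $O(n^2)$; and $2(n+1)$ field elements forwarded to $P^*$, again $O(n^2)$; in total $O(n^2+\lambda)$. The only step that is not bookkeeping is the server case: one must establish that the $4(n+1)$ field elements the server sees reveal nothing beyond $(b_0,b_1)$, which rests entirely on the distributional half of Theorem~\ref{thm:correctness} together with the observation that the PRG seed $s$ is invisible to $P^*$ (so that the pseudorandom internal coins can be swapped for uniform ones before invoking that theorem). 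The remaining cases reduce to the uniformity of single additive shares and to the security of $\Pi_{\sf CT}$ and of the encryption.
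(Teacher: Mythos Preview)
Your proposal is correct and follows essentially the same approach as the paper's proof: a case split over the corrupted party, with the client case reducing to uniformity of a single additive share (plus correctness via Theorem~\ref{thm:correctness}), and the server case reducing to the distributional statement of Theorem~\ref{thm:correctness} together with the observation that $s$ (hence the PRG output) is hidden from $P^*$ so the internal coins can be swapped for uniform ones. If anything, your write-up is slightly more explicit than the paper's in spelling out the two-hop hybrid for $P^*$ and the alignment of the zero coordinate in the equality case $b_0=b_1=1$.
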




\subsection{The Maliciously Secure Protocol}
\label{sec:minimum:malicious}
We now give our malicious protocol for Functionality~\ref{func:compare} in Protocol~\ref{prot:malicious}.
To ease notation, we denote $N=\{0,\ldots,n-1\}$. 
We already gave an overview of the protocol as part of the introduction. The parties commit to the inputs, share their inputs, commit to the shares, and prove that all of those are consistent. Then, each party can operate on the shares it received (as in the semi-honest protocol), but also on the commitment that it holds on the other parties' share. When $P^*$ receives the shares of $d$ from party $P_i$, it also receives a commitment of what $P_{1-i}$ is supposed to send. The server can therefore check for consistency, and that no party cheated. Moreover, each party $P_i$ can also compute commitments of the two vectors $\vec{d_0}, \vec{d_1}$. When the server comes to prove $P_i$ that $v_i \leq v_{1-i}$, it has to show that there is a $0$ coordinate in its vector $\vec{d_i}$. This is possible using one-out-of-many proof. See Appendix~\ref{sec:malicious-proof}  for the full security proof. We note that the functionality is slightly different than that of Functionality~\ref{func:compare}.



\subsection{Bank-to-Client}
We show in Appendix~\ref{sec:twoparty} a two-party (Bank-to-Client) version of the protocol, where the bank does not just facilitate the computation but also provides input. In a nutshell, the parties use linear homomorphic encryption -- ElGamal encryption -- instead of secret sharing.


 \begin{table*}
  \begin{center}
  \begin{footnotesize}
    \begin{tabular}[t]{c|c|c|c|c|cc}
      \toprule
       &  & &   Throughput  & Matching  & Matching \\
       & Number of    & & (transactions/    &  received msg  & sent msg \\

     & Symbols  &Latency (sec)  & sec)         & Size (MB)  &Size (MB) \\
      \midrule
     \multirow{7}{*}{Bank-to-client} 
     & 100  &9.903 ($\pm$0.174 ) &10.09 & 0.215&0.521\\

      & 200       &19.533 ($\pm$0.530 ) &10.23 & 0.430&1.040\\
  


      & 500  & 46.223($\pm$0.787 ) &10.81 & 1.076&2.597\\

      & 1000       &95.396 ($\pm$2.063 ) &10.48& 2.152&5.194\\

      & 2000 &183.186 ($\pm$2.512 )  &10.91   &  4.304&10.382\\
  
 
      & 4000  &356.740 ($\pm$2.149 ) &11.21 & 8.608&20.762\\

  
         & 10000 &941.813 ($\pm$18.465 )  &10.61   &  21.520&51.902\\\midrule
        \multirow{7}{*}{Client-to-client} &

        100  &11.15 ($\pm$0.060 ) &8.96 & 0.972&1.549\\

      & 200       &20.636 ($\pm$0.525 ) &9.69 & 1.945&3.096\\
  


      & 500  & 51.493($\pm$2.343 ) &9.71 & 4.863&7.737\\

      & 1000       &101.051 ($\pm$2.587 ) &9.89 & 9.727&15.472\\

      & 2000 &208.813 ($\pm$1.479 )  & 9.57  &  19.454&30.942\\
  
 
      & 4000  &390.510 ($\pm$3.020 ) & 10.24& 38.908&61.882\\

  
        & 10000 &1064.443 ($\pm$70.439 )  & 9.39  &  97.270&154.702\\

      \bottomrule
    \end{tabular}
  \end{footnotesize}
  \end{center}
  \vspace{-2ex}
  \caption{Performance of Bank-to-Client matching for two clients, and Client-to-Client matching.}
  \label{fig:2partyAWS}
  \label{fig:3partyAWS}
\end{table*}%

\section{Prime Match System Performance}
\label{sec:prime-match-perf}
We report benchmarks of Prime Match in two different environments, a Proof of Concept (POC) environment, and the production environment after refactoring the code to meet the requirements of the bank's systems. The former benchmarks can be used to value the performance of the comparison protocol for other applications in different systems. 

\paragraph{Secure Minimum Protocol Performance}\label{sec:exp1}: For the purposes of practical convenience, adoption, and portability, our client module is entirely browser-based, written in JavaScript. Its cryptographically intensive components are written in the C language with side-channel resistance, compiled using Emscripten into WebAssembly (which also runs natively in the browser). Our server is written in Python, and also executes its cryptographically intensive code in C. Both components are multi-threaded—using WebWorkers on the client side and a thread pool on the server’s—and can execute arbitrarily many concurrent instances of the protocol in parallel (i.e., constrained only by hardware). All players communicate by sending binary data on WebSockets (all commitments, proofs, and messages are serialized). 


We run our experiments on commodity hardware throughout since our implementation is targeted to a real-world application where clients hold conventional computers. In particular, one of the two clients runs on an Intel Core i7 processor, with 6 cores, each 2.6GHz, and another one runs on an Intel Core i5, with 4 cores, each 2.00 GHz. Both of them are Windows machines. Our server runs in a Linux AWS instance of type c5a.8xlarge, with 32 vCPUs. In the first scenario, we run the client-to-bank inventory matching protocol for two clients where we process each client one by one against the bank's inventory. In Table~\ref{fig:2partyAWS} we report the performance for a different number of registered orders per client $(100,200,\ldots,10000)$. \textit{Latency} refers to the total time it takes to process all the orders from both clients (in seconds). \textit{Throughput} measures the number of transactions per second. The number of orders/symbols processed per second is approximately $10$. We also report the message size (in MB) for the message sent from each client to the server during the registration phase and the matching phase. We further record the size of the messages received from the server to each client. The bandwidth is 300 Mbps. Note that the message sizes can be reduced considerably if message serialization is not required.

In the second scenario, we run the client-to-client inventory matching protocol (the comparison protocol) for two clients where we try to match the orders of the two clients via the bank. In this case we assume that the bank has no inventory.
We report the performance in Table~\ref{fig:3partyAWS}. The number of orders/symbols processed per second is approximately $10$ in this case too.

\paragraph{Prime Match Performance in Production}\label{sec:exp2}:
Figure~\ref{fig:arch} shows a sketch of both the bank's network tiering and the application architecture of Prime Match. The right side of the diagram shows that the clients are able to access the Prime Match UI through the bank's Markets portal with the correct entitlement.


%


		The application UI for Prime Match is hosted on the bank's internal cloud platform. 
		After each client is getting authenticated by the bank's Markets portal, it can access the application on the browser 
		which establishes web socket connections to the server. The client's computation takes place locally within their browser, ensuring that all private data remains local. The server is hosted on the bank's trade management platform.

%
%
%
%
%
%

The bank's network tiering exists to designate the network topology in and between approved security gateways (firewalls). The network traffic from the client application is handled by tier 1 which helps the bank's Internet customers achieve low latency, secured, and accelerated content access to internally hosted applications. Then, the traffic will be subsequently forwarded to a web socket tunnel in tier 2 and gets further directed to the server in tier 3.



 During the axe registration phase, the client logins to Prime Match from his/her web browser and uploads a file of orders (symbol, directions, quantity) which are encrypted locally and securely on the browser. The client is uploading a minimum (threshold) and a maximum (full) quantity per symbol. Prime Match first processes the encrypted threshold quantities of all clients and then it processes the full quantities. This process implements the range functionality of Appendix~\ref{sec:range}. Note that for the partial matches, the full quantity is never revealed to the server. Moreover, if there is no match no quantity is revealed to the server. Figure~\ref{fig:finalmatch} shows a complete run of an auction on the client's browser after the matching phase where the final matched quantities of some test symbols/securities with the corresponding fixed spread are revealed.

   Our protocol runs in production every 30 minutes. There are two match runs each hour and matching starts at xx:10 and xx:40. Axe registration starts 8 minutes before matching (xx:02 and xx:32). The matching process finishes at various times (as shown in the experiments section) based on the number of symbols. Based on the application requirements (up to 5000 symbols in the US and 10-60 clients) it does not finish after xx:55.  
 
For running the code in production in the bank's environment, the two clients run on the same type of Windows machines whose specs are Intel XEON CPU E3-1585L, with 4 cores, each 3.00 GHz. The server runs on the bank's trade management platform with 32 vCPUs. In Table~\ref{fig:primematchperf} of the Appendix, we report the performance. Moreover, we discuss challenges we faced during the implementation in Appendix~\ref{app:imp}.




\section{Conclusion}

Inventory matching is a fundamental service in the traditional financial world. In this work, we introduce secure multiparty computation in financial services by presenting a solution for matching orders in a stock exchange while maintaining the privacy of the orders. Information is revealed only if there is a match. Our central tool is a new protocol for secure comparison with linear operations in the presence of a malicious adversary, which can be of independent interest. Our system is running live, in production, and is adopted by a large bank in the US -- J.P. Morgan.

 \section{Acknowledgements}

We would like to thank Mike Reich, Vaibhav Popat, Sitaraman Rajamani, Dan Stora, James Mcilveen, Oluwatoyin Aguiyi, Niall Campbell, Wanyi Jiang, Grant McKenzie, Steven Price, Vinay Gayakwad, Srikanth Veluvolu, Noel Peters for their great efforts and help to move Prime Match in production. Last but not least, we would like to thank our executive sponsor Jason Sippel. The paper is based on joint patents~\cite{PA,PB}

This paper was prepared in part for information purposes by the Artificial Intelligence Research group and AlgoCRYPT CoE of JPMorgan Chase \& Co and its affiliates (``JP Morgan''), and is not a product of the Research Department of JP Morgan. JP Morgan makes no representation and warranty whatsoever and disclaims all liability, for the completeness, accuracy or reliability of the information contained herein. This document is not intended as investment research or investment advice, or a recommendation, offer or solicitation for the purchase or sale of any security, financial instrument, financial product or service, or to be used in any way for evaluating the merits of participating in any transaction, and shall not constitute a solicitation under any jurisdiction or to any person, if such solicitation under such jurisdiction or to such person would be unlawful. 2023 JP Morgan Chase \& Co. All rights reserved.

\bibliographystyle{plain}
\bibliography{mainUSENIX}


\appendix

\addcontentsline{toc}{section}{Appendices}

\section{Deferred Preliminaries}
\label{appx:preliminaries}

%
%


\paragraph{Notations.}
A distribution ensemble $X = \{X(a,\lambda)\}_{a \in {\cal D},\lambda \in \mathbb{N}}$ is an infinite sequence of random variables indexed by $a \in {\cal D}$ and $\lambda \in \mathbb{N}$. We let $x \gets X(a,\lambda)$ denote sampling an element $x$ according to the distribution $X(a,\lambda)$. Two distribution ensembles $X = \{X(a,\lambda)\}_{a \in {\cal D},\lambda \in \mathbb{N}}$ and $Y = \{Y(a,\lambda)\}_{a \in {\cal D},\lambda \in \mathbb{N}}$ are {\sf computationally-indistinguishable}, denoted $X \approx_{comp} Y$, if for every non-uniform {\sc ppt} algorithm $D$ there exists a negligible function ${\sf negl}(\cdot)$ such that for every $\lambda$ and $a \in {\cal D}$:
$$
\Pr\left[D(X(a,\lambda))=1\right]-\Pr\left[D(Y(a,\lambda))=1\right] \leq {\sf negl}(\lambda) \ .
$$

We let $\mathcal{G}$ denote a \textit{group-generation algorithm}, which on input $1^\lambda$ outputs a cyclic group $\mathbb{G}$, its prime order $q$ (with bit-length $\lambda$) and a generator $g \in \mathbb{G}$. We recall the {\sf discrete-logarithm assumption}, in which given a cyclic group $(\mathbb{G}, q, g)$ and a random $h\in \mathcal{G}$, any probabilistic polynomial time adversary finds $x$ such that $g^x=h$ with at most negligible probability in $\lambda$. The {\sf Decisional Diffie-Hellman assumption} states that any probabilistic polynomial time adversary distinguishes between $(\mathbb{G}, q, g, g^x, g^y,g^{xy})$ and $(\mathbb{G}, q, g, g^x, g^y,g^{z})$ for random $x,y,z \in \mathbb{F}_q$ with at most negligible probability in~$\lambda$. 

\paragraph{Secure multiparty computation.}
We use standard definitions of secure multiparty computation, following the stand-alone model~\cite{Canetti00}. We distinguish between security in the presence of a {\sf semi-honest adversary}, i.e., the adversary follows the protocol execution but looks at the all messages it received and tries to learn some additional information about the honest parties' inputs, and a {\sf malicious} adversary, i.e., an adversary that might act arbitrarily. 
We compare between a function $\cF(x_1,\ldots,x_n) = (y_1,\ldots,y_n)$ and a protocol $\pi(x_1,\ldots,x_n)$ that allegedly privately computes the function $\cF$.

We review some notations: 
Let $\Adv$ be a non-uniform probabilistic polynomial-time adversary controlling
parties $I \subset [n]$. Let $\real_{\pi,\Adv(z),I}(x_1,\ldots,x_n, \lambda)$ denote the output of the
honest parties and $\Adv$ in an real execution of $\pi$, with inputs $x_1,\ldots,x_n$, auxiliary input $z$ for $\Adv$, and security parameter $\lambda$. Let $\Sim$ be a non-uniform probabilistic polynomial-time adversary. Let $\ideal_{\cF,\Sim(z),I}(x_1, \ldots,x_n, \lambda)$ denote the output of
the honest parties and $\Sim$ in an ideal execution with the functionality $\cF$, inputs
$x_1, \ldots,x_n$ to the parties, auxiliary-input $z$ to $\Sim$, and security parameter~$\lambda$. The security definition requires that for every real-world adversary $\Adv$, there exists an ideal world adversary $\Sim$ such that $\real$ and $\ideal$ are indistinguishable. We refer to~\cite{Canetti00} for the full definition.

\paragraph{Hybrid model and composition.}
The hybrid model combines both the real and ideal models. Specifically, an execution of a protocol $\pi_\cF$ in the $\cG$-hybrid model, for some functionality $\cG$, involves the parties sending normal messages to each other (as in the real model), but they also have access to a trusted party computing $\cG$. The composition theorem of~\cite{Canetti00} implies that if $\pi_\cF$ securely computes some functionality $\cF$ in the $\cG$-hybrid model and a protocol $\pi_\cG$ computing $\cG$ then the protocol $\pi_\cF^\cG$, where every ideal call of $\cG$ is replaced with an execution of $\pi_\cG$ securely-computes $\cF$ in the real world.

\paragraph{On modeling security.}
We sometimes wish to prove that the protocol is secure with, e.g., malicious $P_0$ but semi-honest $P_1$. We use the malicious security formalization where we quantify for adversaries that control $P_1$ that do follow the protocol specifications. Note that in the ideal world, the simulator receives the input of the corrupted parties, whereas, in the malicious setting, we usually extract the input it actually uses in the protocol. In the case of a semi-honest adversary, the simulator does not have to extract the input, and it can just send the input it received to the trusted party. 

\paragraph{Coin-tossing.}
We define the coin-tossing functionality, $\cF_{\sf CT}$. The functionality assumes two parties, $P_1$ and $P_2$:
\begin{MyEnumerate}
\item {\bf Input:} Each party inputs $1^\lambda$ and some parameter $\ell > \sec$. 
\item {\bf The functionality:} Sample a uniform string $s\leftarrow \bit^{\ell}$ and give the two parties $s$.
\end{MyEnumerate}
This functionality can be realized easily, assuming commitment schemes and pseudorandom generators (security with abort).

\paragraph{Commitment Schemes.}
 The {\sf hiding} property of the commitment scheme states that for any two messages $m_0,m_1$, any adversary distinguishes between $\mathsf{Com}(\mathsf{params}, m_0; r_0)$ and $ \mathsf{Com}(\mathsf{params}, m_1; r_1)$ with at most negligible probability in $\lambda$. The {\sf binding} property states that given $\mathsf{params} \gets \text{Gen}(1^\lambda)$, any adversary can find $(m_0,r_0), (m_1,r_1)$ such that $\mathsf{Com}(\mathsf{params}, m_0; r_0)=\mathsf{Com}(\mathsf{params}, m_1; r_1)$ with at most negligible probability in $\lambda$. 

For example, we have the Pedersen commitment scheme (see, e.g., \cite[Def. 6]{Bunz:2018aa}). For $\mathsf{params}$ consisting of a prime $q$, a cyclic group $\mathbb{G}$ of order $q$, and elements $g$ and $h$ of $\mathbb{G}$, the Pedersen scheme maps elements $m$ and $r$ of $\mathbb{F}_q$ to $\mathsf{Com}(m; r) := g^m h^r$. If the discrete logarithm problem is hard with respect to $\mathcal{G}$, then the Pedersen scheme is computationally binding.

\subsection{Zero Knowledge Proofs}\label{appendix:sec:zk}

We present definitions for zero-knowledge arguments of knowledge, closely following \cite{Groth:2015aa} and \cite{Bunz:2018aa}. 
We also review a number of \textit{particular} constructions.

We posit a triple of interactive, probabilistic polynomial time algorithms $\Pi = (\mathsf{Setup}, \mathsf{Prove}, \mathsf{Verify})$. We fix a polynomial-time-decidable ternary relation $\mathcal{R} \subset (\{0, 1\}^*)^3$; by definition, each common reference string $\sigma \gets \mathsf{Setup}(1^\lambda)$ yields an NP language $L_\sigma = \{ x \mid \exists w : (\sigma, x, w) \in \mathcal{R} \}$. We denote by $\mathsf{tr} \gets \left< \mathsf{Prove}(\sigma, x, w), \mathsf{Verify}(\sigma, x) \right>$ the (random) transcript of an interaction between $\mathsf{Prove}$ and $\mathsf{Verify}$ on auxiliary inputs $(\sigma, x, w)$ and $(\sigma, x)$ (respectively). 
Abusing notation, we occasionally write $b \gets \left< \mathsf{Prove}(\sigma, x, w), \mathsf{Verify}(\sigma, x) \right>$ for the single bit indicating the verifier's state upon completing the interaction (i.e., \textit{accept} or \textit{reject}). The zero-knowledge proof constructions we use (see below) all feature relations $\mathcal{R}$ whose \textit{statements} consist of commitments and whose \textit{witnesses} consist of openings to these commitments. To simplify notation---we adopt a somewhat non-standard notational scheme, whereby we omit the witness from each call to $\mathsf{Prove}$. 

\paragraph{The Zero-Knowledge Functionality $\cF_{\sf ZK}({\cal R})$:} 
The zero-knowledge functionality $\cF_{\sf ZK}({\cal R})$ for an NP relation ${\cal R}$ is defined as follows:
\begin{MyItemize}
\item {\bf Input:} The prover and the verifier hold the same setup string $\sigma$ and input $x$, and the prover holds $w$. 
\item {\bf Output:} If ${\cal R}(\sigma,x,w)=1$ then it sends ${\sf accept}$ to the verifier. Otherwise, it sends ${\sf reject}$. 
\end{MyItemize}

The \textit{particular} zero-knowledge proof constructions we use all feature relations $\mathcal{R}$ whose \textit{statements} consist of commitments and whose \textit{witnesses} consist of openings to these commitments. More precisely---for each $\mathcal{R}$ we consider below, and for each setup string $\sigma$---each element $x$ of $L_\sigma$ consists of one more commitment, while any valid witness $w$ to $x$'s membership consists exactly of openings (i.e., message and randomness) to some or all among these commitments.

In light of this fact---and to simplify notation---we adopt a somewhat non-standard notational scheme, whereby we omit the witness from each call to $\mathsf{Prove}$, leaving it implicit in the statement. For example, for a language $L_\sigma$ whose statements consist of single commitments, we write $\Pi.\mathsf{Prove}(V)$---for some commitment $V$---to mean $\Pi.\mathsf{Prove}(\sigma, V, (m, r))$, where $V = \mathsf{Com}(m; r)$ (and the commitment scheme $(\mathsf{Gen}, \mathsf{Com})$ is implicit in the setup string $\sigma$). The expression $\Pi.\mathsf{Verify}(\pi, V)$ of course retains its usual meaning. Informally, we view each commitment $V$ as a ``data structure'' whose internal ``fields'' $m$ and $r$ may, optionally (that is, for the prover), be populated with an opening to $V$. The call $\Pi.\mathsf{Prove}(V)$ acts on $V$ by accessing its internal fields.

Moreover, given commitments $V_0$ and $V_1$ whose internal openings are populated, we understand $V_0 \cdot V_1$ as a third commitment whose openings are \textit{also} populated (in the obvious way, i.e., by addition).


\paragraph{Commitment equality proof.}
A simple Schnorr proof can be used to show that two Pedersen commitments open to the same message (recall that in Pedersen's commitment, the commitment scheme is ${\sf Com}(m;r)=g^mh^r$ for two group elements $g,h$). We specialize $(\mathsf{Gen}, \mathsf{Com})$ to the Pedersen scheme, and set $\sigma$ to a Pedersen base $(g, h)$. For completeness, we outline the details. The relevant relation is as follows:\vspace{-1ex}

\begin{eqnarray*}
\lefteqn{\mathcal{R}_{\mathsf{ComEq}} = \left\{ \left( \sigma, V_0, V_1 \right) \mid \exists \left( m_0, r_0, m_1, r_1 \right)\right.} \\
&&\left.{\rm ~s.t.~} V_i = \mathsf{Com}(m_i; r_i) \forall i \in \{0, 1\} \wedge m_0 = m_1 \right\}.
\end{eqnarray*}

The relation $\mathcal{R}_{\mathsf{ComEq}}$ expresses knowledge of openings to two commitments whose messages are \textit{equal}. The protocol is trivial; it essentially runs a Schnorr protocol on the difference between $V_0$ and $V_1$. This protocol can be derived as a special case of the ``basic protocol'' of Terelius and Wikstr\"{o}m \cite[Prot.~1]{Terelius:2012aa}; it also appears implicitly in the techniques of \cite[Sec.~G]{Bunz:2020aa}.
The prover sends $K = h^k$ for a random $k\leftarrow \mathbb{F}_q$. The verifier chooses  a random $x \leftarrow \mathbb{F}_q$. The prover replies with $s:=(r_0-r_1)\cdot x+k$ and the verifier accepts iff $h^s = (V_0 \cdot V_1^{-1})^s \cdot K$. 


\paragraph{Bit proof.}
We now recall two important protocols from Groth, and Kohlweiss \cite{Groth:2015aa}. The first, \cite[Fig.~1]{Groth:2015aa}, demonstrates knowledge of a secret opening $(m, r)$ of a public commitment $V$ for which $m \in \{0, 1\}$. We use the protocol $\mathsf{BitProof}$ from~\cite{Groth:2015aa}. The bit-proof protocol concerns the following relation: 

\begin{eqnarray*}
\lefteqn{\hspace{-14ex}\mathcal{R}_{\mathsf{BitProof}} = 
\left\{ \left( \sigma, V \right) \mid \exists~  (m, r), {\rm~s.t.~} V = \mathsf{Com}(m; r)\right.}\\
&& \left. \wedge m \in \{0, 1\} \right\} \ .
\end{eqnarray*}

%

\paragraph{One-out-of-many proofs.}
The second protocol---namely, \cite[Fig.~2]{Groth:2015aa}---demonstrates knowledge of a secret \textit{element} $V_l$ of a public list $(V_0, \ldots , V_n)$---as well as a secret randomness $r$---for which $V_l = \mathsf{Com}(0; r)$. Though \cite{Groth:2015aa} uses the former protocol merely as a subroutine of the latter, we will have occasion to make \textit{independent} use of both protocols. For a fixed integer $n$, we have the relation:
\begin{eqnarray*}
\lefteqn{\hspace{-6ex}\mathcal{R}_{\mathsf{OneMany}} = \left\{ \left( \sigma, (V_0, \ldots , V_n)\right) \mid \exists l \in [0,\ldots,n], r,\right.}\\
&& \hspace{+4ex} \left. {\rm ~s.t.~} V_l = \mathsf{Com}(0; r) \right\} \ ,
\end{eqnarray*}
as well as the proof protocol $\mathsf{OneMany}$, which we use exactly as written in \cite[Fig.~2]{Groth:2015aa} without modification. 

%

In practice, we choose $n$ which is one short of a power of 2 (i.e., $n = 2^m - 1$ for some $m$). The reason for this unusual indexing scheme will become clear below.

  \section{Proof of Theorem~\ref{thm:correctness}}
\label{appx:thm:correctness}
\begin{theorem}[Theorem~\ref{thm:correctness}, restated]
Suppose $n$ is such that $2 + 4 \cdot (2^n - 1) < q$. Then for every $v_0,v_1 \in \mathbb{F}_q$
\begin{eqnarray*}
\lefteqn{(v_0 \leq v_1, v_1\leq v_0) = }\\
&& \mathsf{ComparisonFinal}\left(\mathsf{ComparisonInitial}\left(\vec{v_0},\vec{v_1}\right)\right) \ ,
\end{eqnarray*}
where $\vec{v_0},\vec{v_1}$ are the bit-decomposition of $v_0,v_1$, respectively. 
Moreover, for every $i\in \bit$:
\begin{MyItemize}
\item If $v_i \leq v_{1-i}$ then there exists exactly one $j \in \{0,\ldots,n\}$ such that $d_{i,j}=0$. Moreover, $j$ is distributed uniformly in $\{0,\ldots,n\}$, and each $d_{i,k}$ for $k \neq j$ is distributed uniformly in $\mathbb{F}_q\setminus \{0\}$. 
\item If $v_0 >v_{1-i}$ then the vector $(d_{i,0,},\ldots,d_{i,n})$ is distributed uniformly in $\mathbb{F}_q^{n+1}\setminus \{0\}^{n+1}$. 
\end{MyItemize}
\end{theorem}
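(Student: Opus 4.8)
The plan is to track the values $c_{0,j}$ and $c_{1,j}$ computed in Algorithm~\ref{alg:initial} across the loop, exploiting the fact that they are honest $\mathbb{F}_q$-valued quantities (not reduced modulo $q$ in any lossy way) precisely because of the hypothesis $2 + 4\cdot(2^n-1) < q$. First I would establish a bound: the accumulator $w_{\mathsf{accum}}$ after processing bits $0,\ldots,j-1$ equals $\sum_{k<j} 2^{2+k}(v_{0,k}-v_{1,k})$, which lies in the integer interval $[-4(2^{n-1}-1),\,4(2^{n-1}-1)]$ in the worst case, and more importantly, so long as the high bits of $v_0$ and $v_1$ agree it is exactly $0$. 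Then $c_{0,j} = 1 + v_{0,j} - v_{1,j} + w_{\mathsf{accum}}$ and $c_{1,j} = -1 + v_{0,j}-v_{1,j}+w_{\mathsf{accum}}$, and since every intermediate sum has absolute value at most $2 + 4(2^n-1) < q$, there is no wraparound: $c_{b,j} = 0$ in $\mathbb{F}_q$ if and only if the corresponding integer is $0$.

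Next I would do the case analysis. Let $j^*$ be the first index where $v_{0,j}\neq v_{1,j}$ (or $j^*=n$ if $v_0=v_1$). For $j<j^*$ we have $w_{\mathsf{accum}}=0$ and $(c_{0,j},c_{1,j})=(1,-1)$, both nonzero. At $j=j^*$: if $(v_{0,j^*},v_{1,j^*})=(0,1)$ then $v_0<v_1$ and $(c_{0,j^*},c_{1,j^*})=(0,-2)$; if $(1,0)$ then $v_0>v_1$ and $(c_{0,j^*},c_{1,j^*})=(2,0)$. For $j>j^*$, the accumulator is now nonzero; I need to show $c_{b,j}\neq 0$ for both $b$. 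Here $w_{\mathsf{accum}}$ after $j^*$ has the sign of $v_{0,j^*}-v_{1,j^*}$ and magnitude at least $2^{2+j^*}$, which dominates the contribution of all later bits plus the $\pm 1 + v_{0,j}-v_{1,j}$ term (whose absolute value is at most $2$); this is exactly the argument from \cite{Wagh:2019aa} adapted to the unsquared, rescaled differences, and it uses the geometric-series bound $\sum_{k>j^*} 2^{2+k} < 2^{2+j^*}\cdot(\text{something})$ carefully so that the sign is preserved. Finally $c_{0,n}=c_{1,n}=w_{\mathsf{accum}}$ after the full loop, which is $0$ iff $v_0=v_1$. Collecting: exactly one of $c_{0,0},\ldots,c_{0,n}$ is $0$ iff $v_0\leq v_1$ (and none otherwise), symmetrically for the $c_1$ row; passing through the re-randomization $d_{b,j}=s_{b,j}\cdot c_{b,\pi(j)}$ with $s_{b,j}\in\mathbb{F}_q\setminus\{0\}$ preserves the zero/nonzero pattern, so $\mathsf{ComparisonFinal}$ returns $(v_0\leq v_1, v_1\leq v_0)$.

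For the distributional claims, I would argue that conditioned on $v_i\leq v_{1-i}$, the unique zero among $c_{i,0},\ldots,c_{i,n}$ sits at a fixed position, and applying the uniformly random permutation $\pi\in\mathbf{S}_{n+1}$ moves it to a uniformly random position in $\{0,\ldots,n\}$; the remaining $n$ coordinates are each multiplied by an independent uniform $s_{i,j}\in\mathbb{F}_q\setminus\{0\}$, hence are i.i.d.\ uniform on $\mathbb{F}_q\setminus\{0\}$. When $v_i > v_{1-i}$, all $n+1$ values $c_{i,j}$ are nonzero, so after scaling by independent uniform nonzero $s_{i,j}$ the vector is uniform on $(\mathbb{F}_q\setminus\{0\})^{n+1} = \mathbb{F}_q^{n+1}\setminus\{0\}^{n+1}$ in the coordinatewise sense stated. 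The main obstacle is the case $j>j^*$: one must verify, with the geometric bound on the tail $\sum_{k=j^*+1}^{n-1} 2^{2+k}$ together with the $\pm 2$ slack from the affine term, that no cancellation can drive $c_{b,j}$ to zero, and that the chosen threshold $2 + 4\cdot(2^n-1) < q$ is exactly what makes every intermediate quantity lie strictly inside a single fundamental domain modulo $q$ so that ``zero in $\mathbb{Z}$'' and ``zero in $\mathbb{F}_q$'' coincide throughout.
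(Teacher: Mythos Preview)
Your overall structure matches the paper's proof, and the distributional part is fine. The genuine gap is in the step $j>j^*$. You assert that the leading nonzero term $2^{2+j^*}w_{j^*}$ ``dominates the contribution of all later bits'' and appeal to a geometric-series bound ``$\sum_{k>j^*}2^{2+k}<2^{2+j^*}\cdot(\text{something})$'' to preserve the sign. This is backwards: the weights $2^{2+k}$ are \emph{increasing} in $k$, so every subsequent term has strictly larger magnitude than the first, and the sign of $w_{\mathsf{accum}}$ is certainly not preserved in general (e.g.\ $w_0=1,\,w_1=-1$ gives $4-8=-4$). The domination argument you describe is the one that works in \cite{Wagh:2019aa} precisely because there $w_k\in\{0,1\}$, so the running sum is monotone; once you allow $w_k\in\{-1,0,1\}$ that reasoning collapses.

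The paper fills this step with a $2$-adic (divisibility) argument rather than a magnitude one: if $k^*$ is the smallest index with $w_{k^*}\neq 0$, then
\[
\sum_{k<j}2^{2+k}w_k \;\equiv\; 2^{2+k^*}w_{k^*}\;\equiv\;\pm 2^{2+k^*}\pmod{2^{3+k^*}},
\]
because every later summand is divisible by $2^{3+k^*}$. Since $2^{2+k^*}\geq 4$, this residue class never meets $\{-2,-1,0,1,2\}$, so $c_{0,j}$ and $c_{1,j}$ are nonzero as integers; your own no-wraparound bound $2+4(2^n-1)<q$ then transfers this to $\mathbb{F}_q$. Replace the sign-domination paragraph with this residue argument and the rest of your plan goes through.
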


\begin{proof}
We recall that Algorithms \ref{alg:initial} and \ref{alg:final} are correct so long as, for each $j \in \{0, \ldots , n\}$, the quantities $c_{0, j}$ and $c_{1, j}$ equal 0 if and only if, respectively, $v_{0, j} < v_{1, j}$ and $v_{0, j} > v_{1, j}$ (if $j < n$) and moreover it holds that $v_{0, k} = v_{1, k}$ for each $k < j$ .

To show the nontrivial ``only if'' implications, it suffices to show that, for each $j$, the quantities $c_{0, j}$ and $c_{1, j}$ are non-zero so long as \textit{either} some $w_k \neq 0$, 
for $k < j$, or else the appropriate bit inequality (i.e., either $v_{0, j} < v_{1, j}$ or $v_{0, j} > v_{1, j}$) fails to hold. The special case in which all $w_k$ are 0 is trivially settled; in this case, $c_{0, j}$ and $c_{1, j}$ are easily seen to be nonzero when $v_{0, j} \geq v_{1, j}$ and $v_{0, j} \leq v_{1, j}$, respectively (this is guaranteed so long as $-2$ and $2$ are unequal to $0$ in $\mathbb{F}_q$, true so long as $q > 2$). To show the claim, then, we fix an arbitrary $j \in \{0, \ldots, n\}$, and consider an arbitrary set of coefficients $(w_k)_{k < j}$ for which each $w_k \in \{-1, 0, 1\}$ and \textit{at least one $w_k$ is nonzero}. It suffices to demonstrate, in this setting, that the sum $\sum_{k < j} 2^{2 + k} \cdot w_k$ cannot attain any of the values $\{-2, -1, 0, -1, 2\}$.

We first show that this property holds ``over $\mathbb{Z}$''; that is, we interpret all quantities (including the input bits $v_{0, 0}, \ldots, v_{0, n - 1}, v_{1, 0}, \ldots, v_{1, n - 1}$, as well as all scalars) as integers. We denote by $k^*$ the smallest index $k \in \{0, \ldots , j - 1\}$ for which $w_{k^*} \neq 0$. Clearly the sum $\sum_{k = 0}^{k^*} 2^{2 + k} \cdot w_k \not \in \{-2, -1, 0, -1, 2\}$; in fact, this sum equals either positive or negative $2^{2 + k^*}$ (whose absolute value is at least $4$). In particular, this sum's residue class modulo $2^{3 + k^*}$ is $2^{2 + k^*}$.

We now argue that $\sum_{k = 0}^{j - 1} 2^{2 + k} \cdot w_k \not \in \{-2, -1, 0, -1, 2\}$ as an integer. Informally, adding higher powers of two cannot change the sum's residue class modulo $2^{3 + k^*}$, which we have already seen equals $2^{2 + k^*}$. More formally,
\begin{equation*}\sum_{k = 0}^{j - 1} 2^{2 + k} \cdot w_k \equiv \sum_{k = 0}^{k^*} 2^{2 + k} \cdot w_k \equiv 2^{2 + k^*} \pmod{2^{3 + k^*}};\end{equation*}
this follows from the equality $\sum_{k = k^* + 1}^{j - 1} 2^{2 + k} \cdot w_k \equiv 0 \pmod{2^{3 + k^*}}$ (which actually holds regardless of the $w_k$). As the above equation's right-hand residue class modulo $2^{3 + k^*}$ is necessarily unequal to those of $-2$, $-1$, $0$, $1$, and $2$, the sum on its left-hand side is necessarily unequal to each of these values \textit{as an integer}.

Finally, in light of the above, ensuring that $\sum_{k = 0}^{j - 1} 2^{2 + k} \cdot w_k \not \in \{-2, -1, 0, -1, 2\}$ in $\mathbb{F}_q$ amounts simply to showing that $\sum_{k = 0}^{j - 1} 2^{2 + k} \cdot w_k \in \{-q + 3 \ldots , q - 3\}$ as an integer. In the worst case, this sum has an absolute value of $\sum_{k = 0}^{n - 1} 2^{2 + k} = 4 \cdot (2^n - 1)$; the statement of the first part of the theorem easily follows.

For the second part of the theorem, if $v_i \leq v_{1-i}$ then there exists exactly one $j \in \{0,\ldots,n\}$ such that $c_{i,j}=0$, and we have seen that this is the first $j$ for which for every $k < n$ it holds that $v_{0,k} \neq v_{1,k}$ (in case $v_i=v_{1-i}$, we have that $c_{i,n}=0$). The algorithm shuffles the vector $c_{i,0},\ldots,c_{i,n}$ and then multiplies each element with an independent random scalar. Thus, if $v_i \leq v_{1-i}$ exactly one element would be $0$, and all the rest are random in $\mathbb{F}_q \setminus \{0\}$. If $v_i > v_{i-1}$, then all elements in $c_{i,0},\ldots,c_{i,n}$, so after shuffling and multiplying with independent random scalars, the vector $d_{i,0},\ldots,d_{i,n}$ distributes uniformly in $\mathbb{F}_q^{n+1} \setminus \{0\}^{n+1}$. 
\end{proof}

\section{Proof of Theorem~\ref{thm:semi}}
\label{appx:thm:semi}

\begin{proof}
Regarding complexity, each party holds an input of length $n$ bits. It secret shares each one of the bits using a field of size $2^n$, and therefore it sends $O(n^2)$ bits. In addition, the parties run a secure coin-tossing protocol to toss $\lambda$ coins, and its complexity is $O(\lambda)$. In total we get $O(n^2 + \lambda)$.

We consider the protocol in an idealized (i.e., a hybrid) model where $\Pi_{\sf CT}$ is a secure coin-tossing protocol. Replacing it with a secure protocol follows by a composition theorem. Since Functionality~\ref{func:compare} is deterministic, we can separately show correctness and privacy, and we do not have to consider the joint distribution of the view of the adversary and the output of all parties. We start with showing correctness. 

\paragraph{Correctness.}
We first analyze the security of a modified protocol in which the two parties run $\pi_{\sf CT}$ and receive all the randomness they need for running Algorithm~\ref{alg:initial}. Note that this means that the parties do not use the pseudorandom generator at all throughout the execution. 

All operations in Algorithm~\ref{alg:initial} on the values $\left(v_{0,0},\ldots,v_{0,n-1}\right), \left(v_{1,0},\ldots,v_{1,n-1}\right)$ are linear, and therefore working on shares of those vectors, applying Algorithm~\ref{alg:initial} on each one of the shares separately, and then reconstructing, results with the same output as running Algorithm~\ref{alg:initial} directly on and the values $\left(v_{0,0},\ldots,v_{0,n-1}\right), \left(v_{1,0},\ldots,v_{1,n-1}\right)$. $P^*$ then runs Algorithm~\ref{alg:final} on the reconstructed values, and from Theorem~\ref{thm:correctness}, the result $(b_0,b_1)$ is $(v_0 \leq v_1,v_1\leq v_0)$, exactly as computed by the functionality. 

When replacing the random string with the result of $G(s)$, we get that the output is computationally-indistinguishable from the output of the functionality. 

\paragraph{Privacy.}
Here we separate to two different case. The first case is where $P_i$ is corrupted for some $i \in \bit$, and the second is when $P^*$ is corrupted.

\paragraph{\boldmath Case I: $P_i$ is corrupted.} The simulator receives the input ($v_i$) and the output ($b_0,b_1$) of the corrupted party, $P_i$. It has to generate the view of $P_i$. The view of $P_i$ consists of the following: 
\begin{MyEnumerate}
\item The internal random coins of $P_i$.
\item The result of the coin-tossing protocol, $s$. 
\item The random shares $\left( \share{v_{1 - i, j}}_i \right)_{j = 0}^{n - 1}$.
\item The output $b_i$ as sent by $P^*$. 
\end{MyEnumerate}
Clearly, all except for the last two items are independent random coins, which are easily simulatable. For the last item, the simulator receives them as input. Simulating the view of $P_i$ follows easily.

\paragraph{\boldmath Case II: $P^*$ is corrupted.}
The simulator receives the input and output of $P^*$, which consists of the two bits $(b_0,b_1)=(v_0 \leq v_1, v_1 \leq v_0)$. The view of $P^*$ consists of the shares 
$\left( \share{d_{0, j}}_i \right)_{j = 0}^n, \left( \share{d_{1, j}}_i \right)_{j = 0}^n$ for both $i=0$ and $i=1$.  $P^*$ in the protocol then reconstructs for every $j \in \{0,\ldots,n\}$ the values $d_{0, j}, d_{1,j}$ where 
\begin{eqnarray*}
\lefteqn{\left(d_{0, 0}, \ldots , d_{0, n - 1}\right), \left(d_{0, 0}, \ldots , d_{0, n - 1}\right)} \\
&& = \textsf{ComparisonInitial}\left( \left(v_{0, 0}, \ldots , v_{0, n - 1}\right),~\right.\\
&& \hspace{+25ex}\left. \left(v_{1, 0}, \ldots , v_{1, n - 1}\right)\right) \ .
\end{eqnarray*}
From Theorem~\ref{thm:correctness} it follows that all the values $d_{0, j}, d_{1,j}$ are uniform distributed in $\mathbb{F}_q\setminus \{0\}$ unless for one position, which reveals whether $v_0 \leq v_1$ and/or $v_1 \leq v_0$. In any case, simulating the values of $d_{0,0},\ldots,d_{0,n}$, $d_{1,0},\ldots,d_{n,0}$ is easy. The simulator then generates additive shares for those values, as obtained from $P_0$ and $P_1$ (note that the shared randomness between $P_0$ and $P_1$ is not in the view of the adversary).  

We remark that the view of $P^*$ also consists of the entire communication between $P_0$ and $P_1$, however, since it has no information on the private shared key between $P_0$ and $P_1$, all encrypted information between the two parties is indistinguishable from random (as follows from the secrecy of the encryption scheme), and therefore simulating this communication is straightforward. 
\end{proof}

\section{Malicious Security of Protocol~\ref{prot:malicious}}
\label{sec:malicious-proof}
\paragraph{Computing minimum on committed inputs.}
We slightly modify the protocol and the functionality, and this is the way we will use it to implement Functionalities~\ref{func:b2c} and~\ref{func:multi}. In particular, we assume that the server already holds commitments of the input values $v_0,v_1$, and the parties have the openings. Looking ahead, the clients first commit to their orders to the server; Later, when the server pairs two clients and we run the matching protocol, the bank provides the commitments to the counterparties. 
We have:
\begin{mdframed}
\begin{functionality}
[$\fmincom$ -- Computing minimum on committed inputs]
\label{func:fmincom}
Consider two players, $P_0$ and $P_1$, and a server $P^*$. 


\noindent
{\bf The functionality:}
\begin{enumerate}[nosep,leftmargin=*]
\item {\bf \boldmath In case of a corrupted $P_i$:} The honest $P^*$ inputs commitments $(V_0,V_1)$, and the honest $P_{1-i}$ sends $(v_{1-i},r_{1-i})$ such that $V_{1-i} = {\sf Com}(v_{1-i};r_{1-i})$. Send the corrupted $P_{i}$ the commitments $(V_0,V_1)$ and receive back $(v_i,r_i)$. Verify that $V_i = {\sf Com}(v_i; r_i)$, and otherwise abort. 
\item {\bf \boldmath In case of a corrupted $P^*$:} Receive from the each honest $P_j$ the input $(v_j,r_j)$, and  compute $V_j = {\sf Com}(v_j;r_j)$ for $j \in \{0,1\}$. Send $(V_0,V_1)$ to $P^*$. 
\item Compute $b_0 := (v_0 \leq v_1)$ and $b_1 := (v_1 \leq v_0)$. 
\item Send $b_0$ to $P_0$, $b_1$ to $P_1$ and $(b_0,b_1)$ to $P^*$. In addition, send to all parties $\min\{v_0,v_1\}$. 
\end{enumerate}

\end{functionality}
\end{mdframed}
To implement this functionality, we slightly modify Protocol~\ref{prot:malicious}: The parties do not send the commitments in the first step. Instead, $P^*$ sends both parties both commitments; In the last round, if a party outputs ${\sf true}$, then it sends a new commitment $V_i'$, and proves using ${\sf ComEq}$ that the two commitments are the same, and opens $V_i'$ to $P^*$. $P^*$ verifies that the commitments are the same, and sends this information also to $P_{i-1}$ as long as $b_0 \neq b_1$ (i.e., if they are both $1$, then both parties already know the minimum). Denote the modified protocol as $\pimincom$. 

\begin{theorem}
$\pimincom$ securely computes the $\fmincom$ (Functionality~\ref{func:fmincom}) in the presence of a malicious adversary who corrupts a single party. Each party has to send or receive $O(n(n+\lambda))$ bits where $n$ is the length of the input and $\lambda$ is the (computational) security parameter. 
\end{theorem}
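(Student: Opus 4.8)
The plan is to establish the two halves of the statement---the $O(n(n+\lambda))$ communication bound and malicious security---separately, and to carry out the security argument by exhibiting a simulator for each single-party corruption. I would work throughout in a hybrid model in which the three zero-knowledge relations $\mathcal{R}_{\sf ComEq}$, $\mathcal{R}_{\sf BitProof}$, $\mathcal{R}_{\sf OneMany}$ are handled by the ideal functionality $\cF_{\sf ZK}$, the coin toss $\Pi_{\sf CT}$ by $\cF_{\sf CT}$, and the client-to-client link by an authenticated channel whose ciphertexts are pseudorandom; the full protocol $\pimincom$---obtained from Protocol~\ref{prot:malicious} by the stated modifications---is then recovered by the composition theorem. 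The complexity bound is a direct count: per input bit a party produces a constant number of $O(\lambda)$-bit commitments and one $O(n+\lambda)$-bit opening, plus one bit-proof; it forwards the $\Theta(n)$-sized array of commitments, runs one coin toss and one $\mathsf{ComEq}$ proof ($O(\lambda)$ each), sends $P^*$ the $\Theta(n)$ field elements output by $\mathsf{ComparisonInitial}$ together with their randomnesses (each $O(n+\lambda)$ bits, as $q$ must satisfy $q>4(2^n-1)$ while also supporting the commitment scheme), and in the output round sends one commitment, one $\mathsf{ComEq}$ proof and one $\mathsf{OneMany}$ proof of $O(\lambda\log n)$ bits. Summing yields $O(n(n+\lambda))$.

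For a corrupted client $P_i$ the simulator $\Sim$ plays $P^*$ and $P_{1-i}$. It obtains $(V_0,V_1)$ from $\fmincom$ and relays them, then runs $\Pi_{\sf CT}$ honestly. It simulates $P_{1-i}$'s first-round messages \emph{without} knowing $(v_{1-i},r_{1-i})$: it opens uniformly random field-element shares $\langle v_{1-i,j}\rangle_i$ to $P_i$, sends commitments to arbitrary values for the positions it does not open, and---since proofs are handled by $\cF_{\sf ZK}$---simply lets $\cF_{\sf ZK}$ accept $P_{1-i}$'s $\mathsf{ComEq}$ and $\mathsf{BitProof}$ statements. To extract, $\Sim$ reads the witnesses $P_i$ feeds to $\cF_{\sf ZK}$: the bit-proofs reveal the committed bits $v_{i,j}$, hence $v_i=\sum_j 2^{n-1-j}v_{i,j}$, while the $\mathsf{ComEq}$ witness yields $r_i$ with $V_i=\mathsf{Com}(v_i;r_i)$; together with the opened cross-shares these force mutual consistency, so $\fmincom$'s opening test succeeds exactly when the honest verifier's checks would. $\Sim$ sends $(v_i,r_i)$ to $\fmincom$, receives $(b_i,\min)$, and finishes by having $P^*$ return an accepting $\mathsf{OneMany}$ verdict iff $b_i=1$ and by forwarding $\min$ in the output round when $b_i=0$. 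Indistinguishability reduces to the hiding of the commitment scheme, the secrecy of additive sharing, and the pseudorandomness of ciphertexts.

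For a corrupted $P^*$ the simulator plays $P_0$ and $P_1$, receives $(V_0,V_1)$ from $\fmincom$, insists that $P^*$ relay exactly these (aborting otherwise, consistently with the honest clients, who can only produce $\mathsf{ComEq}$ proofs relative to their own registered commitments), and routes pseudorandom strings for all inter-client traffic. The delicate part is the cleartext each $P_i$ sends to $P^*$: the additive halves of the $\mathsf{ComparisonInitial}$ outputs and of the randomnesses. Here I would invoke the ``moreover'' part of Theorem~\ref{thm:correctness}: given only $(b_0,b_1)$ from $\fmincom$, the reconstructed vector $(d_{i,0},\ldots,d_{i,n})$ is a uniformly placed zero among uniform nonzero entries when $b_i=1$, and a uniform nonzero vector when $b_i=0$, in either case independent of $v_0,v_1$; and because $\mathsf{ComparisonInitial}$ is $\mathbb{F}_q$-linear with fresh nonzero scalars, the reconstructed randomnesses are, except on a negligible-probability event, independent uniform field elements. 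Hence $\Sim$ samples the $d$'s and $s$'s from these distributions and shares each additively. It then lets $P^*$ drive the output round; whatever (possibly malformed) data $P^*$ produces there, the honest clients' accept/reject bits and final $\min$ are deterministic functions of the transcript that $\Sim$ reproduces, and a deviation that makes a client reject or relays an inconsistent $\min$ surfaces only as an abort, which I would fold into a security-with-abort reading of $\fmincom$.

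The main obstacle is the corrupted-$P^*$ case: arguing that the triple-parallel run of $\mathsf{ComparisonInitial}$---on shares, on randomnesses, and on the other party's commitments, all under shared coins---reveals nothing to $P^*$ beyond $(b_0,b_1)$ and is faithfully simulatable. This requires combining the distributional guarantee of Theorem~\ref{thm:correctness} with the homomorphic structure to show that the reconstructed randomnesses, and hence the re-derived commitments $D_{i,j}$ that enter the $\mathsf{OneMany}$ statement, carry no extra information, up to the negligible-probability degeneracies in which some internal linear form vanishes. A secondary point requiring care is delimiting precisely the selective-abort behaviour a malicious $P^*$ can induce in the output round and reflecting it in the ideal functionality.
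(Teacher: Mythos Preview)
Your proposal is correct and follows essentially the same approach as the paper: work in the $(\cF_{\sf ZK}(\mathcal{R}_{\sf ComEq}),\cF_{\sf ZK}(\mathcal{R}_{\sf BitProof}),\cF_{\sf ZK}(\mathcal{R}_{\sf OneMany}),\cF_{\sf CT})$-hybrid, extract a corrupted client's input from the witnesses it feeds to $\cF_{\sf ZK}$, simulate the honest client's unopened commitments via hiding, and simulate a corrupted $P^*$'s view by sampling the reconstructed $d$-vectors according to the distributional clause of Theorem~\ref{thm:correctness} and then additively sharing.

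Two small differences worth noting. First, for corrupted $P_i$ the paper's simulator does not commit to ``arbitrary values'' for the unopened positions; it chooses the $V_{1-i,j,1-i}$ so that the weighted product $\prod_j (V_{1-i,j,0}\cdot V_{1-i,j,1})^{2^{n-1-j}}$ equals $V_{1-i}$ exactly, and then runs an explicit three-experiment hybrid ($\mathsf{Exp}_1,\mathsf{Exp}_2,\mathsf{Exp}_3$) to pass from real to ideal. Your route---commit to dummies and let $\cF_{\sf ZK}$ accept---is equally valid in the hybrid model and arguably cleaner, but you should still articulate the hybrid step that swaps real unopened commitments for dummies under hiding. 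Second, for corrupted $P^*$ the paper simply declares the reconstructed randomnesses $s_{b,j}$ to be fresh uniform and shares them, without the caveat you raise; your explicit acknowledgment that the triple-parallel run under shared coins is the delicate point, and that one must rule out the negligible-probability degeneracies in the linear forms, is more careful than the paper's own treatment.
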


\begin{proof}
We prove the security of the protocol in the hybrid model, where the underlying functionalities are 
${\cal F}_{\sf ZK}({\cal R})$ (see Section~\ref{appendix:sec:zk}). Using the composition theorem, we later conclude security in the plain model. 

%
We wrote the protocol in the plain model for brevity; We convert its description to a protocol in  the $\left({\cal F}_{\sf ZK}({\cal R}_{\sf ComEq}),{\cal F}_{\sf ZK}({\cal R}_{\sf BitProof}),{\cal F}_{\sf ZK}({\cal R}_{\sf OneMany}),{\cal F}_{\sf CT}\right)$-hybrid model and prove that the protocol is secure in this model. We discussed how to implement those functionalities in Section~\ref{appendix:sec:zk}), and using composition theorem we conclude that the protocol in the plain model is secure.

We need to prove that for any malicious adversary $\Adv$, the
view generated by the simulator $\Sim$ above is indistinguishable from the output in the real (hybrid)-model, namely:
$$\big \{\ideal_{\cF,\Sim(z),i}(u_0, u_1, \lambda)\big \}\approx_{comp}\big \{ \real_{\pi,\Adv(z),i}(u_0, x_1, \lambda)\big \}$$

We start with a simulator for the case where $P_i$ is corrupted, for $i \in \bit$. We later show a simulator for the case where $P^*$ is corrupted. Without loss of generality, assume that $i=0$. 

\paragraph{The simulator $\Sim$:}
\begin{MyEnumerate}
\item The simulator invokes the adversary $\Adv$ on an auxiliary input $z$
\item The simulator receives from the trusted party the commitment $V_0,V_1$. 
\item Simulate ${\cal F}_{\sf CT}$: Choose a random $s$ and give it to the adversary. If the adversary does not reply with ${\sf OK}$, then send $\bot$ to the trusted party.
\item Choose random shares for $\share{v_{1,j}}_0$ uniformly at random, and then compute commitments $V_{1,j,0}$. 
\item Choose random commitments $V_{1,j,1}$ such that $\prod_{j=0}^{n-1}(V_{1,j,0}\cdot V_{1,j,1})^{2^{n-1-j}} = V_1$. 
\item Send $\Adv$ the shares $\share{v_{1,j}}_0$, the full array $(V_{1,j,0},V_{1,j,1})_{j \in N}$, and open the commitments $V_{1,j,0}$ for $j \in N$. Moreover, simulate ${\cal F}_{\sf ZK}({\cal R}_{\sf ComEq}),{\cal F}_{\sf ZK}({\cal R}_{\sf BitProof})$ by sending ${\sf accept}$ to $\Adv$.  
\item Receive from $\Adv$ the full array $(V_{0,j,0},V_{0,j,1})_{j \in N}$ and the openings $\share{v_{1, j}}_1$. Check that the openings indeed open $V_{1, j, 1}$, for $j \in N$.
\item Simulate the ${\cal F}_{\sf ZK}({\cal R}_{\sf ComEq})$ invocation of Step~\ref{stp:malicious:comp-eq}: receive from $\Adv$ as part of its invocation the witness for $V_i$, which consists of $(v_i,r_i)$, and the witness $(v_i,r_i')$ of $\prod_{j = 0}^{n - 1} \left( V_{i, j, 0} \cdot V_{i, j, 1} \right)^{2^{n - 1 - j}}$. Verify that indeed $V_i = {\sf Com}(v_i;r_i)$ and that the ${\sf Com}(v_i,r_i')=\prod_{j = 0}^{n - 1} \left( V_{i, j, 0} \cdot V_{i, j, 1} \right)^{2^{n - 1 - j}}$. If these conditions hold, return ${\sf accept}$ to ${\sf Adv}$; otherwise, return ${\sf reject}$. 
\item Simulate the ${\cal F}_{\sf ZK}({\cal R}_{\sf BitProof})$ invocation of Step~\ref{stp:malicious:bit-proof} of each instance $V_{i,j,0}\cdot V_{i,j,1}$ for every $j \in N$; i.e., from each invocation also receive a witness and check that the witness satisfies the relation ${\cal R}_{\sf BitProof}$. For each check return ${\sf accept}$ or ${\sf reject}$. 
\item If all checks pass, send $v_i$ to the trusted party, where $v_i$ was received when simulating the ${\cal F}_{\sf ZK}({\cal R}_{\sf ComEq})$ functionality in the previous steps. Otherwise, send $\bot$. 
\item Receive $(b_0,b_1,v)$ from the trusted party as $P_i$'s output.
\item Run Algorithm~\ref{alg:initial} on the shares provided to $\Adv$ in the previous steps to obtain with randomness $s$ (which was the result of the ${\cal F}_{\sf CT}$) to obtain $\left( \share{d_{0, j}}_i \right)_{j = 0}^n, \left( \share{d_{1, j}}_i \right)_{j = 0}^n$ and the randomnesses $\left( s_{0, j, i} \right)_{j = 0}^n, \left( s_{1, j, i} \right)_{j = 0}^n$ and $\left(\left( D_{0, j, 1 - i} \right)_{j = 0}^n, \left( D_{1, j, 1 - i} \right)_{j = 0}^n\right)$. 
\item Receive from $\Adv$ the shares  $\left( \share{d_{0, j}}_i \right)_{j = 0}^n, \left( \share{d_{1, j}}_i \right)_{j = 0}^n$ and the randomnesses $\left( s_{0, j, i} \right)_{j = 0}^n, \left( s_{1, j, i} \right)_{j = 0}^n$ as in Step~\ref{stp:malicious-send-to-P-star}. Verify that those are the same values as computed. 
\item Compute $D_{i, j} := \mathsf{Com}(\share{d_{i, j}}_i; s_{i, j, i}) \cdot D_{i, j, 1 - i}$, for each $j \in N$, and then simulate invoking ${\cal F}_{\sf ZK}({\cal R}_{\sf OneMany})$ while sending $\Adv$ to output $1$. 
\item If $b_1 = 1$, then choose a random $V_1' = {\sf Com}(v, r')$. Send to $P_0$ the commitment $V_1'$, as well as ${\sf accept}$ for the ${\cal F}_{\sf ZK}({\cal R}_{\sf ComEq})$ on the instance $(V_1,V_1')$. 
\item If $b_0=1$, then receive from $\Adv$ commitment $V_0'$, and simulate an invocation of ${\cal F}_{\sf ZK}({\cal R}_{\sf ComEq})$ where expect to receive the opening of $V_0$ and $V_0'$. If indeed the openings are correct and $V_0'$ is a commitment of $v_0$, then send ${\sf OK}$ to the trusted party. Otherwise, send $\bot$. 
\item The simulator outputs the view of the adversary in the above execution. 
\end{MyEnumerate}

To show indistinguishability between the real and ideal execution, we consider the following experiments:
\begin{MyItemize}
\item ${\sf Exp}_1$: This is the real execution. The adversary $\Adv$ is run with the honest $P_{1}$ and $P^*$. The output of the experiment is the view of the adversary and the output of the honest parties $P_{1}$ and $P^*$. 
\item ${\sf Exp}_2$: We run the simulator $\Sim$ with the trusted party of Functionality~\ref{func:fmincom} with the following difference: When the honest party $P_{1}$ sends $v_{1}$ to the trusted party, the trusted party delivers $v_{1}$ to $\Sim$. Then, $\Sim$ uses $v_{1}$ as the input of $P_{1}$ to simulate the first steps of the protocol instead of choosing random commitments that match $V_1$ as in the simulation. The output of this experiment is the output of all honest parties as determined by the trusted party and the output of the simulator.  
\item ${\sf Exp}_3$: this is just as the ideal execution: We run the simulator $\Sim$ as prescribed above. The output of this experiment is the of all honest parties as determined by the trusted party and the output of the simulator. 
\end{MyItemize}

We claim that the output of ${\sf Exp}_1$ is computationally indistinguishable from ${\sf Exp}_2$. In particular, since the simulator receives the input of $P_{1}$, it essentially runs the honest $P_{1}$ while exactly simulating all ${\cal F}_{\sf ZK}$ functionalities for the different relations in the protocol. The output of the protocol in the real execution is determined from the shares that $P^*$ receives, which it then invokes Algorithm~\ref{alg:final}. In the simulation, the simulator uses the input $v_0$ as extracted when invoking the ${\cal F}_{\sf ZK}({\cal R}_{\sf CompEq})$ functionality, sending $v_0$ to the trusted party, which then computes $(b_0,b_1)=(v_0 \leq v_1, v_1\leq v_0)$. From the binding property of the commitment scheme, it is infeasible for the adversary to come up with a different $v_0$ that does not match $V_0$. Correctness of the protocol from similar reasoning as in the semi-honest case, and from linearity of Algorithm~\ref{alg:initial}.

 We then claim that the output of ${\sf Exp}_2$ is computationally indistinguishable from ${\sf Exp}_3$. The only difference is the view of the adversary are the commitments of $V_{1,j,1}$ that are never opened. Besides of those values, the entire view of the adversary is exactly the same (those are just ${\sf accept}$ messages received from the ${\cal F}_{\sf ZK}$ functionality). However, from the hiding property of the commitment scheme, the view of the adversary is indistinguishable. 
 
\paragraph{A corrupted $P^*$.}
In case where $P^*$ is corrupted, then the simulator simulates the entire communication between $P_0$ and $P_1$ as random strings.  $P^*$ has no input, and it receives from the trusted party the two bits $(b_0,b_1)$, the minimum $M$, and the commitments $V_0$ and $V_1$. From the bits $(b_0,b_1)$ it generates the two vectors $(d_{0,0},\ldots,d_{0,n})$ and $(d_{1,0},\ldots,d_{1,n})$ just as the simulator in the semi-honest case. It chooses random $s_{0,j},s_{1,j}$ and chooses random shares:
\begin{eqnarray*}
d_{0, j} := \share{d_{0, j}}_0 + \share{d_{0, j}}_1 \ , &~~~~~& s_{0, j} := s_{0, j, 0} + s_{0, j, 1} \\
d_{1, j} := \share{d_{1, j}}_0 + \share{d_{1, j}}_1 \ , &~~~~~& s_{1, j} := s_{1, j, 0} + s_{1, j, 1}
\end{eqnarray*}
It simulates $P_0$ sending  $\left( \share{d_{0, j}}_0 \right)_{j = 0}^n, \left( \share{d_{1, j}}_0 \right)_{j = 0}^n$ and the randomnesses $\left( s_{0, j, 0} \right)_{j = 0}^n, \left( s_{1, j, 0} \right)_{j = 0}^n$ to $\Adv$, and $P_1$ sending  $\left( \share{d_{0, j}}_1 \right)_{j = 0}^n, \left( \share{d_{1, j}}_1 \right)_{j = 0}^n$ and the randomnesses $\left( s_{0, j, 1} \right)_{j = 0}^n, \left( s_{1, j, 1} \right)_{j = 0}^n$ to $\Adv$. It then simulates the invocation of ${\cal F}_{\sf ZK}({\cal R}_{\sf OneMany})$: If $\Adv$ does not provide the correct statement, $\left( D_{i, j} \right)_{j = 0}^n$ for $D_{i, j}:= \mathsf{Com}(d_{i, j}; s_{i, j})$, and the correct witness (consisting of the index of the non-zero coordinate in $d$ and the opening of its commitment), then return ${\sf reject}$ to $\Adv$ and send $\bot$ to the trusted party. Otherwise, send ${\sf accept}$ to $\Adv$ and send ${\sf OK}$ to the trusted party. At the end of the invocation, if some party $P_j$ has to open the commitment to $\Adv$, then it sends a commitment $V_j' = {\sf Com}(M; r)$ with some random $r$, and simulates ${\cal F}_{\sf ZK}({\cal R}_{\sf ComEq})$ on instance $V_j,V_j'$ as ${\sf accept}$. 

It is easy to see that the simulator perfectly simulates the view of the adversary in the real execution. This follows from the properties of our comparison algorithm (Theorem~\ref{thm:correctness}), and that the simulator perfectly simulates the ${\cal F}_{\sf ZK}$ functionality. Finally, any wrong information $\Adv$ sends to ${\cal F}_{\sf ZK}$ results in ${\sf false}$ in the real execution. In contrast, in the ideal execution, the simulator sends $\bot$ to the trusted party which results in ${\bot}$ to both $P_0,P_1$ in the ideal execution. 
\end{proof}


\paragraph{Computing $\fctc$ functionality (Functionality~\ref{sec:func:client-to-client}.)}
To conclude client-to-client matching, the parties invoke $2|U|$ times ${\cal F}_{\sf min}$. For each possible symbol in $U$, the parties first commit to their short and long exposure for that symbol. They then invoke the committed minimum functionality twice: Once when $P_0$ inputs its long exposure and $P_1$ inputs short, and vice-versa. If a party is not interested in some symbol or in a particular side then it simply inputs $0$. The bank receives as output all the matches and can execute them directly. Security follows from just the composition theorem. We have:
\begin{corollary}
The above protocol securely computes Functionality~\ref{func:c2c} in the presence of one  malicious client or a malicious server. 
\end{corollary}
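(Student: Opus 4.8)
The plan is to derive this from the composition theorem of~\cite{Canetti00}, by viewing the client-to-client protocol as a concurrent composition of $2|U|$ independent invocations of $\fmincom$ (Functionality~\ref{func:fmincom}), preceded by a registration step in which each client hands $P^*$ homomorphic commitments to its long and short exposures on every symbol of $U$. First I would write the protocol in the $\fmincom$-hybrid model: for each $X \in U$, $P^*$ forwards the relevant commitments to the two clients, and the parties call $\fmincom$ twice, once with $P_0$'s long exposure against $P_1$'s short and once with $P_0$'s short exposure against $P_1$'s long, a party not interested in a given symbol/side feeding $0$. Since $\pimincom$ securely realizes $\fmincom$, it suffices, by composition, to exhibit a simulator in this hybrid model; plugging in $\pimincom$ (whose own simulator is already established) then yields security in the plaintext model, and indistinguishability reduces, as in the proof of the $\fmincom$ protocol, to the hiding and binding of the commitment scheme.

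For a corrupted client, say $P_0$, the simulator runs the hybrid adversary and, in each of the $2|U|$ calls, plays the ideal $\fmincom$: this lets it (i) extract the opening $(v,r)$ that $P_0$ supplies and (ii) check $v$ against the commitment $P_0$ registered, sending $\bot$ and halting if there is any inconsistency or if $P_0$ aborts a call, exactly as in the real execution. Having extracted all of $P_0$'s amounts, the simulator assembles $P_0$'s input list for $\fctc$ (listing $X$ with side $\mathsf{buy}$ and amount $v$ whenever the extracted long exposure $v\neq 0$, analogously for $\mathsf{sell}$), submits it, and receives the match list $M$. It must then feed $P_0$, for each call, only the single bit $b_0=(v\le v')$ and $\min\{v,v'\}$ for the honest client's unknown value $v'$; the key observation is that both are determined by $M$ together with $v$ (when $v=0$ the answer is $b_0=1$ and minimum $0$; when $v>0$ it is read off the corresponding opposite-side entry of $M$, or, if no such entry appears, $v'=0$ so the answer is $b_0=0$ and minimum $0$). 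Consistency of $P_0$'s inputs across all calls and with the registration commitments is enforced by $\fmincom$ itself, and the binding property guarantees the extracted value is the committed one.

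For a corrupted server, $P^*$ has no input and obtains from $\fctc$ only the list $M$; the plan is to fill the matched symbol/orientations from $M$ directly, fill every remaining one with minimum $0$ and the appropriate comparison bits, and then run the per-call simulator of $\pimincom$ on these values to produce $P^*$'s view, while the entire client-to-client channel (routed through $P^*$) is simulated as random, using that the clients' session key is outside $P^*$'s view. \emph{I expect this to be the main obstacle}: $\fmincom$ discloses to $P^*$ both comparison bits and the minimum even when one input is $0$, which lets $P^*$ learn, for each (symbol, side) pair \emph{not} occurring in $M$, which of the two clients holds a zero amount there --- information that Functionality~\ref{func:c2c} as written does not hand to $P^*$. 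I would resolve this by noting that, for the server, the client-to-client functionality should be read with the richer leakage profile already made explicit in $\fmult$ (Functionality~\ref{func:multi}), which releases all comparison bits and minima to $P^*$, so the corollary is most precisely stated against that functionality; alternatively, if one insists on the stated $\fctc$, the server's corruption should be taken as semi-honest, or $\fmincom$'s output to $P^*$ restricted to the minimum together with the bits on matched instances only, in which case the view is simulatable from $M$ alone. In every variant the cryptographic content is identical --- everything reduces, via composition, to the already-proven security of $\pimincom$ and to the hiding/binding of the homomorphic commitments --- and a brief remark is warranted that, as observed earlier, malicious-server security additionally presupposes that the two clients can authenticate to one another (a PKI or a direct channel), since otherwise $P^*$ can impersonate a client in the star topology.
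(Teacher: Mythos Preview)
Your approach matches the paper's: the corollary's entire proof in the paper is the single sentence ``Security follows from just the composition theorem,'' and you correctly unpack this as $2|U|$ parallel invocations of $\fmincom$ together with the registration commitments, then invoke Canetti's composition with the already-proven security of $\pimincom$.

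Your proposal is in fact considerably more careful than the paper. In particular, the leakage issue you flag for the corrupted-server case is real and the paper simply does not address it in the corollary: $\fmincom$ hands $P^*$ both comparison bits on every call, so for unmatched (symbol, side) pairs $P^*$ learns which client held a zero --- information that the literal $\fctc$ of Functionality~\ref{func:c2c} does not release. The paper acknowledges exactly this leakage later, in the discussion around $\fmult$ (Section~\ref{sec:func:multiparty}), but the corollary as stated glosses over it. Your proposed resolutions (read the server's output with the richer $\fmult$-style leakage, or restrict to a semi-honest server) are the right ways to make the statement precise; the paper implicitly relies on the first. Your remark that malicious-server security presupposes authenticated client-to-client communication also mirrors the caveat the paper states just before Section~\ref{sec:func:multiparty}.
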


\section{Bank-to-Client Matching}\label{sec:twoparty}
In this section, we present a two-party variant of our main three-party comparison protocol which can be used for bank-to-client matching. We describe the functionality and then the protocol. 
\begin{mdframed}
\begin{functionality}[Two-party min, $\fminbtc$.]
\label{func:2-party-min}
{\bf Input:} The server $P_0$ holds $v_0$, and $P_1$ holds $v_1$, both in $\{0,\ldots,2^n-1\}$. 


\noindent
{\bf Output:} Both parties receive $\min\{v_0,v_1\}$.
\end{functionality}
\end{mdframed}

\begin{mdframed}
\begin{protocol}[Two-party protocol $\piminbtc$] \label{two-party}
%
\textbf{Input:} $P_0$ and $P_1$ hold integers $v_0$ and $v_1$, respectively, in $\{0, \ldots , 2^n - 1\}$. \\
\textbf{Setup phase:} A 
commitment scheme $(\mathsf{Com})$ and an encryption scheme $(\mathsf{Gen}, \mathsf{Enc}, \mathsf{Dec})$ are chosen. $P_0$ runs $(\textit{pk}, \textit{sk}) \gets \mathsf{Gen}(1^\lambda)$, and authenticates its public key $\textit{pk}$ with $P_{1}$. 

\medskip
\noindent
{\bf The protocol:} 
\begin{enumerate}[nosep,leftmargin=15pt]
\item {\bf \boldmath $P_0$ proceeds as follows: (Encryption):}
\begin{enumerate}[nosep,leftmargin=15pt]
\item Commit $V_0 \gets \mathsf{Com}(v_0; r_0)$, and send $V_0$ to $P_1$.
\item Compute the bit decomposition $v_0 = \sum_{j \in N} 2^{n - 1 - j} \cdot v_{0, j}$, for bits $v_{0, j} \in \{0, 1\}$. 
\item For each $j \in N$:\label{prot:generate-enc} compute additive homomorphic encryptions $A_{j} = \mathsf{Enc}_{\textit{pk}}({v_{0, j}})$. 
\item Send the full array $\left( A_{j} \right)_{j = 0}^{N-1}$ to $P_{1}$. \label{prot:malicious:send-array-A}
\item Compute: 
\begin{MyEnumerate}
\item $\pi \gets \mathsf{ComEq.Prove}\left( V_0, \prod_{j = 0}^{N - 1} \left( A_{j} \right)^{2^{j}} \right),$
\item $\pi_j \gets \mathsf{BitProof.Prove} \left( A_j\right)$, for all $j \in N$,
\end{MyEnumerate}
%
$P_0$ sends $\pi$, and $(\pi_{j})_{j = 0}^{N - 1}$ to $P_{1}$.
\end{enumerate}
\item {\bf \boldmath $P_1$ proceeds as follows: (Computing the minimum):}
\begin{enumerate}
\item[] Receive  $(\pi$,$(\pi_{j})_{j = 0}^{N - 1})$ and verify the following:
\item $\mathsf{ComEq.Verify} \left( \pi, V_{0}, \prod_{j = 0}^{N - 1} \left( A_{j} \right)^{2^{j}} \right)$,
\item $\mathsf{BitProof.Verify} \left(\pi_{j}, A_{ j} \right)$ for each $j \in N$.
If any of these checks fail, $P_1$ aborts.
%
%
\item Run Algorithm~\ref{alg:initial}, in parallel, on the ciphertexts
$A_j$ and on its own secret inputs. That is, $P_1$ runs:
\begin{eqnarray*}
\left( \left( {D_{0, j}} \right)_{j = 0}^N, \left( {D_{1, j}} \right)_{j = 0}^N \right) \gets\hspace{+10ex} \\\mathsf{ComparisonInitial}\left( (A_j)_{j=0}^{N-1}, (v_{1,j})_{j=0}^{N-1} \right)
\end{eqnarray*}
recall that the algorithm shuffles the two result vectors inside. 
\item $P_1$ sends $\left( \left( {D_{0, j}} \right)_{j = 0}^N, \left( {D_{1, j}} \right)_{j = 0}^N \right)$ to $P_0$.\label{stp:send-to-P0}
\end{enumerate}
\item {\bf \boldmath Party $P_0$ (Output reconstruction):} 
\begin{enumerate}[nosep,leftmargin=15pt]
\item Decrypt $d_{i,j} = \mathsf{dec}_{\textit{sk}}({D_{i, j}})$ for each $i=\{0,1\}$ and $j=\{0,\ldots,n-1\}$
\item Execute Algorithm~\ref{alg:final}, that is: \vspace{-1ex}
$$
(b_0, b_1) := \mathsf{ComparisonFinal} \left( \left( d_{0, j} \right)_{j = 0}^N, \left( d_{1, j} \right)_{j = 0}^N \right) \ . 
$$ 
\item Write $u$ for the index such that $b_u$ is true and sets $u:=0$ if both are. 
\item Compute $\pi' \gets \textsf{OneMany.Prove}\left( \left( D_{u, j} \right)_{j = 0}^{n-1} \right)$, and sends $\pi'$ and $u$ to $P_1$. If $u=1$ then send also $v_1$. 
\end{enumerate}
\item[]
\item {\bf \boldmath Party $P_1$ (output reconstruction):} 
\begin{enumerate}[nosep,leftmargin=15pt]
\item $P_1$ verifies $\textsf{OneMany.Verify}\left( \pi', \left( D_{u, j} \right)_{j = 0}^{N-1} \right)$. If verification passes, then $P_1$ outputs $v_u$.
\end{enumerate}
\end{enumerate}
\end{protocol}
\end{mdframed}

\begin{theorem}
Protocol~\ref{two-party} securely computes Functionality~\ref{func:2-party-min} in the presence of a malicious $P_0$ or a  semi-honest $P_1$, assuming secure commitments and zero-knowledge functionalities. 
\end{theorem}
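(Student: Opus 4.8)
The plan is to prove security in the standard ideal/real paradigm by exhibiting a simulator for each corruption case, working in the hybrid model where the zero-knowledge relations ${\cal R}_{\sf ComEq}$, ${\cal R}_{\sf BitProof}$, ${\cal R}_{\sf OneMany}$ are replaced by their ideal functionalities ${\cal F}_{\sf ZK}$; by the composition theorem this yields security in the plain model once those functionalities are instantiated as in Appendix~\ref{appendix:sec:zk}. Since Functionality~\ref{func:2-party-min} is deterministic, correctness and privacy can be argued separately. Correctness follows from Theorem~\ref{thm:correctness} together with the $\mathbb{F}_q$-linearity of Algorithm~\ref{alg:initial}: running $\mathsf{ComparisonInitial}$ homomorphically on the ciphertexts $A_j = \mathsf{Enc}_{\textit{pk}}(v_{0,j})$ and on $P_1$'s plaintext bits $v_{1,j}$, then decrypting and running $\mathsf{ComparisonFinal}$, reproduces exactly $(v_0 \le v_1, v_1 \le v_0)$, so the index $u$ and hence $v_u = \min\{v_0,v_1\}$ are correct; the $\mathsf{ComEq}$ and $\mathsf{BitProof}$ checks guarantee that $A_j$'s decryptions are bits consistent with the committed $v_0$, and the $\mathsf{OneMany}$ proof lets $P_1$ accept only the genuine minimum.

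For a \textbf{malicious $P_0$}: the simulator receives $V_0$ and, via the extractor of ${\cal F}_{\sf ZK}({\cal R}_{\sf ComEq})$ at Step~5(a) and of ${\cal F}_{\sf ZK}({\cal R}_{\sf BitProof})$ at Step~5(b), extracts $(v_0, r_0)$ and the bits underlying each $A_j$. It verifies that these are consistent (i.e., that $V_0 = \mathsf{Com}(v_0;r_0)$, that each extracted value is a bit, and that $\prod_j A_j^{2^{n-1-j}}$ opens to $v_0$); if not it sends $\bot$. Otherwise it sends the extracted $v_0$ to the trusted party, receives $\min\{v_0,v_1\}$, and must simulate $P_1$'s message $\left(\left(D_{0,j}\right)_{j=0}^N,\left(D_{1,j}\right)_{j=0}^N\right)$ together with the final $\pi'$, $u$ and (if $u=1$) $v_1$. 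Here it uses the distributional statement of Theorem~\ref{thm:correctness}: knowing only the two output bits, the simulator samples the plaintext vectors $(d_{0,j})_j,(d_{1,j})_j$ with exactly one zero in the ``$\le$'' vectors at a uniformly random coordinate and uniform nonzero entries elsewhere (or all-nonzero when the inequality fails), then encrypts these under $\textit{pk}$ to form the $D_{i,j}$; it simulates ${\cal F}_{\sf ZK}({\cal R}_{\sf OneMany})$ by returning ${\sf accept}$, and reveals $v_1$ only when $u=1$ (in which case $v_1 = \min\{v_0,v_1\}$ is already known). Indistinguishability follows from the semantic security (CPA security) of the encryption scheme --- in the real execution the $D_{i,j}$ are encryptions of the true $d_{i,j}$, in the ideal one encryptions of identically-distributed simulated values --- plus the binding property of the commitment scheme (ensuring the extracted $v_0$ is the value actually used) and perfect simulation of the ${\cal F}_{\sf ZK}$ calls.

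For a \textbf{semi-honest $P_1$} (formalized, as in the paper's ``On modeling security'' remark, as a malicious adversary that nonetheless follows the protocol): the simulator gets $v_1$ and the output $v_u = \min\{v_0,v_1\}$, and must produce $P_1$'s view: $V_0$, the ciphertext array $(A_j)_j$, the proofs $\pi,(\pi_j)_j$ (simulated as ${\sf accept}$ from ${\cal F}_{\sf ZK}$), and $P_0$'s final message $\pi', u$, and $v_1$-or-nothing. It sets $V_0 \gets \mathsf{Com}(0;r)$ and $A_j \gets \mathsf{Enc}_{\textit{pk}}(0)$ for all $j$, computes $u$ from the relation between $v_1$ and the output (it knows $\min\{v_0,v_1\}$ and $v_1$, hence learns $u$ and whether $v_1 \le v_0$), simulates ${\cal F}_{\sf ZK}({\cal R}_{\sf OneMany})$ as ${\sf accept}$, and sends $v_1$ back iff $u=1$. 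Hiding of $\mathsf{Com}$ and CPA security of $\mathsf{Enc}$ give indistinguishability. \textbf{The main obstacle} is the malicious-$P_0$ case: one must check that the three ${\cal F}_{\sf ZK}$ instances jointly pin down a single well-defined input $v_0$ (the $\mathsf{ComEq}$ proof links $V_0$ to $\prod_j A_j^{2^{n-1-j}}$, the $\mathsf{BitProof}$s force each $A_j$ to encrypt a bit, so the extracted bits uniquely determine $v_0 = \sum_j 2^{n-1-j} v_{0,j}$), and --- crucially --- that the simulated reply message carries no more information than the ideal output; this is exactly where Theorem~\ref{thm:correctness}'s precise characterization of the distribution of $(d_{0,j})_j,(d_{1,j})_j$ given only the two comparison bits is needed, after which the argument reduces to the CPA security of the encryption scheme.
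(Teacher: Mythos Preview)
Your proposal follows essentially the same strategy as the paper: work in the ${\cal F}_{\sf ZK}$-hybrid model, extract $v_0$ from the zero-knowledge witnesses in the malicious-$P_0$ case and simulate $P_1$'s reply using the distributional guarantee of Theorem~\ref{thm:correctness}, and in the semi-honest-$P_1$ case replace $V_0$ and the $A_j$'s by commitments/encryptions of~$0$ and simulate all proofs as ${\sf accept}$.

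One point deserves care. In the malicious-$P_0$ case you write that indistinguishability of the simulated $D_{i,j}$ ``follows from the semantic security (CPA security) of the encryption scheme.'' This is not the right hook: $P_0$ holds the secret key, so CPA security gives you nothing against it. The actual argument---which you in fact state in the same sentence---is that the \emph{plaintexts} $(d_{i,j})_j$ are identically distributed in the real and simulated executions by Theorem~\ref{thm:correctness}; the simulator then just needs to produce ciphertexts with the same (plaintext, randomness) joint distribution as $P_1$ would. The paper's proof is equally terse on this point (it simply says the simulator ``sends encryption of those values under $P_0$'s public key'' and claims perfect simulation), so you are not missing anything the paper supplies, but you should drop the CPA reference there and rely squarely on Theorem~\ref{thm:correctness}. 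In the semi-honest-$P_1$ case your appeal to hiding of $\mathsf{Com}$ and CPA security of $\mathsf{Enc}$ is exactly right.
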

\begin{proof}
Since the protocol is not symmetric, we separate between corrupted $P_0$ and corrupted $P_1$. 

\paragraph{Security against a malicious $P_0$:} We show security assuming an ideal commitment and a zero-knowledge scheme. The simulator plays the role of the ideal functionalities for those sub-protocols and receives in particular the input $v_0$ and the decomposition bits when $P_0$ calls to the zero-knowledge and commitment functionalities. It verifies that $P_0$ provided consistent input. If not, the simulator aborts in the corresponding functionality. 

If the inputs are consistent, then given the input $v_0$ it sends it to the ideal functionality and receives back the minimum and whether the two integers are the same. It computes from that $(b_0,b_1)$, and simulates values $d_{0,j},d_{1,j}$ uniformly at random as per Theorem~\ref{thm:correctness}, and sends to $P_1$ encryption of those values under $P_0$'s public key. It then verifies that the one-out-of-many proof of $P_0$ is correct, and if so it sends ${\sf OK}$ to the trusted party, in which it delivers to $P_1$ its output in the ideal world. It is easy to see that the simulator perfectly simulates the view of the adversary in the ideal world.

\paragraph{Security against a semi-honest $P_1$:} In that case, the simulator sends to the adversary a commitment to $0$, and bit decomposition as all zeros. Moreover, it simulates the zero-knowledge proofs as succeeding. It then receives encryptions of $(d_{0,j},d_{1,j})$. It receives from the trusted party the minimum value and simulates the final ZK proof as successful. 
\end{proof}

\paragraph{Concrete instantiation.}
We implement Protocol~\ref{two-party} using the ElGammal encryption scheme, where the message $m$ is part of the exponent (thereby we get additive homomorphism). I.e., for a public key $h \in \mathbb{G}$, ${\sf Enc}_pk(m) = (g^r,h^rg^m)$. Note that $P_1$ does not have to decrypt the ciphertexts, but just identify an encryption of 0. Given a secret key $x$ such that $h=g^x$ and a ciphertext $c=(c_1,c_2)$, this is done by simply comparing $c_2/c_1^x$ to $g^{0}$.

\paragraph{Implementing Functionality~\ref{func:b2c}.} 
To implement the bank-to-client functionality, the parties invoke $2|U|$ times the two-party minimum functionality. For each possible symbol in $U$, the parties invoke the minimum functionality where once the bank inputs its interest in symbols to buy and the client in sell, and one time where the bank inputs whether it wishes to sell and the client inputs whether it wishes to buy. If a party is not interested in some symbol or in a particular side, it simply inputs $0$. The results of all minimum invocations will reveal the matches. Security follows from composition. We have:
\begin{corollary}
The above protocol securely computes Functionality~\ref{func:b2c} ($\fbtc$) in the presence of a semi-honest bank or a malicious client. 
\end{corollary}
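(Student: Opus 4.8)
The plan is a modular–composition argument that reduces the corollary to the theorem just proved for $\piminbtc$, inheriting its assumptions (secure commitments, the zero-knowledge functionalities, and semantic security of the encryption scheme underlying $\piminbtc$). First I would write out the protocol of the corollary, call it $\Pi_{\sf B2C}$, in the $\fminbtc$-hybrid model: for each symbol $s \in U$ and each of the two opposite-side configurations — bank buys / client sells, and bank sells / client buys — the parties invoke $\fminbtc$ once (a party with no order on the relevant side feeds in $0$), the client playing the role of ``$P_0$'' (the encrypting party of $\piminbtc$; note that, despite the wording of Functionality~\ref{func:2-party-min}, it is the client and not the bank that plays ``$P_0$'' here, which is precisely the assignment consistent with Figure~\ref{fig:enc1}) and the bank playing $P_1$; after all $2|U|$ invocations both parties locally assemble the match list $M$ by collecting, for each configuration whose returned minimum is positive, the tuple $({\sf symb}, {\sf side}^*, {\sf side}^C, \min)$. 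Under this assignment, ``malicious client'' and ``semi-honest bank'' fall exactly within the corruption profile ``malicious $P_0$ or semi-honest $P_1$'' against which $\piminbtc$ is secure, so by the composition theorem of~\cite{Canetti00} it suffices to prove that $\Pi_{\sf B2C}$ securely computes $\fbtc$ in the $\fminbtc$-hybrid model against a semi-honest bank or a malicious client.

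Both simulators are driven by a single observation: each $\fminbtc$ output is fully determined by the public output $M$ of $\fbtc$ together with the corrupted party's own input, since for a given symbol/side configuration the returned minimum is positive exactly when $M$ reports a match there, and is $0$ otherwise. For a \emph{malicious client}, the simulator runs the adversary, plays the $\fminbtc$ functionality in each of the $2|U|$ calls and reads off the client's inputs $v_0$ (each automatically lies in $\{0,\ldots,2^n-1\}$, so no range check is needed); it assembles them into an order list, sends it to $\fbtc$, obtains $M$, and then answers each $\fminbtc$ call with the minimum dictated by $M$ (or $0$). The produced view is distributed exactly as in the real execution, and the honest bank's output is consistent by construction. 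For a \emph{semi-honest bank}, the simulator is handed the bank's order list and $M$; the bank's view consists of its own input and coins together with the $2|U|$ values returned by $\fminbtc$, all of which the simulator recomputes from $M$ and the bank's input — a perfect simulation.

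The one genuinely delicate point, and the place I would be most careful, is reconciling the output conventions of $\fbtc$ and of $\Pi_{\sf B2C}$: $\fbtc$ as written adds a tuple for every pair of matching symbols with opposite sides regardless of quantity, whereas the protocol's $0$-padding makes a ``no interest'' case surface as a returned minimum of $0$. I would resolve this with the standard convention that order lists carry only strictly positive amounts and that the assembled $M$ is restricted to positive-quantity matches, under which the protocol's $M$ and the functionality's $M$ coincide; one then checks that this filtering leaks nothing beyond $\fbtc$, because already in $\fbtc$ the absence of a match for a symbol a party listed reveals that the counterparty's opposite-side amount is $0$ — this is exactly the ``set intersection''-style leakage discussed after Functionality~\ref{func:b2c}, and it is what makes the perfect simulation go through. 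Two minor bookkeeping items remain: $\fminbtc$ is to be read in its security-with-abort form, so that an aborting $\piminbtc$ (e.g., when the honest bank rejects a malicious client's zero-knowledge proof) is subsumed — handled uniformly by the composition theorem — and the extension to many clients is a further, routine layer of sequential composition lying outside this corollary.
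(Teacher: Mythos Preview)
Your proposal is correct and follows the same composition-based approach the paper uses; the paper's own ``proof'' is literally the single clause ``Security follows from composition,'' so your write-up is a strict elaboration of that, including the explicit simulators and the output-convention reconciliation. Your observation that the client must play the $P_0$ role (despite Functionality~\ref{func:2-party-min} calling $P_0$ ``the server'') is a genuine catch of a labeling inconsistency in the paper and is exactly what is needed for the corruption profiles to line up; the only thing I would make explicit is that the $2|U|$ calls to $\fminbtc$ are issued in parallel (or with deferred outputs), since otherwise a malicious client could adaptively choose later inputs from earlier minima and your extract-then-query simulator would not apply.
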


\section{Realizing Functionality~\ref{func:multi}}\label{sec:mainprotocol}

In this Section we present protocol~\ref{prot:multi} for the functionality~\ref{func:multi} in the $\fmincom$-hybrid (Functionality~\ref{func:fmincom}) model.  When matching
begins, orders pertaining to the same security and of opposite direction are matched. After each particular
match, the server decrements both orders' quantities by the matched amount, and dequeues whichever among the two orders is empty (necessarily at least one will be). The commitments' homomorphic property allows the engine to appropriately decrement registrations between matches. 



\begin{mdframed}
\begin{protocol}[$\pimulti$---multiparty matching] \label{prot:multi}
\leavevmode

\noindent
Upon initialization, $P^*$ initializes a list $P=\emptyset$ and two vectors ${\cal L}$ and ${\cal S}$ of size $n$, where $n$ bounds the number of possible clients. 

\noindent
{\boldmath $\pimulti.{\sf Register}(P_i)$: }
When the command is invoked, then $P_i$ sends to $P^*$ commitments $\widetilde{L_i} = {\sf Com}(L_i;r_i)$ and $\widetilde{S_i} = {\sf Com}(S_i; s_i)$ for some random $r_i, s_i$. $P^*$ set ${\cal L}[i] = \widetilde{L_i}$ and ${\cal S}[i] = \widetilde{S_i}$. The party $P_i$ locally stores $(L_i,r_i)$ and $(S_i,s_i)$. $P^*$ also adds $i$ to $P$.

\smallskip
\noindent
{\boldmath $\pimulti.{\sf Process}()$:}
\begin{enumerate}[nosep,leftmargin=*]
\item $P^*$ chooses a random ordering $O$ over all pairs of $P$. 
\item For the next pair $(i,j) \in O$, try to match between $P_i$ and $P_j$:
\begin{enumerate}[nosep,leftmargin=*]
\item Invoke Functionality~\ref{func:fmincom} where $P_i$ inputs $(L_i, r_i)$ and $P_j$ inputs $(S_j, s_j)$, $P^*$ inputs $({\cal L}[i], {\cal S}[j])$. Let $M_0$ be the resulting minimum. $P_i$ sets $L_i = L_i-M_0$ and $P_j$ sets $S_j = S_j - M_0$. If $M_0 \neq 0$, then execute the match. $P^*$ homomorphically evaluates the commitments $({\cal L}[i], {\cal S}[j])$ to subtract $M_0$. 
\item Invoke Functionality~\ref{func:fmincom} where $P_i$ inputs $(S_i, s_i)$ and $P_j$ inputs $(L_j, r_j)$, $P^*$ inputs $({\cal S}[i], {\cal L}[j])$. Let $M_1$ be the resulting minimum. $P_i$ sets $S_i = S_i-M_1$ and $P_j$ sets $L_j = L_j - M_1$. If $M_1 \neq 0$, then execute the match. $P^*$ homomorphically evaluates the commitments $({\cal L}[i], {\cal S}[j])$ to subtract $M_1$. 
\end{enumerate}
\end{enumerate}
\end{protocol}
\end{mdframed}

\begin{theorem}
Protocol~\ref{prot:multi} securely implements Functionality~\ref{func:multi} in the $\fmincom$-hybrid model (Functionality~\ref{func:fmincom}) in the presence of a malicious client or a semi-honest server. 
\end{theorem}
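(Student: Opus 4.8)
The plan is to prove the statement in the $\fmincom$-hybrid model by a separate simulation argument for each admissible corruption: one malicious client, or one semi-honest server. The starting observation is that $\pimulti$ is merely a scheduler around the oracle $\fmincom$: during $\textsf{Register}$ it stores homomorphic commitments $\widetilde{L_i},\widetilde{S_i}$, and during $\textsf{Process}$ it replays them --- after public homomorphic decrements --- into a sequence of $\fmincom$ calls indexed by a random ordering $O$ of the registered set. Correctness is then immediate: an induction on $O$ shows that the locally held $(L_i,S_i)$ and the committed $({\cal L}[i],{\cal S}[i])$ track the vectors ${\cal L},{\cal S}$ of Functionality~\ref{func:multi} in lockstep, so the minima and comparison bits returned by $\fmincom$ coincide with those output by $\fmult$. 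Consequently the whole argument is about the two simulators, and (since $\fmincom$ is used as an ideal oracle) neither Theorem~\ref{thm:correctness} nor the realizations of $\fmincom$ enter here; they re-appear only upon composing with $\pimincom$.

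\textbf{Semi-honest server.} All clients are honest and $\Sim$ receives $P^*$'s (empty) input together with everything $\fmult$ hands $P^*$: the messages ${\sf registered}(P_i)$ and, per processed pair $(i,j)$, the tuple $(i,j,M_0,M_1,b_0^0,b_1^0,b_0^1,b_1^1)$. $\Sim$ answers every registration with $\mathsf{Com}(0;\cdot)$ (indistinguishable by hiding), then re-programs $P^*$'s coins so that the ordering $O$ it would derive coincides with the order in which $\fmult$ released the per-pair tuples --- both orderings are uniform over pairs of the registered set, so nothing changes distributionally --- and for each pair feeds $(b_0^0,b_1^0,M_0)$ and $(b_0^1,b_1^1,M_1)$ as the outputs of the two simulated $\fmincom$ instances, applying to the dummy commitments the same public decrements the honest $P^*$ would apply. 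The simulated transcript differs from the real one only in the contents of commitments that are never opened, hence the two are computationally indistinguishable.

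\textbf{Malicious client.} Exactly one client $P_c$ is corrupted; $\Sim$ plays the honest $P^*$, the honest clients, and the oracle $\fmincom$ towards the adversary. It records $P_c$'s registration commitments $\widetilde{L_c},\widetilde{S_c}$ and answers honest registrations with $\mathsf{Com}(0;\cdot)$. It extracts $P_c$'s effective inputs $L_c^\ast,S_c^\ast$ from the openings $P_c$ must supply the first time a running (publicly-decremented) version of $\widetilde{L_c}$ resp.\ $\widetilde{S_c}$ is passed to an $\fmincom$ call; the binding property forces these openings --- over all invocations referencing that commitment --- to be consistent with a single value, and $\Sim$ submits $L_c^\ast,S_c^\ast$ to $\fmult$ on $P_c$'s behalf. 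During $\textsf{Process}$, for each pair containing $c$, $\Sim$ obtains $(M_0,M_1)$ and the bits destined for $P_c$ from $\fmult$ and returns them as the outputs of the relevant simulated $\fmincom$ instances, using $\mathsf{Com}(0;\cdot)$ for the honest partner's committed input inside each call; the sub-ordering of $c$-incident pairs it replays is read off the order in which $\fmult$ releases the tuples and is distributed exactly like the restriction of a uniform $O$. If $P_c$ ever opens inconsistently, the real $\fmincom$ aborts and $\Sim$ sends $\bot$ to $\fmult$ (security with abort). Every message $P_c$ sees is thus either value-independent or uniformly random, giving computational indistinguishability.

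\textbf{Main obstacle.} The delicate point is the registration-to-processing gap in the malicious-client case: $\fmult$ fixes a client's contribution at $\textsf{Register}$, but in $\pimulti$ that contribution is only a hiding homomorphic commitment, pinned down --- and extractable by $\Sim$ --- only when it is opened inside a later $\fmincom$ call. One must therefore argue either that lazy extraction in the stand-alone model is sound, since $\textsf{Register}$ is value-oblivious (its only effect on honest parties is the message ${\sf registered}(P_c)$) and $\Sim$ may run the adversary internally before issuing the ideal $\textsf{Register}$ call, or else instantiate $(\mathsf{Gen},\mathsf{Com})$ with a homomorphic commitment that is also extractable (for instance in the CRS model). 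A lesser subtlety, shared by both cases, is reconciling $\fmult$'s internally-chosen ordering with the one the protocol lets $P^*$ pick; this is handled uniformly by reading the ordering off the schedule of $\fmult$'s outputs. The remaining content --- the inductive correctness, the hiding of the dummy commitments, and the bookkeeping of homomorphic decrements --- is routine.
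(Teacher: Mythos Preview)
Your proposal is correct and follows essentially the same approach as the paper's proof: separate simulators for the two corruption cases, dummy/random commitments for honest parties justified by hiding, feeding the $\fmult$ outputs back as simulated $\fmincom$ outputs, reconciling the ordering $O$ by reading it off the schedule of $\fmult$'s outputs, and an inductive correctness argument tracking the homomorphic decrements. You are in fact more careful than the paper on one point: the paper's malicious-client simulator never states explicitly how $P_c$'s input reaches $\fmult$ at $\textsf{Register}$ time (it simply starts from ``whenever received $(i,j,M_0,M_1,b_i)$ from the trusted party''), whereas you flag this registration-to-processing gap as the main obstacle and resolve it via lazy extraction from the opening that $\fmincom$ forces the corrupted client to supply.
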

\begin{proof}
We first simulate the case where some client is malicious:
\begin{MyEnumerate}
\item Invoke the adversary $\Adv$. Whenever the adversary sends $\tilde{L_i}, \tilde{S_i}$, store $\widetilde{L_i}$ and $\widetilde{S_i}$ in ${\cal L}[i]$ and ${\cal S}[i]$, respectively. 
\item Whenever received $(i,j,M_0,M_1,b_i)$ from the trusted party, choose random commitments ${\cal L}[j]$ and ${\cal S}[j]$ and simulate two invocations of $\fmincom$:
\begin{MyEnumerate}
\item Send to $P_i$ the commitment ${\cal L}[i]$ and a random commitment ${\cal S}[j]$. Receive back $b_i$ and $M_0$. Homomorphically subtract $M_0$ from ${\cal L}[i]$. 
\item Send to $P_i$ the commitment ${\cal S}[i]$ and a random commitment ${\cal L}[j]$. Receive back $b_j$ and $M_1$. Homomorphically subtract $M_1$ from ${\cal L}[i]$. 
\end{MyEnumerate}
\end{MyEnumerate}
From inspection, it is easy to see that the view of $P_i$ is the same in the simulation and in the real world since the simulator perfectly simulates the invocations of Functionality~\ref{func:fmincom} from the information it receives from the trusted party. Moreover, the underlying committed values of $P^*$ in the real execution are the same as in the ideal process for the same ordering $O$ (which have the exact same distribution in the real and ideal). This can be shown by induction over the different invocations of the ordering of $O$: in each execution, the functionality~\ref{func:fmincom} gives $P^*$ the same as in the ideal execution of $\fmult$. From the homomorphic evaluation of the commitment scheme, the subtraction of the values is equivalent to $\fmult$ subtracting the minimum values.

\paragraph{The case of a corrupted $P^*$. }
We show security in the presence of a \emph{semi-honest} $P^*$. 
First, whenever received a message ${\sf registered}(P_i)$ from the trusted party, simulate $P_i$ sending to $P^*$ some random commitments $\widetilde{L_i}, \widetilde{S_i}$, and store those commitments. 

The simulator then repeatedly receives from the trusted party values $(i,j,M_0,M_1,b_0,b_1)$ and has to simulate the view of $P^*$ in the two invocations of $\fmincom$. First, from all those values, it can extract random coins that will result in the same random ordering $O$ as the one that the trusted party chose. Moreover, for each such $(i,j,M_0,M_1)$, $(b_0^0,b_0^1,b_1^0,b_1^1)$, the simulator pretend sending the adversary $\widetilde{L_i}$, $\widetilde{S_j}$ as coming from $\fmincom$; and then the output $(b_0^0,b_1^0)$ and $M_0$. For the second invocation, pretend sending the adversary $\widetilde{S_i}$, $\widetilde{L_j}$ and the output $(b_1^0,b_1^1)$ and $M_1$. Then, update $\widetilde{L_i}$, $\widetilde{S_i}$ and $\widetilde{L_j}$, $\widetilde{S_j}$ as in the protocol. From inspection, it is clear that the view of the adversary is the same in both executions.

\end{proof}

\section{Range Bank-to-Client Functionality}
\label{sec:range}

In this section, we consider the setting in which the client submits a minimum and a maximum amount per stock instead of a single amount. The current system is set to accept a minimum and a maximum quantity from the clients to be matched against the bank's inventory. As a first step, the bank runs Functionality~\ref{func:b2c} on every client one by one only on their minimum quantity until the bank exhausts its inventory. For the case where the bank did not exhaust its inventory after processing all the clients, it runs again Functionality~\ref{func:b2c} on every client one by one on their maximum quantity. This process is presented in Functionality~\ref{func:b2cv2}. The requirement on the minimum and maximum quantity is imposed by the business for this use case as a greedy approach in an effort to satisfy more clients with the bank's inventory. This is not a necessary requirement, but the bank views it as an additional feature to satisfy more clients. This greedy approach is not ideal since, for instance, the minimum quantity of the first client can exhaust the full inventory of the bank. We leave it as an open problem to devise a secure optimization algorithm for better allocations across all clients. 

This additional range feature is considered only when we seek for matches between the bank and a client in an effort to maximize the number of matches the bank can accommodate from its own inventory. Thart said, we do not explicitly consider it for the client-to-client matching. 


\begin{mdframed}

\begin{functionality}[$\mathcal{F}_{\text{B2C}}$--Range Bank-to-client functionality]
\label{func:b2cv2}
The functionality is parameterized by the set of all possible securities to trade, a set $U$. 

\noindent
{\bf Input:} The bank $P^*$ inputs lists of orders $({\sf symb}_i^*, {\sf side}_i^*, {\sf amount}_i^*)$ where ${\sf symb}_i \subseteq U$ is the security, ${\sf side}_i^* \in \{{\sf buy}, {\tt sell}\}$ and ${\sf amount}_i^*$ is an integer. A client $P_i$ sends its list of the same format but with a minimum and a maximum amount, $({\sf symb}_i^C, {\sf side}_i^C, {\sf MinAmount}_i^C,{\sf MaxAmount}_i^C)$.

\noindent
{\bf Output:} Initialize a list of ${\sf Matches}$ $M$. Choose a random order $O$ over all clients.  
\begin{itemize}
    \item For the next client  $j$ in $O$ 
    and for every $i$, 
    such that ${\sf symb}_i^{*} = {\sf symb}_j^{C}$ and ${\sf side}_i^{*} \neq {\sf side}_j^{C}$, add $({\sf symb}_i^{*}, {\sf side}_i^{*}, {\sf side}_j^{C}, v=\min\{{\sf amount}_i^{*}, {\sf MinAmount}_j^{C}\})$ to $M$ and update ${\sf amount}_i^{*} = {\sf amount}_i^{*}-v$. 
\end{itemize}
Then, check maximum amounts for the $j$ in $M$:
\begin{itemize}
    \item For the next $j$ in $O$ and every $i$ such that ${\sf symb}_i^{*} = {\sf symb}_j^{C}$ and ${\sf side}_i^{*} \neq {\sf side}_j^{C}$, add $({\sf symb}_i^{*}, {\sf side}_i^{*}, {\sf side}_j^{C}, v=\min\{{\sf amount}_i^{*}, {\sf MaxAmount}_j^{C}\})$ to $M$ and update ${\sf amount}_i^{*}={\sf amount}_i^{*}-v$. 
\end{itemize}

\end{functionality}
\end{mdframed}

Implementing this protocol is quite straightforward, given the protocols we already have. The bank simply runs the minimum functionality against (random) ordering of the clients with their minimum amount, and then with their maximum amount, while updating the amount of each symbol along the way.

\section{Prime Match System Performance Cont.} 
  In Figure~\ref{fig:finalmatch}, we present a sample of the user interface of the Prime Match system. 
 
      \begin{figure*}
       \centering
        \includegraphics[scale=0.25]{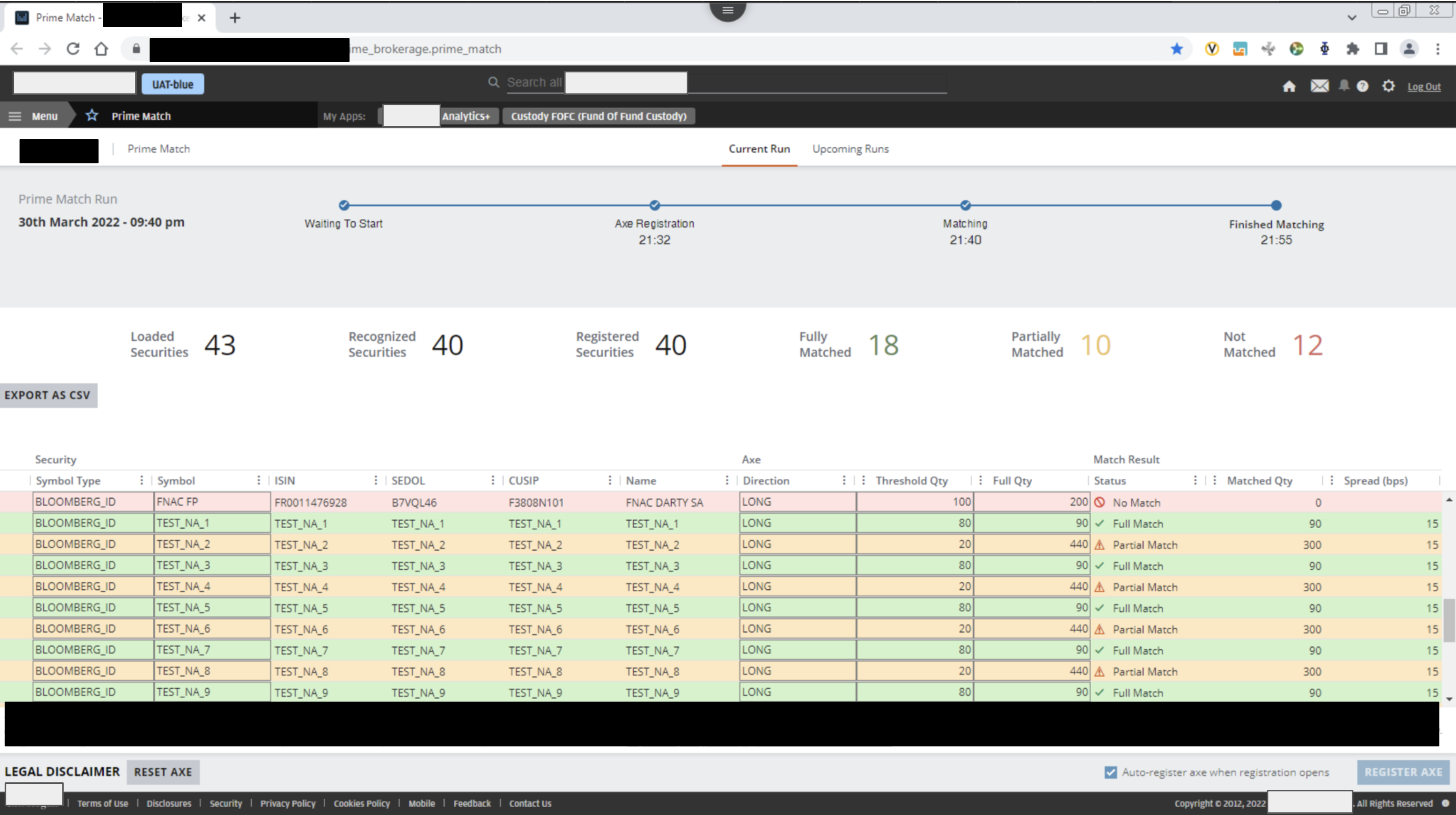}
        \caption{Prime Match User Interface of a completed matching run.}
       \label{fig:finalmatch}
    \end{figure*} 

 \begin{table*}[ht]
  \begin{center}
  \begin{footnotesize}
    \begin{tabular}[t]{c|c|c|c|c|c|cc|c}
      \toprule
         & &  Throughput & Registration  & Registration  & Matching  & Matching \\
       Number of    & &   (transactions/& received msg      & sent msg    & received msg  & sent msg \\

     Symbols  &Latency (sec)  &    sec)  &  Size (MB)    & Size (MB) &Size (MB) &Size (MB) \\
      \midrule
     100  &23.85 ($\pm$9.19 ) &8.38&0.062 &0.036 &16.225&4.510\\

      200       &29.49 ($\pm$2.85 ) &13.56&0.124 &0.073 &32.451&9.021\\
  


      500  &65.04 ($\pm$1.41 ) &15.37&0.310 &0.184 &81.128&22.552\\

      1000       &136.93 ($\pm$3.32 ) &14.60&0.621 &0.370 &162.257&45.105\\

      2000 &244.37 ($\pm$10.31 )  &16.36&1.243   &0.749   &324.507&90.210\\
  
      3000      &428.48 ($\pm$32.19 )   &14.00&1.865   &1.128   &486.739&135.260\\
 
      4000  &1077.60 ($\pm$27.46 ) &7.42&2.487 &1.507 &647.758&180.043\\

      5000       &1314.44 ($\pm$39.15 ) &7.60&3.109 &1.886 &809.191&224.641\\
  
      6000 &1558.58 ($\pm$46.96 )  &7.69&3.731   &2.265  &971.875&269.997\\
      \bottomrule
    \end{tabular}
  \end{footnotesize}
  \end{center}
  \caption{Performance of Bank-to-Client Prime Match for two clients. The number of symbols processed in the production code is double since clients provide both a minimum and a maximum quantity to be matched. This is reflected at the Throughput column which is multiplied by a factor of two. }
  \label{fig:primematchperf}
\end{table*}%

\section{Comparison to Generic MPC}
\label{sec:generic-mpc}
In this section, we compare the performance of our protocol to generic MPC protocols. Concretely, we look at the client-to-client matching. Towards that end, we estimate the size of the circuit that we have to use to implement such a functionality. 

Our main client-to-client matching protocol has two phases, in which the client first commits to their orders, and then the parties run in a second phase that the orders they provide match the committed ones. Implementing that using generic MPC protocols is relatively expensive as they must involve cryptographic primitives. Implementing group operations in a circuit, which involves exponentiation and public-key primitives is clearly prohibitively expensive, so we would like to use symmetric-key primitives. 

Specifically, we look at a commitment scheme in which the server (the bank) chooses some random seed $s$ and publishes it to all parties. To commit to a value $v$, a party chooses a random $r$ and computes $H(s,v,r)$ where $H$ is a CRHF. This commitment scheme is secure when modeling $H$ as a random oracle. When instantiating $H$ using SHA256, this (heuristically) gives 128-bit security for the binding property (a collision implies breaking the commitment). 

A Boolean circuit for SHA256 has 22K number of AND gates, 
and depth 1607 (see~e.g.,\cite{CampanelliGGN17,Sha256circuit}). This means that even ignoring the costs for comparisons themselves, we have:
\begin{MyItemize}
    \item Using any garbling-like protocol (say, variants of BMR) and with 128-bit keys for the labels, a circuit that checks that the commitments are correct will require communicating $22K \cdot 128 \cdot 2 \cdot 2$ bits ($128$ bit keys; using half gate optimization, so two ciphertexts per gate; for each symbol, we need at least two commitments - one from each one of the clients). The communication would be at least 722KB for processing a symbol. This is compared to roughly 25 KB per symbol by our protocol, as reflected by Table~\ref{fig:3partyAWS}. This is at least $\times 30$ more expensive than our solution, while the actual performance is likely to be much more expensive. We ignored here the cost of the comparisons themselves, or if the  clients have to be paired later again then the circuit has to compute the modified commitment after reducing the minimum (which requires further evaluations of SHA inside the circuit).
    \item Using any GMW-like protocol that requires number of rounds that is proportional to the depth of the circuit, the number of rounds would be at least 1607. This is compared to 3 rounds as in our protocols. This is an increase in a factor of $\times 500$. 
\end{MyItemize}

\section{Challenges in Implementation}\label{app:imp}

     \begin{figure}
        \centering
        \includegraphics[scale=0.35]{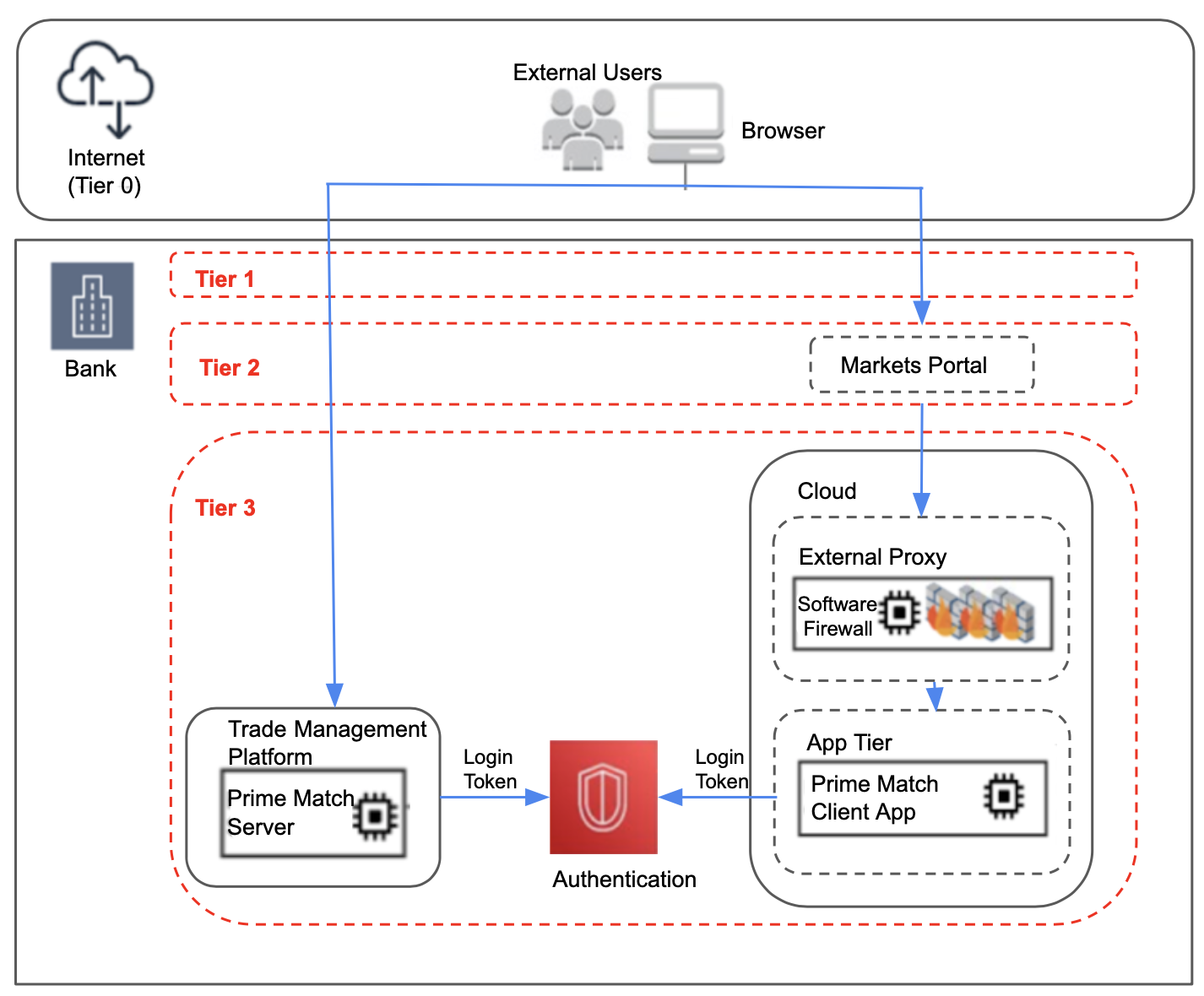}
        \caption{Prime Match system architecture connecting external users/clients to the bank's network.
        }
        \label{fig:arch}
    \end{figure}

          \begin{figure*}
       \centering
        \includegraphics[scale=0.48]{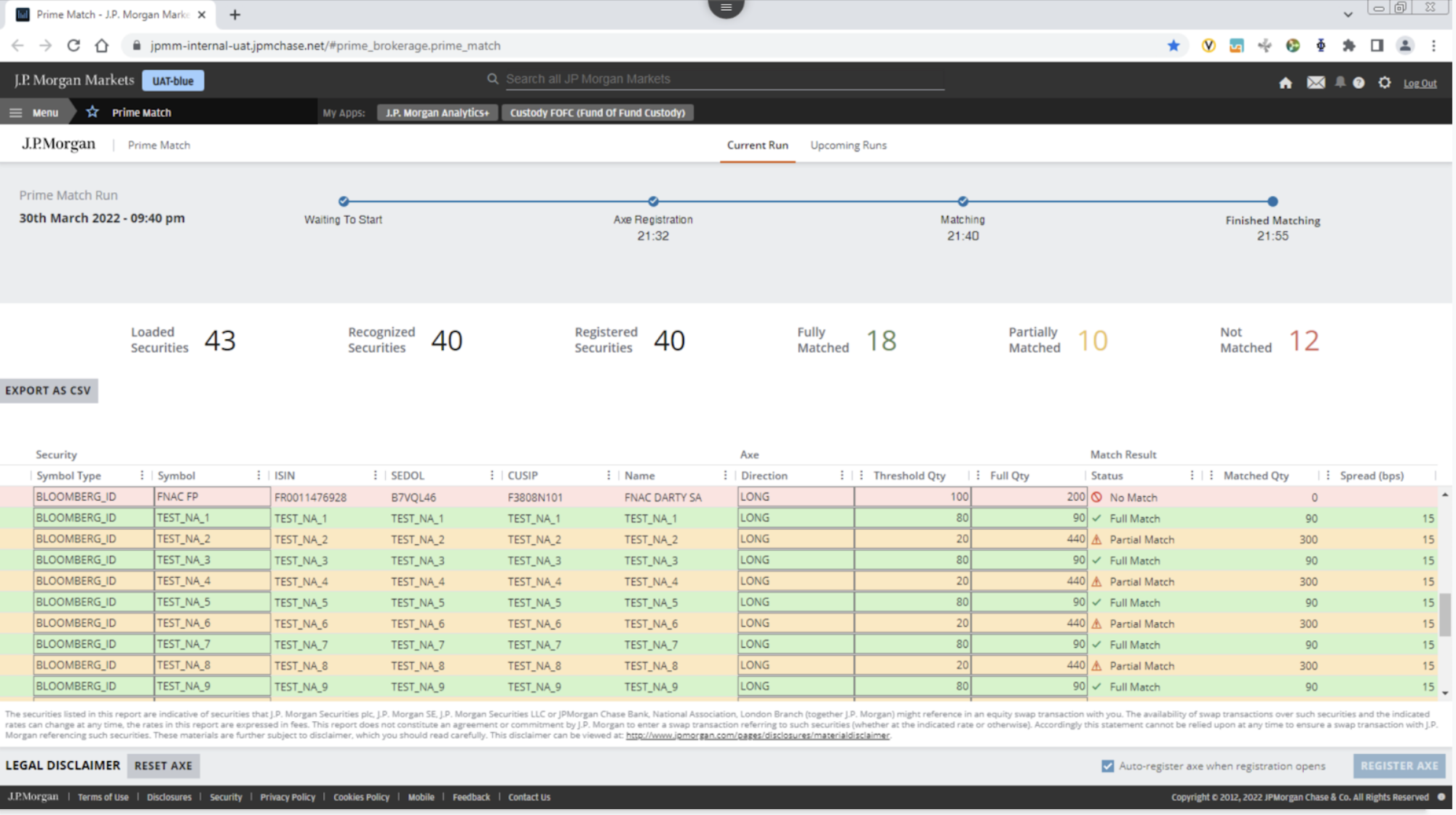}
        \caption{Prime Match User Interface of a completed matching run.}
       \label{fig:finalmatch}
    \end{figure*} 

Privacy-preserving auctions have been the holy grail of practical MPC since~\cite{BogetoftCDGJKNNNPST09}. However, no financial institutions took a step forward to use such tools. To begin with, we got connected to the bank's business department of quantitative research to test the appetite for a form of privacy-preserving auctions. Thus, the Axe inventory use case was chosen. 

\paragraph{Pre-Production Challenges.}
 Releasing a secure multiparty computation (MPC) product in a big organization that had no other products utilizing these techniques was itself challenging. Furthermore, there  were no other products across the street (market) to showcase the feasibility of such a solution. 

 As a first step, a proof of concept (POC) was implemented to show its feasibility. 
 
 Given the innovative nature of the product, the green light for production was given after a long process. More specifically, the organization decided to test the appetite for the possible product of its valuable clients (hedge funds). Almost 20 hedge funds were visited in 6 months to demo the POC and to hear their highly appreciated feedback. Given the client feedback, the organization decided to move another step forward and ask the clients under what conditions they will utilize such a product. In particular, the clients provided a set of different features and requirements they would like to see in the product. Notable requirements were:
 \begin{MyItemize}
     \item No communication with other clients, only communication with the bank.
     \item No resources for preprocessing data.
     \item No installation of code on client's machines - a web-based application was required.
     \item Stronger security guarantees than semi-honest security.
     \item Peer review of the solution.
 \end{MyItemize}

 The POC was updated to satisfy the above requirements. Another round of demos to clients was conducted. After several months the business gave the green light to build the product and allocated resources to enable it. The team consisted of cryptographers, quants, tech quants, expert programmers on different topics (such as secure multiparty computation, UI, web-based development etc.), cybersecurity experts, product managers, legal, and stakeholders.  

\paragraph{Production Challenges - Lessons Learned}
During the development of Prime Match, we faced several
critical technical challenges. Those challenges were mainly
due to the complex design of the organization's infrastructure for security measures.

\begin{MyItemize}
    \item Coding in the Trade Management platform and the Markets Portal, see Figure ~\ref{fig:arch}, needed to follow certain practices (which easy maintenance too) which were not followed in the POC stage. POC was reprogrammed to follow current practices. 
    \item Only a limited set of open-source libraries are allowed in the Trade Management platform. For example, importing the open-source library emscripten docker image to the Trade Management platform. This process went through rigorous request and approval processes to show that this library does not contain any code that would secretly send data to the internet.
    \item Integration to the markets portal which is an existing service for external clients, in order to access some of the organization's trading applications, was not easy since no other application required interaction between the server and the clients. Prior apps involve simple downloads of data from clients. 
    \item  Set up a pathway that allows network traffic to flow from the clients to the Prime Match server with the use of web socket protocol. In particular, setting up Psaas service in Tier 1 and 2, which stands for Perimeter Security as a Service. Extensive review and approval process were required.
    \item Placing a physical server and setting up an Apache WS tunnel in tier 1 to enable external clients to transact and transact within tight latency requirements. In particular, we had to implement the entire WS tunnel proxy from scratch since the Web Sockets protocol was not used within these platforms before.
    \item Run several internal tests ranging from UI requirements to cyber security requirements (such as timing attacks). 
    \item Extensive code review from internal and external experts. 
    \item Integrate the UI component with the markets portal which required extensive approvals. 
    \item Integration of our code with WebAssembly to achieve near-native performance. To this end, we also had to handle garbage collection. See more details on the chosen programming languages in Section~\ref{sec:prime-match-perf}.

\end{MyItemize}

Overall each step required extensive review and approval processes. An important aspect that was not listed above is the last process of extensive testing, verification of the code, and beta testing the product with the help of a limited number of clients.

\end{document}